\DeclareMathOperator*{\argmax}{arg\,max}
\newcommand{\simiid}{\stackrel{\textnormal{i.i.d.}}{\sim}}
\newtheorem{theorem}{Theorem}
\newtheorem{definition}{Definition}
\newtheorem{proposition}{Proposition}
\newtheorem{lemma}{Lemma}
\newtheorem{remark}{Remark}
\newcommand{\rr}{\mathbb R}
\newcommand{\p}{\mathbb P}
\newcommand{\di}[1]{\mathop{d#1}}
\DeclareMathOperator{\Var}{Var}
\DeclareMathOperator{\Unif}{Unif}
\DeclareMathOperator{\diag}{diag}
\newcommand{\e}{\mathbb E}
\newcommand{\T}{\top}
\title{Stratification and Optimal  Resampling for  Sequential  Monte Carlo }
\author[1]{%\normalsize
Yichao Li\thanks{These authors contributed equally and are listed in alphabetical order.}}
\newcommand\CoAuthorMark{\footnotemark[\arabic{footnote}]}
\author[2]{Wenshuo Wang\protect\CoAuthorMark}
\author[1]{Ke Deng}
\author[2]{Jun S Liu}
\date{\today}
\affil[1]{Center for Statistical Science, Tsinghua University, Beijing 100084, China}
\affil[2]{Department of Statistics, Harvard University, Cambridge, MA 02138}
\date{\today}
\begin{document}
\maketitle
\begin{abstract}
Sequential Monte Carlo, also known as particle filters, has been widely accepted as a powerful computational tool for making inference with dynamical systems.  A key step in sequential Monte Carlo is resampling, which plays the  role of steering the algorithm towards the future dynamics. Several strategies have been proposed and used in practice, including multinomial resampling, residual resampling, optimal resampling, stratified resampling, and optimal transport resampling. We show that, in  one dimensional cases, optimal transport resampling is equivalent to stratified resampling on the sorted particles, and they both minimize the resampling variance as well as the expected squared energy distance between the original and resampled empirical distributions; in  multidimensional cases, the variance of stratified resampling after sorting particles using Hilbert curve in $\rr^d$ is $O(m^{-(1+2/d)})$, an improved rate compared to the original $O(m^{-(1+1/d)})$, where $m$ is the number of resampled particles. This improved rate is the lowest for ordered stratified resampling schemes, as conjectured in \citet{gerber2019negative}. We also present an almost sure bound on the Wasserstein distance between the original and Hilbert-curve-resampled empirical distributions. In light of these results, we show that, for $d>1$, the mean square error of sequential quasi-Monte Carlo with $n$ particles can be $O(n^{-1-4/[d(d+4)]})$ by implementing Hilbert curve resampling and selecting a specific low-discrepancy set. To the best of our knowledge, this is the first known convergence rate lower than $o(n^{-1})$.
\end{abstract}

%\tableofcontents
{\small {\bf Keywords.}  Hilbert space-filling curve, particle filter, resampling, sequential Monte Carlo (SMC), stratification}

%\tableofcontents

\section{Introduction}
\label{sec:intro}

{Sequential Monte Carlo dates back to the study of self-avoiding random walks \citep{hammersley1954poor,rosenbluth1955monte}, which is of great importance in chemistry and biology \citep{siepmann1992configurational,grassberger1997pruned}.} Sequential Monte Carlo has been studied intensively in the past two decades and applied broadly to  high-dimensional statistical inference, signal processing, biology and many other fields  \citep{liu1998sequential, doucet2001sequential}. Through building up the sampling (trial) distribution sequentially, a set of weighted samples can be used to approximate the  high-dimensional target distribution, or at least a certain aspect of it. 
The state-space model is a particularly interesting dynamic system that have been treated with sequential Monte Carlo. The  model is governed by the hidden Markovian state equation and the noisy observation equation. The hidden state, for instance, can represent the underlying volatility in an economical time series \citep{taylor2008modelling,gatheral2011volatility}, or the location in a terrain navigation problem \citep{bergman1999terrain,bergman2001posterior,gustafsson2002particle}, or many others.
In such models, characterizing the distribution of the hidden state is known as the filtering problem; and sequential Monte Carlo is more commonly known as the particle filter in this context \citep{gordon1993novel}.% The current samples is viewed as a set of weighted particles. 

Roughly speaking, sequential Monte Carlo is built based on sequential importance sampling, which recursively simulates a future state and reweighs the sampling path, with additional resampling steps \citep{liu1998sequential}. 
In a vanilla sequential importance sampling procedure, such as sequential imputation \citep{kong1994sequential}, weight degeneracy  arises as an inevitable problem. Since the importance weights are updated recursively at each step, stochastically most of the total weights  will concentrate on a very few samples, leading to exponentially increasing variance \citep{kong1994sequential}. 
One effective strategy to avoid weight degeneracy is to resample from the current samples according to the corresponding weights. 
Resampling alone does not provide any information for estimation at the current step, but only introduces additional randomness. 
% Resampling is all about the ``future''.
The main intuition behind resampling is that particles with small weights are deemed less hopeful and thus discarded so as to save resources in order to explore  regions that may be more promising for the future \citep{liu1995blind}. 
Incidentally,  in the bootstrap filter of \cite{gordon1993novel}, every forward simulation step is followed immediately with a resampling step without investigating its advantages and disadvantages. \cite{liu1995blind} provided {an early} attempt at analyzing resampling (termed as rejuvenation in that article) {for statistical models}, providing some useful insights, but was short of a rigorous theory.

Each iteration of sequential Monte Carlo consists of two steps:  forward-sampling (or more intuitively, growth) and resampling. %Since resampling can also reduce the number of particles, which decreases the estimation accuracy of the algorithm at the current step, it is important to have a growth step that can produce ``diverse'' samples so as to explore the space more fully \citep{fearnhead2003line}. This idea will be discussed in more detail in Section~\ref{sec:multiple-des}.
In the resampling step, we rejuvenate all the weights where samples with higher weights are more likely to be retained. 
%An intuitive criterion for resampling is the distance between the original and resampled empirical distributions: smaller distance suggests better resampling scheme. 
In the growth step, we generate samples from the trial distribution and calculate the corresponding weight for each sample. Intuitively, the trial distribution should be as close to the target distribution as possible so as to explore  the relevant part of the sample space.

There are various means to resample from a collection of weighted particles. {Informally, one would like to minimize the ``resampling randomness'' over a certain class of valid resampling schemes. This goal is closely related to the balanced sampling design in survey sampling \citep[Chapter~8]{tille2006sampling}, which seeks to reduce the sampling variance using auxiliary variables.} The na\"ivest way to resample is called bootstrap resampling or multinomial resampling \citep{gordon1993novel}, where the new particles are sampled from independent and identically distributed (i.i.d.) multinomial distributions based on the original particle weights. Residual resampling \citep{liu1998sequential} and stratified resampling \citep{kitagawa1996monte} are two more popular resampling schemes in practice. {These methods have also been studied and used in scientific fields outside of statistics under different names of resampling (e.g., parent selection for genetic algorithms \citep[Chapter~4.2]{brindle1980genetic} and  stochastic reconfiguration in physics \citep[Chapter~10.3]{gubernatis2016quantum}).} \citet{douc2005comparison} compared the above resampling schemes and concluded that residual resampling and stratified resampling always have a smaller conditional variance than multinomial resampling does. For discrete state-spaces, the optimal resampling method \citep{fearnhead2003line} offers an interesting way of diversified sampling. % appears to be an optimal way of stratified resampling without over-saturating any stratum. 
Besides these traditional resampling schemes, \citet{reich2013nonparametric} proposed optimal transport resampling, an approach borrowing ideas from transportation theory. {However, there has been no theoretical guarantee for the optimal transport resampling (aside from its validity), to the best of our knowledge.} Recently, \citet{gerber2019negative} showed that stratified resampling after ordering the particles by the Hilbert space-filling curve has a relatively low conditional variance in some cases, which is also one of our interests in this article.

Sequential quasi-Monte Carlo introduced in \citet{gerber2015sequential} is a class of algorithms taking advantage of Hilbert curve resampling and quasi-Monte Carlo point sets. By constructing a low-discrepancy set on a product space, sequential quasi-Monte Carlo combines resampling and growth and numerically outperforms sequential Monte Carlo significantly. Theoretically, however, the convergence rate in terms of the mean squared error has only been shown to be $o(n^{-1})$ for certain low-discrepancy sets. It is naturally believed that the rate could be improved and should depend on the dimension $d$.

We focus on  theoretical properties of various resampling schemes and sequential quasi-Monte Carlo in this paper. %, and our main contributions are: (a) 
We show that, in one dimensional cases, optimal transport resampling is equivalent to stratified resampling on the sorted particles, which minimizes the resampling variance as well as the expected squared energy distance between the empirical distributions before and after resampling. %The equivalences require surprisingly different techniques to prove. 
In $d$ dimensions, a natural generalization of ordered stratified sampling in one dimension is Hilbert curve resampling \citep{gerber2019negative}, which is stratified resampling on particles sorted using the Hilbert space-filling curve. We prove that its resampling variance is of the order $O(m^{-(1+2/d)})$ when $d>1$, where $m$ is the number of resampled particles. This improves the original rate $O(m^{-(1+1/d)})$. We show that the order cannot be further improved by resorting to a different ordering rule, confirming a conjecture in \citet{gerber2019negative}. We also derive a bound on the Wasserstein distance between the empirical distributions before and after Hilbert curve resampling. Based on the theoretical results on resampling, we further design a low-discrepancy set for sequential quasi-Monte Carlo and prove that the mean squared error under this set is of the order $O(n^{-1-[4/d(d+4)]})$ for $d>1$. This improves the original rate $o(n^{-1})$. We believe this low-discrepancy set captures some key intuitions of quasi-Monte Carlo and the tools can be modified to analyze other low-discrepancy sets as well.
% \textcolor{red}{should be a graph of sv and tracking?}

The rest of the article is organized as follows. We provide some preliminaries, including relevant notations, definitions, and formulations, in Section~\ref{sec:pre}. In Section~\ref{sec:1d-optimal}, we prove the equivalence of several aforementioned resampling approaches in the one dimensional case. In Section~\ref{sec:multiple-dim}, we give upper bounds for the resampling error of Hilbert curve resampling in terms of both variance and Wasserstein distance. In Section~\ref{sec:multiple-des}, we focus on exploring  sequential quasi-Monte Carlo and derive a better convergence rate based on the theoretical results in Section~\ref{sec:multiple-dim}. We wrap up the paper in Section~\ref{sec:discussion} with some important open problems. All proofs are deferred to the supplement.

%Suppose the target distribution is $\pi(x_{1:t})$, then we can decompose it as 
%$$\pi(x^{(1:t)}) = \pi(x^{(1)})\pi(x^{(2)}|x^{(1)})\cdots \pi(x^{(t)}|x^{(1:t-1)})$$
%and the trial distribution is constructed as:
%$$g(x^{(1:t)}) = g_1(x^{(1)})g_2(x^{(2)}|x^{(1)})\cdots g_t(x^{(t)}|x^{(1:t-1)})$$
%Therefore the importance weight as:
%$$w(x^{(1:t)}) = \dfrac{\pi(x^{(1)})\pi(x^{(2)}|x^{(1)})\cdots \pi(x^{(t)}|x^{(1:t-1)})}{g_1(x^{(1)})g_2(x^{(2)}|x^{(1)})\cdots g_t(x^{(t)}|x^{(1:t-1)})}$$
%While sampling sequentially, the importance weight is usually updated recursively:
%$$w_s(x^{(1:s)}) = w_{s-1}(x^{(1:s-1)})\dfrac{\pi(x^{(s)}|x^{(1:s-1)})}{g_s(x^{(s)}|x^{(1:s-1)})}$$

%Informally, a resampling procedure takes in weighted particles $(X_j,W_j)_{j=1}^N$, $\sum_{j=1}^NW_j=1$, and produces a set of unweighted particles $(X_i^*,1/M)_{i=1}^M$, such that
%\begin{equation*}
%\sum_{i=1}^M\frac1M\delta_{X_i^*}\approx\sum_{j=1}^NW_j\delta_{X_j}.
%\end{equation*}
%There are many reasons for resampling in SMC. For example, resampling balances the particle weights, as otherwise most weights will become almost zero in a few iterations; see, for example, \citet{liu2008monte}. Resampling can also reduce the number of particles, which allows the mutation step to produce additional particles to probe the space \citep{fearnhead2003line}.

%There are many resampling methods proposed in the literature. In this paper, we focus on \emph{unbiased} resampling methods, which satisfies
%\begin{equation*}
%\e\left[\sum_{i=1}^M\frac1M\phi(X_i^*)\mid X,W\right]=\sum_{j=1}^NW_j\phi(X_j)
%\end{equation*}
%for any $\phi$.
\section{Preliminaries}
\label{sec:pre}
\subsection{Notations}

We use superscript to denote the temporal notation (i.e., the step or iteration) and subscript for the sample index; the temporal notations are omitted for the sake of clarity whenever there is no confusion. The target distribution is denoted as $\pi(x)$, while $g(x)$ denotes the trial distribution in the sense of importance sampling, which is constructed in a forward sampling (growth) fashion in sequential Monte Carlo. When written without a subscript, $X$ and $W$ mean $(X_1,X_2,\dots,X_n)$ and $(W_1,W_2,\dots,W_n)$ for an appropriate $n$, and the set of tuples $(X_j, W_j)_{j=1}^n$ refers to a set of weighted samples, where $W_j \ge 0, j = 1,2,\cdots, n$.  Unless stated otherwise, the $W_j$'s are normalized so that $\sum_{j=1}^n W_j = 1$. We use $\tilde{X}_1, \tilde{X}_2, \cdots, \tilde{X}_m$ to denote the equally weighed samples after resampling, so that in some sense, $\sum_{i=1}^mm^{-1}\delta_{\tilde{X}_i}\approx\sum_{j=1}^n W_j\delta_{X_j}$,
where $\delta_x$ denotes the Dirac measure at point $x$. If $X_j\in\mathcal X$ for $j=1,2,\dots,n$, we use $\mathcal X^n$ to denote the space in which $X$ lives. We use $Z\sim\text{Multinomial}(1,y,p)$ to mean that $\p(Z=y_i)=p_i$, where $p$ is a probability vector. We write $m_d(\cdot)$ for the Lebesgue measure in $d$ dimensions. The standard $L_2$ norm is denoted as $\|\cdot\|$. For a vector $a$, $\diag(a)$ represents the diagonal matrix with the $i$th diagonal element being $a_i$. For a real number $u$, $\lfloor u \rfloor$ denotes the greatest integer less than or equal to $u$. The symbol $\simiid$ denotes sampling independent and identically distributed random variables.

\subsection{Sequential Monte Carlo}
%We briefly introduce the SMC procedure for unfamiliar readers. 
To set up future analyses, we here describe a generic sequential Monte Carlo procedure. Let the target distribution $\pi(x)$ be supported in a $T$-dimensional space, which can be viewed as a joint distribution of a sequence of variables, say $\pi(x^{(1:T)})$. We can sample sequentially from a sequence of distributions $\{\pi_t(x^{(1:t)})\}_{t=1}^T$, where $\pi_T=\pi$. A generic sequential Monte Carlo algorithm is outlined in Algorithm~\ref{algorithm:SISR}.
\vspace*{-.3cm}
% We can decompose $\pi(x)$ as 
% $$\pi(x^{(1:T)}) = \pi_1(x^{(1)})\frac{\pi_2(x^{(1:2)})}{\pi_1(x^{(1)})}\cdots\frac{\pi_T(x^{(1:T)})}{\pi_{T-1}(x^{(1:T-1)})}.$$
% Suppose the trial sampling distribution is constructed as
% $$g(x^{(1:T)}) = g_1(x^{(1)})g_2(x^{(2)}\mid x^{(1)})\cdots g_T(x^{(T)}\mid x^{(1:T-1)}),$$
% which may be selected as a Markov sequence in some problems for computational convenience. Given the target and trial distributions, the importance weight is
% $$w(x^{(1:T)}) = \dfrac{\pi_T(x^{(1:T)})}{g_1(x^{(1)})g_2(x^{(2)}\mid x^{(1)})\cdots g_T(x^{(T)}\mid x^{(1:T-1)})}.$$
% While sampling sequentially, the importance weight can be updated recursively:
% $$w^{(t)}(x^{(1:t)}) = w^{(t-1)}(x^{(1:t-1)})\dfrac{\pi_t(x^{(1:t)})}{\pi_{t-1}(x^{(1:t-1)})g_t(x^{(t)}\mid x^{(1:t-1)})}.$$
% Combined with resampling mentioned in Section~\ref{sec:intro},

\begin{algorithm}
\caption{Sequential importance sampling with resampling.}
\label{algorithm:SISR}
\textbf{Input}: A sequence of target distributions $\{\pi_t(x^{(1:t)})\}_{t=1}^T$\\
\textbf{Output}: weighted particles $(X_{i}^{(1:T)},W_{i}^{(T)})_{1\le i\le n}$\\
    At time $t = 1$,\\
    \Indp Draw $X_{1}^{(1)}, \cdots, X_{n}^{(1)}$ from $g_1(X^{(1)})$.\\
    Calculate and normalize the importance weight: $W_{j}^{(1)} \propto {\pi_1(X_{j}^{(1)})}/{g_1(X_{j}^{(1)})}$.\\
    Resample $\tilde{X}_1^{(1)}, \tilde{X}_2^{(1)}, \cdots, \tilde{X}_n^{(1)}$ from $X_{1}^{(1)}, \cdots, X_{n}^{(1)}$ with probabilities $W^{(1)}_1, \cdots, W^{(1)}_n$, and reweight the samples $\tilde{X}_1^{(1)}, \tilde{X}_2^{(1)}, \cdots, \tilde{X}_n^{(1)}$ equally with $1/n$.\\
    Let $X_j^{(1)}=\tilde X_j^{(1)}$ for $j=1,2,\dots,n$.\\
	\Indm \SetAlgoLined\DontPrintSemicolon
	{
		\For{$t=2$ \KwTo $T$} {
        Draw $X_{j}^{(t)}$ from $g_t(X^{(t)}\mid X_j^{(1:t-1)})$ for $j=1,2,\dots,n$ conditionally independently.\\
        Calculate and normalize the importance weight:\\
        $$W_{j}^{(t)} \propto \frac{\pi_t\left(X_{j}^{(1:t)}\right)}{\pi_{t-1}\left(X_j^{(1:t-1)}\right)g_t\left(X_{j}^{(t)}\mid {X_j^{(1:t-1)}}\right)}$$\\
        \If{$t<T$} {Resample $\tilde{X}_1^{(1:t)}, \tilde{X}_2^{(1:t)}, \cdots, \tilde{X}_n^{(1:t)}$ from $X_{1}^{(1:t)}, \cdots, X_{n}^{(1:t)}$ with probabilities $W_1^{(t)}, \cdots, W_n^{(t)}$, and reweight the samples $\tilde{X}_1^{(1:t)}, \tilde{X}_2^{(1:t)}, \cdots, \tilde{X}_n^{(1:t)}$ equally with $1/n$.\\
        Let $X_j^{(1:t)}=\tilde X_{j}^{(1:t)}$.
        }
	}
	}
	Return $(X_{i}^{(1:T)},W_{i}^{(T)})_{1\le i\le n}$
\end{algorithm}

In the special case of a state-space model, we have
\begin{equation}
\begin{aligned}
Y^{(t)}\mid \left(X^{(1:t)}=x^{(1:t)},Y^{(1:t-1)}\right) &\sim p_y(\cdot\mid x^{(t)}),\\
X^{(t)}\mid \left(X^{(1:t-1)}=x^{(1:t-1)},Y^{(1:t-1)}\right)&\sim p_x(\cdot\mid x^{(t-1)}),t =2,\cdots, T,
\label{equation:state-space}
\end{aligned}
\end{equation}
where $p_x$ and $p_y$ represent distributions as well as density functions, $X^{(1)}, \cdots, X^{(T)}$ are unobserved hidden states, and  $Y^{(1)}, \cdots, Y^{(T)}$ are the observed sequence of variables. The filtering problem focuses on the target distribution
$$\pi_T(x^{(1:T)}) \propto \prod_{t=1}^{T} \left[p_x(x^{(t)}\mid x^{(t-1)}) p_y(y^{(t)}\mid x^{(t)})\right].$$
While implementing Algorithm~\ref{algorithm:SISR} in such a state-space model, the trial distribution at each step can be naturally (or na\"ively) chosen as $g_t(x^{(t)}\mid x^{(t-1)}) = p_x(x^{(t)}\mid x^{(t-1)})$, and thus the corresponding importance weight can be updated as
$w^{(t)} \propto w^{(t-1)}p_y (y^{(t)} \mid x^{(t)})$. 
% $w^{(t)}_{j} \propto p_y(Y^{(t)}\mid X^{(t)}_j)$.

\subsection{Resampling matrix}
\label{sec:resampling-matrix}
Suppose we have weighted particles $(W_j,X_j)_{j=1}^n$ with weights summing to one. Without loss of generality, we assume that the $X_j$'s are distinct since we can always merge particles with identical values and add up their weights. Consider the family of resampling methods indexed by a matrix $P_{m\times n}$, where the new unweighted particles $(\tilde X_i)_{i=1}^m$ are sampled independently from
\begin{equation*}
\tilde X_i\mid X,W\sim\text{Multinomial}(1,X,(p_{i1},p_{i2},\dots,p_{in})),
\end{equation*}
and $P$ has non-negative entries with $\sum_{i=1}^mp_{ij}=mW_j$ and $\sum_{j=1}^np_{ij}=1$. Note that permutating $P$'s rows does not change the resampling scheme. It can be easily verified that such a resampling strategy is unbiased, which means that for any $\phi$ we have
\begin{equation*}
\e\left[\frac1m\sum_{i=1}^m\phi(\tilde{X}_i)\mid X,W \right]=\frac1m\sum_{i=1}^m\sum_{j=1}^mp_{ij}\phi(X_j)=\sum_{j=1}^nW_j\phi(X_j).
\end{equation*}
We use $\mathcal P_{m,W}$ to denote the set of all matrices of this form and the set of all corresponding resampling methods, with slight abuse of notation. We call this collection of resampling methods {matrix resampling methods}, which also appears in \cite{reich2013nonparametric} and \cite{webber2019unifying}. The use of resampling matrices appeared at least as early as in \citet{hu2008basic} and also in many other works (e.g., \citet{reich2013nonparametric,whiteley2016role,webber2019unifying}). Most available resampling methods, as listed below, fit into this framework.

In multinomial resampling, each $\tilde{X}_i$ is an independent and identically distributed sample from the multinomial distribution $\text{Multinomial}(1,X,W)$. This corresponds to $p_{ij}=W_j$ for $i=1,\dots,m$, $j=1,\dots,n$, as shown in Figure~\ref{figure:Ps}(a).
In stratified resampling, we let $U_i \sim \text{Unif}\left((i-1)/m, i/m\right]$, independently for $i = 1, \dots, m$, and let
    $\tilde{X}_i = X_j$ if $U_i \in \left(\sum_{k=1}^{j-1} W_k, \sum_{k=1}^{j} W_k\right].$
    See Figure~\ref{figure:strat-res-ill} for an illustration.
    Stratified resampling corresponds to a staircase matrix; see Figure~\ref{figure:Ps}(b) for an example and Definition~\ref{def:stair-mat} for a formal definition.
In residual resampling, we first make $\lfloor mW_j \rfloor$ copies of  $X_j$ for all $j = 1, \dots, n$; then, apply multinomial or stratified resampling (corresponding to  Figure~\ref{figure:Ps}(c) and (d), respectively) for drawing the rest $m - \sum_{j=1}^n \lfloor mW_j \rfloor$ particles with $\tilde{W}_j \propto mW_j - \lfloor mW_j \rfloor$.
\begin{figure}
        \centering
        \includegraphics[width=.9\linewidth]{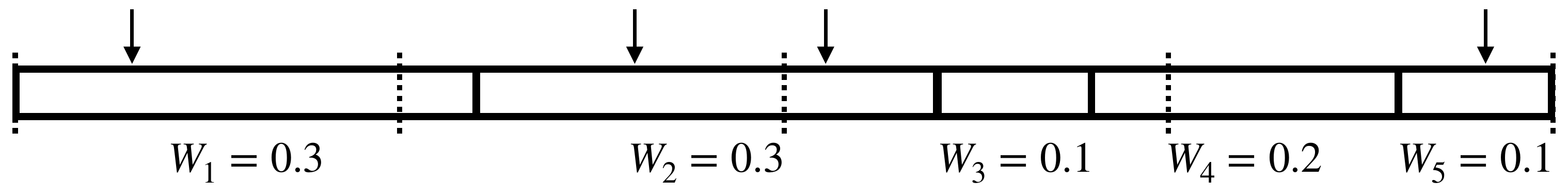}
        \caption{Illustration of stratified resampling. First line up the weights, then divide the interval into $m$ equal parts, uniformly choose one point from each subinterval and record in which weight's region it lands. In the presented example where $m=4$, $n=5$, particles $1$ and $5$ are resampled once, particle 2 is resampled twice and particles $3$ and $4$ are discarded.}
        \label{figure:strat-res-ill}
    \end{figure}

\begin{figure}
    \centering
    \begin{minipage}[!htbp]{.4\linewidth}
    $$\begingroup % keep the change local
    \setlength\arraycolsep{4pt}
    \begin{pmatrix}
    0.3& 0.3 & 0.1& 0.2 &0.1\\
    0.3& 0.3 & 0.1& 0.2 &0.1\\
    0.3& 0.3 & 0.1& 0.2 &0.1\\
    0.3& 0.3 & 0.1& 0.2 &0.1
    \end{pmatrix}
    \endgroup$$
    \center{(a) Multinomial Resampling}
    \end{minipage}
    \begin{minipage}[!htbp]{.4\linewidth}
    $$\begingroup % keep the change local
    \setlength\arraycolsep{4pt}
    \begin{pmatrix}
    1&  &  &  &\\
    0.2& 0.8 & & & \\
    & 0.4 & 0.4& 0.2 &\\
    & & & 0.6& 0.4
    \end{pmatrix}\endgroup
    $$
    \center{(b) Stratified Resampling}
    \end{minipage}
    
    \vspace{0.5cm}
    \begin{minipage}{.4\linewidth}
    $$\begingroup % keep the change local
    \setlength\arraycolsep{4pt}
    \begin{pmatrix}
    1&  &  & & \\
     & 1 & &  &\\
    0.1& 0.1 & 0.2&  0.4& 0.2\\
    0.1& 0.1 & 0.2& 0.4&  0.2\\
    \end{pmatrix}
    \endgroup$$
    \center{(c) Multinomial Residual Resampling}
    \end{minipage}
    \begin{minipage}{.4\linewidth}
    $$\begingroup % keep the change local
    \setlength\arraycolsep{4pt}
    \begin{pmatrix}
    1&  &  & & \\
     & 1 & &  &\\
    0.2& 0.2 & 0.4&  0.2& \\
    &  &  &  0.6&  0.4\\
    \end{pmatrix}
    \endgroup$$
    \center{(d) Stratified Residual Resampling}
    \end{minipage}
    \caption{Examples of resampling matrices with $m=4$ and $n = 5$, and particle weights $(W_1, W_2, W_3, W_4, W_5) = (0.3, 0.3 ,0.1, 0.2, 0.1)$.}
    \label{figure:Ps}
\end{figure}

\subsection{Criteria for choosing resampling schemes}
\label{sec:criteria}
To choose from the set of valid resampling procedures, we need a measure of goodness of a resampling procedure. Let $\p=\sum_{j=1}^nW_j\delta_{X_j}$ and $\tilde{\p}=\sum_{i=1}^mm^{-1}\delta_{\tilde{X}_i}$. It is natural to favor a stable process, where $\tilde{\p}$ is close to $\p$. Explicitly, we want to minimize $\e[\ell(\p,\tilde{\p})\mid X,W]$ for a loss function $\ell$. For example, we can pick $\ell(\p,\tilde{\p})$ to be $(\e_\p[\phi(X)]-\e_{\tilde{\p}}[\phi(X)])^2$ and use the conditional variance $\Var[m^{-1}\sum_{i=1}^m\phi(\tilde{X}_i)\mid X,W]$ as a measure of goodness. We can also choose $\ell$ to be the squared energy distance, which has the advantage of explicit expression and the property that the energy distance is zero if and only if two distributions are the same. The energy distance \citep{rizzo2016energy} between distributions $\p_1$ and $\p_2$ is defined as the square root of
\begin{equation*}
D^2(\p_1,\p_2)=2\e[\|Y_1-Y_2\|]-\e[\|Y_1-Y_1'\|]-\e[\|Y_2-Y_2'\|],
\end{equation*}
where $Y_1$ and $Y_1'$ follow $\p_1$, $Y_2$ and $Y_2'$ follow $\p_2$, and the four random variables are mutually independent. Another example is the Wasserstein distance, defined between distributions $\p_1$ and $\p_2$ as
\begin{equation*}
W_p(\p_1,\p_2)=\left(\inf_{\gamma\in\Gamma(\p_1,\p_2)}\e_{(Y_1,Y_2)\sim\gamma}[\|Y_1-Y_2\|^p]\right)^{1/p}, \ \  p\ge1,
\end{equation*}
where $\Gamma(\p_1,\p_2)$ denotes all probability measures that have $\p_1$ and $\p_2$ as their marginal distributions.

In Section~\ref{sec:1d-optimal}, we prove that minimizing the conditional variance is equivalent to minimizing the expected squared energy distance  in  one dimensional cases, both of which can be achieved by ordered stratified resampling (i.e., stratified resampling on the sorted particles). In Section~\ref{sec:multiple-dim}, we give upper bounds for conditional variance and expected Wasserstein distance for ordered stratified resampling, where the particles are sorted according to the Hilbert curve in multiple dimensions.

\section{Optimal resampling in one dimension}
\label{sec:1d-optimal}
A good resampling scheme should ideally incorporate the information of the state values $X_j$'s, since the loss function usually depends on them. %Of the schemes introduced in Section~\ref{sec:pre}, only  stratified resampling (and residual stratified resampling) can exploit this information because other methods are all invariant under permutation of the original particles.
In this section, we show that, by incorporating the $X_j$'s value information, the stratified resampling method minimizes several objectives proposed in the literature. Note that in this section, we consider the case where the particles take values in a one dimensional space. For example, resampling in a state-space model where the hidden state at each step is one-dimensional. In this case, we can focus on the last dimension of each particle, since the other components will not affect the future.

% \subsection{Stratified resampling matrix}
To study the stratified resampling matrix, we first define the staircase matrix{, which is the same as a stratified resampling matrix as we show in Proposition~\ref{prop: staircase}}. This will help with understanding why ordering the states before applying stratified resampling can lower the resampling variance.
\begin{definition}[Staircase matrix]
\label{def:stair-mat}
We call a matrix $P$ \emph{staircase matrix} if the following conditions are satisfied:
\begin{enumerate}
    \item[(1)] In each row and column of $P$, non-zero entries are consecutive. In other words, if $p_{ij_1} \neq 0$ and $p_{ij_2} \neq 0$ for $j_1 < j_2$, then for all $j_1 < j < j_2$, $p_{ij}\neq 0$, and similarly for the columns.
    \item[(2)] %For $i_1 < i_2$, if $p_{i_1 j} \neq 0$, then for $l < j$, $p_{i_2 l} =0$. That is, two consecutive rows have no more than one common column with nonzero entries.
    For any quadruplet $(i,j,k,l)$ such that $i < k, j<l$, at least one of $p_{il}$ and $p_{kj}$ is $0$.
%     \begin{align*}
%         \begingroup
%     \setlength\arraycolsep{4pt}
% \begin{matrix}
%  & \vdots &  & \vdots&  \\
% \cdots & p_{ij} & \cdots & p_{il}& \cdots \\
%  & \vdots&   &\vdots&   &\\
% \cdots & p_{kj} & \cdots & p_{kl} & \cdots \\
%  & \vdots & & \vdots & 
% \end{matrix}
% \endgroup
% \end{align*}
\end{enumerate}
\end{definition}

{ 
\begin{proposition}\label{prop: staircase}
Any stratified resampling scheme corresponds to a unique staircase matrix up to row permutations.
\end{proposition}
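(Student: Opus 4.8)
The plan is to establish both directions of the correspondence; the statement as written needs only the forward map together with the uniqueness clause, but the converse is what makes precise the earlier remark that a staircase matrix is ``the same as'' a stratified resampling matrix. Fix normalized weights $W=(W_1,\dots,W_n)$ and the resampling size $m$, and write $S_0=0$, $S_j=\sum_{k=1}^{j}W_k$, $I_i=((i-1)/m,\,i/m]$ and $J_j=(S_{j-1},\,S_j]$. By the definition of stratified resampling, the associated matrix $P$ has entries $p_{ij}=\p(\tilde{X}_i=X_j\mid X,W)=m\,m_1(I_i\cap J_j)$. That $P\in\mathcal P_{m,W}$ is immediate, since the $i$th row sums to $m\,m_1(I_i)=1$ and the $j$th column sums to $m\,m_1(J_j)=mW_j$. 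So the content of the forward direction is to verify that this $P$ satisfies conditions (1) and (2) of Definition~\ref{def:stair-mat}.

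Both conditions reduce to elementary interval bookkeeping. For (1), $p_{ij}\neq 0$ iff $I_i\cap J_j\neq\emptyset$ iff $S_{j-1}<i/m$ and $(i-1)/m<S_j$; since $j\mapsto S_j$ is nondecreasing, for fixed $i$ the set of such $j$ is a block of consecutive indices, and symmetrically for fixed $j$ the set of such $i$ is consecutive. For (2), suppose $p_{il}\neq 0$ and $p_{kj}\neq 0$ with $i<k$ and $j<l$. Then $S_{l-1}<i/m$ and $(k-1)/m<S_j$, so using $j\le l-1$, $i\le k-1$ and monotonicity of $S$ we get $S_j\le S_{l-1}<i/m\le (k-1)/m<S_j$, a contradiction; hence at least one of $p_{il},p_{kj}$ is $0$. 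Thus $P$ is a staircase matrix. The uniqueness clause is then a short remark: relabelling the $m$ strata only permutes the rows of $P$, and, as noted in Section~\ref{sec:resampling-matrix}, a row permutation does not alter the resampling scheme; conversely, condition (2) forces the band of nonzero entries to shift weakly rightward as one descends the rows, so any staircase matrix representing this scheme has its rows in the order above up to transposing rows with identical support (which happens exactly when some cell $J_j$ contains two or more full strata). This pins the matrix down up to row permutations.

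For the converse, and hence the genuine bijection, given a nonnegative staircase matrix $P$ whose rows all sum to one I would set $W_j=m^{-1}\sum_i p_{ij}$ (these sum to one, and $P\in\mathcal P_{m,W}$ by construction), permute the rows into the canonical monotone form, let $t_i$ be the partial sums of the entries read off row by row, and check that the staircase structure forces $p_{ij}=m\,m_1((t_{i-1},t_i]\cap(S_{j-1},S_j])$, so that $P$ is exactly the stratified resampling matrix for the weights $W$. I expect the forward direction and the uniqueness remark to be routine; the main obstacle is the bookkeeping in this converse, in particular the degenerate configurations --- zero weights $W_j=0$ giving an empty cell and an all-zero column that may be inserted anywhere, weights with $mW_j\in\mathbb Z$ so that cell boundaries coincide with stratum boundaries and some overlaps collapse, and the attendant non-uniqueness of the row order, which is precisely what ``up to row permutations'' absorbs. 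I would handle these by first proving the clean statement assuming all $W_j>0$ with no coinciding boundaries and then reducing the general case to it via the merging convention already adopted for particles with identical values, or by direct inspection of the finitely many boundary cases.
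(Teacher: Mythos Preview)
The paper does not give a standalone proof of Proposition~\ref{prop: staircase} in the appendix. The forward direction---that the matrix of stratified resampling has the staircase shape---is left as evident from the example in Figure~\ref{figure:Ps}(b), and the uniqueness issue is handled afterwards by Lemma~\ref{lemma:unique_of_sm}, which shows by an entry-by-entry induction that two staircase matrices with the same (strictly positive) row and column sums must coincide.

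Your approach is correct and more explicit. Writing $p_{ij}=m\cdot m_1(I_i\cap J_j)$ and verifying conditions (1) and (2) of Definition~\ref{def:stair-mat} by interval arithmetic gives a clean, self-contained argument for the forward direction; the chain $S_j\le S_{l-1}<i/m\le (k-1)/m<S_j$ for (2) is tidy. For uniqueness, you observe that the scheme already determines the multiset of rows, so uniqueness up to row permutation is essentially automatic---this is a lighter argument than Lemma~\ref{lemma:unique_of_sm}, which proves the stronger fact that the marginals alone pin down a staircase matrix. The paper needs that stronger fact later (for Definition~\ref{def:ordered-sr} and Theorem~\ref{theorem: 1d-resampling}), which is why it is isolated as a lemma rather than folded into the proposition. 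Your converse sketch, reconstructing a stratified scheme from an arbitrary staircase matrix in $\mathcal P_{m,W}$, goes beyond what Proposition~\ref{prop: staircase} literally asserts but substantiates the informal ``is the same as'' remark preceding it; the degenerate cases you flag (zero weights, coinciding boundaries) are real but mild, and since Lemma~\ref{lemma:unique_of_sm} already assumes strictly positive marginals you can simply adopt that convention rather than chase boundary cases.
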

}
%It is not difficult to see that the matrix of stratified resampling is a staircase matrix up to row permutation. 
A staircase matrix has at most $n+m-1$ non-negative entries and has a clear spatial structure.
% \begin{equation*}
%     \begingroup
%     \setlength\arraycolsep{4pt}
% \left(\begin{array}{ccccc}
%   *  & * & * & & \\
%      & & * & * &\\
%       & & & * & \\
% & & & * & * \\
% \end{array}\right)
% \endgroup
% \end{equation*}
The non-negative entries form a path (allowing diagonal moves) from the top left entry to the bottom right entry.

\begin{lemma}\label{lemma:unique_of_sm}
For $m,n>2$, there can only be one unique $m$ by $n$ staircase matrix that has non-negative entries and satisfies:
$$\sum_{j=1}^n p_{ij} = r_i>0 \text{ and } \sum_{i=1}^m p_{ij} = c_j>0$$
\end{lemma}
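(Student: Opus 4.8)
The plan is to prove uniqueness by strong induction on $m+n$, peeling off the top-left corner of the matrix. Before the induction I would first nail down the ``spatial structure'' already advertised in the text: for a staircase matrix $P$ with all row sums $r_i>0$ and column sums $c_j>0$, condition~(1) says the support of each row $i$ is a contiguous block of columns $[a_i,b_i]$, and condition~(2) forces $a_1\le a_2\le\cdots\le a_m$ and $b_1\le b_2\le\cdots\le b_m$ (otherwise a ``northeast'' pair of nonzero entries would contradict it). Positivity of the column sums then forces these blocks to overlap or just touch, i.e.\ $a_{i+1}\le b_i+1$: if $b_i<j<a_{i+1}$ for some column $j$, every entry of column $j$ would vanish by monotonicity of the blocks, contradicting $c_j>0$. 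Consequently $a_1=1$, $b_m=n$, the support is a connected monotone lattice path from $(1,1)$ to $(m,n)$, and in particular $p_{11}>0$.

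The crucial step is to show $p_{11}=\min(r_1,c_1)$, which already pins down the corner. By the structure above, either the support of row~$1$ is exactly $\{1\}$ --- so $p_{11}=r_1$ --- or it also contains column~$2$, i.e.\ $p_{12}>0$; symmetrically, either $p_{11}=c_1$ or $p_{21}>0$. But condition~(2) applied to the quadruplet $(i,j,k,l)=(1,1,2,2)$ says $p_{12}$ and $p_{21}$ cannot both be nonzero. Hence $p_{11}=r_1$ or $p_{11}=c_1$; since $p_{11}\le r_1$ and $p_{11}\le c_1$ trivially, we conclude $p_{11}=\min(r_1,c_1)$.

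Now I would set up the recursion. Transposing a staircase matrix gives a staircase matrix, so assume without loss of generality $r_1\le c_1$; then $p_{11}=r_1$ and, since the row sums are nonnegative and total $r_1$, the whole first row is $(r_1,0,\dots,0)$. If $r_1<c_1$, delete row~$1$: the remaining $(m-1)\times n$ array still satisfies (1) and (2) (both are inherited when deleting the top row --- the only subtlety, preservation of ``contiguous support in each column,'' holds because we remove an extreme row), it is a staircase matrix with positive row sums $r_2,\dots,r_m$ and positive column sums $c_1-r_1,c_2,\dots,c_n$, so by the induction hypothesis it is unique, hence so is $P$. If $r_1=c_1$, then after deleting row~$1$ the first column is identically zero, so delete it as well; the remaining $(m-1)\times(n-1)$ staircase matrix has positive marginals $r_2,\dots,r_m$ and $c_2,\dots,c_n$ and is unique by induction. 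The base cases $m=1$ or $n=1$ are immediate, since then $p_{ij}$ is forced to equal the corresponding column (resp.\ row) sum; the remaining small cases follow from the same corner argument.

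I expect the only real obstacle to be the structural step: extracting the full ``monotone path'' picture from the fairly terse conditions (1)--(2), especially the overlap property $a_{i+1}\le b_i+1$, which genuinely uses positivity of \emph{every} column sum and is exactly what keeps the support connected so that the peeling recursion preserves positivity of the reduced marginals. An alternative, essentially equivalent, packaging is to observe that the support --- a connected monotone path meeting every row and every column --- is a spanning tree of the complete bipartite graph $K_{m,n}$, and a transportation problem supported on a tree has a unique solution for consistent marginals; but one still has to argue that the given marginals force \emph{which} tree appears, and that is again the $p_{11}=\min(r_1,c_1)$ computation applied repeatedly.
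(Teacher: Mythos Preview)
Your proof is correct. The paper's argument reaches the same destination by a slightly different route: rather than computing $p_{11}$ outright, it supposes two staircase matrices $P,Q$ share the marginals, assumes $p_{11}<q_{11}$, and derives a contradiction ($p_{11}<q_{11}$ forces $p_{12}>0$, hence by condition~(2) $p_{i1}=0$ for $i>1$, so $p_{11}=c_1\ge q_{11}$); it then repeats the contradiction entry by entry along the first row, and finally inducts row by row. Your version is the ``northwest corner rule'' from transportation theory: you prove directly that $p_{11}=\min(r_1,c_1)$, peel off the determined row (and possibly column), and induct on $m+n$. The underlying dichotomy is identical --- condition~(2) forbids $p_{12}>0$ and $p_{21}>0$ simultaneously --- but your packaging is more constructive (it actually produces the entries, not just uniqueness) and works for all $m,n\ge 1$, whereas the paper's bare contradiction needs less structural setup (you develop the full monotone-block picture $a_1\le\cdots\le a_m$, $b_1\le\cdots\le b_m$, $a_{i+1}\le b_i+1$, which is nice to have but strictly more than the lemma requires).
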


By Lemma~\ref{lemma:unique_of_sm}, the staircase resampling matrix is unique given the weights for each particle. Then we can define a stratified resampling matrix.

\begin{definition}[Stratified resampling matrix]\label{def:ordered-sr}
We call a matrix $P^\text{SR}_{m,W}\in\mathcal P_{m,W}$ the {stratified resampling matrix} of a set of weighted particles $(X_j, W_j)_{j=1}^n$ if % the following conditions are satisfied:
$P^\text{SR}_{m,W}$ can be converted to a staircase matrix after some row permutation.
% \begin{enumerate}
%     \item[(1)] $P^\text{SR}_{m,W}\in\mathcal P_{m,W}$.
%     \item[(2)] $P^\text{SR}_{m,W}$ can be converted to a staircase matrix after some row permutation.
% \end{enumerate}
\end{definition}

\begin{theorem}
\label{theorem: 1d-resampling}
For particles $(X_j,W_j)_{j=1}^n$ with $X_1 < X_2 < \cdots X_n$, resampling defined by $P^\text{SR}_{m,W}$ minimizes the following objectives:
\begin{enumerate}
    \item[(i)] The conditional variance $\Var_P\left[{\frac1m}\sum_{i=1}^m \tilde{X}_i\mid X,W\right]$.
    \item[(ii)] The expected squared energy distance $E_P\left[D^2\left(\sum_{i=1}^m m^{-1}\delta_{\tilde X_i}, \sum_{j=1}^n W_j \delta_{X_j}\right)\right]$.
    \item[(iii)] The earth mover distance $\sum_{i=1}^m \sum_{j=1}^n p_{ij} \ell(Y_i - X_j)$ where $ell$ is a strictly convex function, and $Y_1 < \cdots < Y_m$ is any given sequence of ascending numbers.
\end{enumerate}
\end{theorem}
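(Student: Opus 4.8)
The plan is to prove (iii) first, to deduce (i) from it via the variational formula for variance, and to treat (ii) by a direct computation followed by a pointwise optimization. Throughout I assume without loss of generality that every $W_j>0$ (columns with $W_j=0$ may be deleted), so that by Lemma~\ref{lemma:unique_of_sm} there is a unique staircase matrix with row sums $1$ and column sums $mW_j$, and $P^{\mathrm{SR}}_{m,W}$ is this matrix taken in staircase row order (the order that stratified resampling induces on the sorted particles); all three objectives are invariant under row permutations of $P$ except (iii), for which this is the relevant ordering. For (iii), minimizing $\sum_{i,j}p_{ij}\ell(Y_i-X_j)$ over $P\in\mathcal P_{m,W}$ is a transportation problem with cost $c_{ij}=\ell(Y_i-X_j)$. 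For $i<k$ and $j<l$ the unordered pairs $\{Y_i-X_j,\,Y_k-X_l\}$ and $\{Y_i-X_l,\,Y_k-X_j\}$ have equal sums but the latter strictly contains the former, so strict convexity of $\ell$ yields the strict Monge inequality $c_{ij}+c_{kl}<c_{il}+c_{kj}$. I would then run the standard de-crossing exchange: if a minimizer $P$ had $p_{il}>0$ and $p_{kj}>0$ for some $i<k$, $j<l$, then moving $\varepsilon=\min(p_{il},p_{kj})$ of mass via $p_{il}\mapsto p_{il}-\varepsilon$, $p_{kj}\mapsto p_{kj}-\varepsilon$, $p_{ij}\mapsto p_{ij}+\varepsilon$, $p_{kl}\mapsto p_{kl}+\varepsilon$ keeps $P\in\mathcal P_{m,W}$ and strictly decreases the cost, a contradiction. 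Hence every minimizer satisfies condition~(2) of Definition~\ref{def:stair-mat}, and because all marginals are positive, condition~(2) forces condition~(1) too: a zero flanked by two nonzeros in some row (resp.\ column) would, together with a nonzero in the middle index's column (resp.\ row), create a crossing. So every minimizer is a staircase matrix, hence equals $P^{\mathrm{SR}}_{m,W}$ by Lemma~\ref{lemma:unique_of_sm}, and uniquely so by strictness.

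For (i), the identity $\Var_q[X]=\min_{y\in\rr}\E_q[(X-y)^2]$ applied row by row gives
\[
m^2\,\Var_P\!\Bigl[\tfrac1m\textstyle\sum_{i=1}^m\tilde X_i \ \Big|\ X,W\Bigr]=\sum_{i=1}^m\Var_{p_{i\cdot}}[X]=\min_{y\in\rr^m}\sum_{i,j}p_{ij}(y_i-X_j)^2 ,
\]
so minimizing the conditional variance over $P$ amounts to minimizing $\sum_{i,j}p_{ij}(y_i-X_j)^2$ jointly over $(P,y)$. Given any feasible pair I would relabel the rows so that $y$ is nondecreasing, break ties by the perturbation $y_i\mapsto y_i+\delta i$, apply (iii) with the strictly convex $\ell(z)=z^2$, and let $\delta\downarrow 0$ to obtain $\sum_{i,j}p_{ij}(y_i-X_j)^2\ge\sum_{i,j}(P^{\mathrm{SR}}_{m,W})_{ij}(y_i-X_j)^2\ge\sum_i\Var_{(P^{\mathrm{SR}}_{m,W})_{i\cdot}}[X]$. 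Since this last bound is attained at $P=P^{\mathrm{SR}}_{m,W}$ with $y_i=\E_{(P^{\mathrm{SR}}_{m,W})_{i\cdot}}[X]$, the matrix $P^{\mathrm{SR}}_{m,W}$ minimizes (i) (its row means are automatically sorted, but this is not needed).

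For (ii), write $A=\sum_{j,j'}W_jW_{j'}\|X_j-X_{j'}\|$. Expanding $D^2(\tilde{\p},\p)$ and taking $\E_P$, the relation $\tfrac1m\sum_i p_{ij}=W_j$ makes the cross term and the $\E\|Y_2-Y_2'\|$ term $P$-independent (each a multiple of $A$), and a short computation reduces the objective to $\E_P[D^2(\sum_i m^{-1}\delta_{\tilde X_i},\sum_j W_j\delta_{X_j})]=m^{-2}\sum_i\sum_{j,j'}p_{ij}p_{ij'}\|X_j-X_{j'}\|$. I would then invoke the identity $\sum_{j,j'}q_jq_{j'}\|X_j-X_{j'}\|=2\int_\rr Q(t)(1-Q(t))\,dt$ with $Q(t)=\sum_{j:X_j\le t}q_j$; summing over rows and using $\sum_i Q_i(t)=m\sum_{j:X_j\le t}W_j=mF(t)$ (independent of $P$), minimizing the objective becomes maximizing $\int_\rr\sum_i Q_i(t)^2\,dt$ subject only to $Q_i(t)\in[0,1]$ and $\sum_i Q_i(t)=mF(t)$. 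Pointwise in $t$ the sum of squares is maximized by pushing the coordinates to $\{0,1\}$ as far as the fixed sum allows, with value $\lfloor mF(t)\rfloor+\{mF(t)\}^2$; stratified resampling realizes this simultaneously for all $t$, since its $i$th row CDF is $\min\{1,\max(0,mF(t)-i+1)\}$, which lies in $\{0,1\}$ for all but at most one index $i$. Hence $P^{\mathrm{SR}}_{m,W}$ attains the pointwise maximum everywhere, so it minimizes the expected squared energy distance.

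I expect the main obstacle to be part (iii): the strict Monge inequality eliminates crossings immediately, but it takes some care to argue that ``no crossings'' together with strictly positive marginals actually yields the full staircase structure of Definition~\ref{def:stair-mat}, which is what lets Lemma~\ref{lemma:unique_of_sm} identify the minimizer as $P^{\mathrm{SR}}_{m,W}$. Everything downstream is largely bookkeeping, the only fussy points being the tie-breaking in (i) and the energy-distance collapse together with the CDF identity in (ii).
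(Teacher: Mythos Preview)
Your proofs of (ii) and (iii) are essentially the paper's own. For (iii) the paper runs the identical de-crossing swap; your observation that condition~(2) together with strictly positive marginals forces condition~(1) is exactly the auxiliary lemma the paper states and proves at the start of the theorem's proof, so the ``main obstacle'' you anticipate is already handled there. For (ii) the paper's reduction (its Lemma~\ref{lemma:edist-equi}) rewrites the objective as maximizing $\sum_{k}(X_{k+1}-X_k)\sum_i\bigl(\sum_{j\le k}p_{ij}\bigr)^2$, which is your $\int\sum_iQ_i(t)^2\,dt$ written over a step function, and then applies the same pointwise $0$/$1$ maximization; your energy-distance collapse to $m^{-2}\sum_i\sum_{j,j'}p_{ij}p_{ij'}|X_j-X_{j'}|$ is the same computation carried out one step further.

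Your treatment of (i) is the genuinely different piece. The paper does not reduce (i) to (iii); instead it notes that $\sum_{i,j}p_{ij}X_j^2=m\sum_jW_jX_j^2$ is fixed, so minimizing the conditional variance is equivalent to \emph{maximizing} $t(P)=X^\top P^\top PX=\sum_i\bigl(\sum_jp_{ij}X_j\bigr)^2$, sorts the rows so the row means $\sum_j p_{ij}X_j$ are nondecreasing, and runs a direct swap showing any crossing strictly increases $t$. Your route via the variational identity $\Var_q[X]=\min_y\E_q[(X-y)^2]$ is a clean alternative that recycles (iii) rather than repeating a swap argument; the tie-breaking perturbation is harmless because both sides of the inequality from (iii) are continuous (indeed polynomial) in $\delta$, and $P^{\mathrm{SR}}_{m,W}$ does not depend on the particular $Y$ used. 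Both approaches yield the same conclusion; yours is more unified, while the paper's is self-contained and does not need the auxiliary sequence $(Y_i)$.
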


\begin{remark}\label{remark: webber}
If the goal is to estimate $\e[\phi(X)]$, then ordering the states by function $\phi$ and then applying stratified resampling gives the minimum variance. {This result also appeared in \citet{webber2019unifying}, where it was proved using an optimization argument. Our proof uses a similar idea and directly shows that when the resampling variance is minimized, the resampling matrix must be a staircase matrix and corresponds to ordered stratified resampling. A similar approach is used to prove (iii) as well.}
\end{remark}

\section{Error of ordered stratified resampling}
\label{sec:multiple-dim}
%\subsection{One-dimensional case}
% In Section~\ref{sec:proof-ot-ov}, we proved that when the dimension of $X$ is 1, optimal transport resampling strategy actually minimizes the conditional variance
% $$\Var_P\left[\dfrac{1}{N}\sum_{i=1}^N X_i^*|\mathbf{X}\right] $$

%To summarize the proof, we have following theorem:
In this section, we analyze the error induced by ordered stratified resampling. 
\begin{theorem}\label{theorem:variance_1d}
Suppose one-dimensional particles $(\tilde X_{i})_{i=1}^m$ is resampled with ordered stratified resampling from $(X_j,W_j)_{j=1}^n$, then for any Lipschitz function $\phi$ with coefficient $L_\phi$,
\begin{equation*}
\Var \left[\dfrac{1}{m}\sum_{i=1}^m \phi(\tilde X_i)\mid{X}, W\right] 
\leq \dfrac{L_\phi^2}{4m^2} (\max_{1\le i\le n} {X_i} - \min_{1\le i\le n} {X_i})^2.
\end{equation*}
\end{theorem}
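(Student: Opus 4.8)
The plan is to work conditionally on $(X,W)$ throughout, and to assume without loss of generality that the particles are the sorted ones $X_1<\cdots<X_n$ (ties can be merged, per the convention of Section~\ref{sec:resampling-matrix}). Two structural facts about ordered stratified resampling, both immediate from the explicit construction in Section~\ref{sec:resampling-matrix}, drive the argument. First, conditionally on $(X,W)$ the resampled particles $\tilde X_1,\dots,\tilde X_m$ are independent, because $\tilde X_i$ is a deterministic function of the single variate $U_i\sim\Unif((i-1)/m,i/m]$ and the $U_i$ are drawn independently. Second, the set of values $\tilde X_i$ can take is a contiguous block of the sorted particles: as $U_i$ sweeps the interval $((i-1)/m,i/m]$, the selected particle is the one whose ``slot'' $(\sum_{k<j}W_k,\sum_{k\le j}W_k]$ contains $U_i$, and since both the slots and the interval are (half-open) intervals, the reachable indices form an interval $[j_i^-,j_i^+]$. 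Writing $a_i=X_{j_i^-}$ and $b_i=X_{j_i^+}$, independence gives
\[
\Var\!\left[\tfrac1m\sum_{i=1}^m\phi(\tilde X_i)\mid X,W\right]=\frac1{m^2}\sum_{i=1}^m\Var\!\left[\phi(\tilde X_i)\mid X,W\right].
\]

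Next I would bound each summand. Since $\tilde X_i$ is supported on $\{X_j:j\in[j_i^-,j_i^+]\}\subseteq[a_i,b_i]$ and $\phi$ is $L_\phi$-Lipschitz, $\phi(\tilde X_i)$ lies in an interval of length at most $L_\phi(b_i-a_i)$, so Popoviciu's inequality gives $\Var[\phi(\tilde X_i)\mid X,W]\le\tfrac14 L_\phi^2(b_i-a_i)^2$. Hence the conditional variance is at most $\tfrac{L_\phi^2}{4m^2}\sum_{i=1}^m(b_i-a_i)^2$, and because the summands are nonnegative, $\sum_i(b_i-a_i)^2\le\big(\sum_i(b_i-a_i)\big)^2$. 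It therefore suffices to prove the telescoping bound $\sum_{i=1}^m(b_i-a_i)\le\max_k X_k-\min_k X_k$.

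This telescoping bound is the crux, and it is the only place the ordering of the particles matters. I would show $j_{i+1}^-\ge j_i^+$ for each $i$: every point of stratum $i$ is $\le i/m$ while every point of stratum $i+1$ is $>i/m$, so the cumulative-weight value at index $j_{i+1}^-$ strictly exceeds that at index $j_i^+-1$, which (monotonicity of the cumulative weights) forces $j_{i+1}^-\ge j_i^+$. Since $X_1<\cdots<X_n$, this yields $b_i=X_{j_i^+}\le X_{j_{i+1}^-}=a_{i+1}$, hence the chain $a_1\le b_1\le a_2\le\cdots\le a_m\le b_m$ and
\[
\sum_{i=1}^m(b_i-a_i)=(b_m-a_1)-\sum_{i=1}^{m-1}(a_{i+1}-b_i)\le b_m-a_1\le\max_k X_k-\min_k X_k.
\]
Chaining the three displays gives the theorem. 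I expect the main obstacle to be bookkeeping rather than ideas: carefully handling the half-open slot intervals, ties in the cumulative weights (zero-weight particles, which are never resampled and may be discarded), and confirming that the reachable index sets are genuinely contiguous intervals. None of this is deep, but it has to be done cleanly for the $j_{i+1}^-\ge j_i^+$ step and the telescoping identity to be fully rigorous.
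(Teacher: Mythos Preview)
Your proposal is correct and follows essentially the same route as the paper: conditional independence, Popoviciu's inequality, the Lipschitz bound, and then a telescoping argument using the contiguous-block structure $j_{i+1}^-\ge j_i^+$. The only cosmetic difference is that the paper bounds $\sum_i(b_i-a_i)^2\le(\max_k X_k-\min_k X_k)\sum_i(b_i-a_i)$ before telescoping, whereas you use $\sum_i(b_i-a_i)^2\le\bigl(\sum_i(b_i-a_i)\bigr)^2$; both land on the same final bound.
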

% Note that with multinomial resampling, the conditional variance is 
% \begin{equation*}
% \Var \left[\dfrac{1}{M}\sum_{i=1}^M \phi(X_i^*)\mid{X}, W\right] 
% = \dfrac{1}{M}\sum_{j=1}^N \phi(X_j)W_j
% \end{equation*}
% and it is clear that the conditional variance with ordered stratified resampling has lower order that multinomial resampling.

%We include the proof of this theorem in the main text because it is succinct and provides some intuition on the role played by stratification.

We here provide some intuition behind ordered stratified sampling. Since the new particles are sampled independently, we only need to make sure that each new particle brings in little randomness. It is easy to see from the staircase structure of the resampling matrix that each $\tilde X_i$ takes value in a sequence of consecutive $X_j$'s. Since the original particles have been ordered, this sequence of $X_j$'s are close to each other in the space. Together with the fact that $\phi$ is Lipschitz, we see that for each $i$, $\phi(\tilde X_i)$ is bounded in a small region. %Next, we see that this intuition regarding spacial ordering of the $X_j$'s also provides useful suggestions for conducting resampling in multiple-dimensional cases.

%\subsection{Hilbert curve and its properties}
In multiple dimensions, it has been noticed that the Hilbert space-filling curve \citep{hilbert1935stetige} can help lower the sampling variance \citep{gerber2015sequential,he2016extensible,gerber2019negative}. In particular, \citet{gerber2019negative} used the Hilbert curve in the context of resampling. They showed that the resampling variance for Lipschitz functions with $m$ particles is of order $O(m^{-(1+1/d)})$, where $d$ is the number of dimensions. We improve this bound to $O(m^{-(1+2/d)})$ and show that this new rate is the best for ordered stratified resampling schemes with any ordering, as conjectured in \citet{gerber2019negative}.

A $d$-dimensional Hilbert curve is a continuous function $H:[0,1]\to[0,1]^d$. Its most important properties relevant to our tasks are as follows:
\begin{enumerate}
    \item[(i)] $H$ is surjective.
    \item[(ii)] $H$ is H\"older continuous with exponent $1/d$ \citep{he2016extensible}:
        \begin{equation*}
        \|H(x)- H(y)\| \leq 2\sqrt{d+3} |x- y|^{1/d}.
        \end{equation*}
    \item[(iii)] $H$ is measure-preserving. For each Lebesgue measurable $I\subseteq[0,1]$, $m_1(I)=m_d(H(I))$.
\end{enumerate}
The Hilbert curve is defined as the limit of a sequence of curves; see Figure~\ref{figure:Hilbert-curve} for an illustration in two and three dimensions. Many software packages can efficiently convert between $x$ and $H(x)$ (e.g., the Python package {\small\textsf{hilbertcurve}}). {In practice, the computation cost of this approximation is quite minimal compared to the sampling part.} We omit here the rigorous definition of Hilbert curves and refer  interested readers to \citet{sagan2012space}. For the purpose of resampling, the most relevant property is the H\"older continuity. This ensures that $H(I)$, the image of an interval $I\subseteq[0,1]$, has its diameter bounded above by $2\sqrt{d+3}\cdot m_1(I)^{1/d}$. As an illustration, we plot the images of $H([i/k,(i+1)/k])$ for $i=0,1,\dots,k-1$ and $k=5,6,7,8$ in Figure~\ref{figure:Hilbert-curve-multiple-parts}.

\begin{figure}[h]
\begin{subfigure}{.24\textwidth}
  \centering
  \includegraphics[width=.9\linewidth]{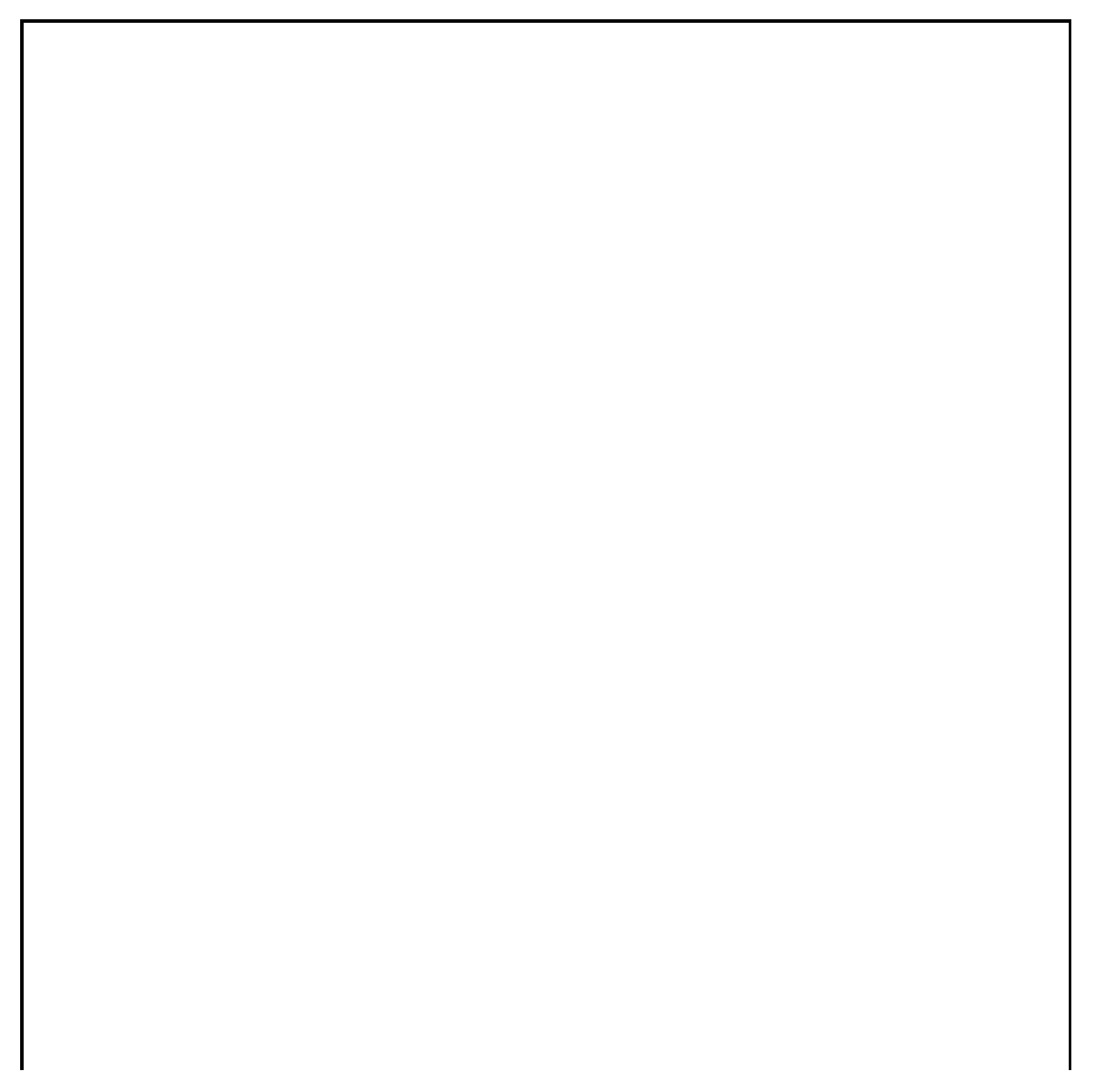}  
  \caption{$H_{2,1}$}
\end{subfigure}
\begin{subfigure}{.24\textwidth}
  \centering
  \includegraphics[width=.9\linewidth]{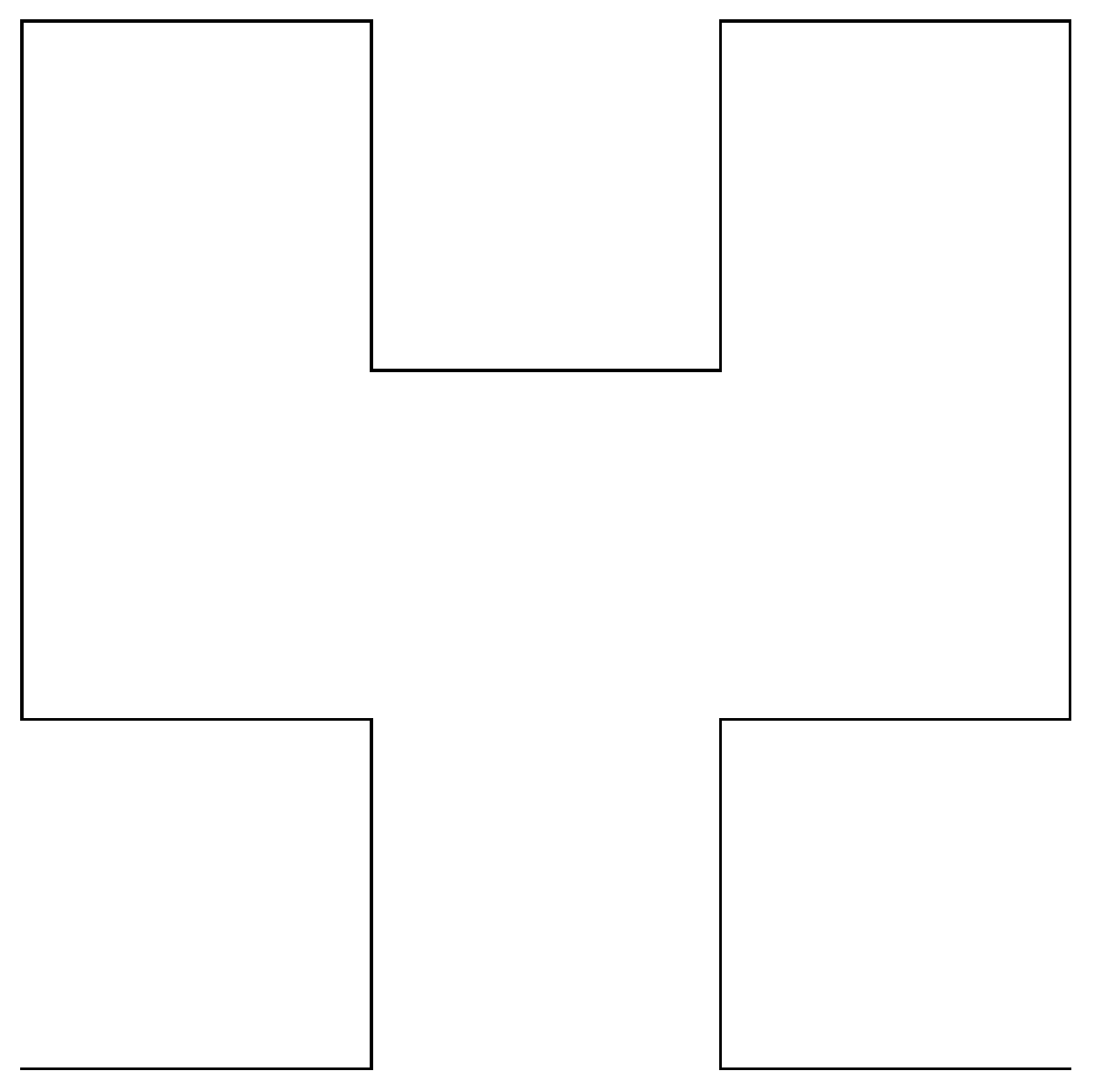}  
  \caption{$H_{2,2}$}
\end{subfigure}
\begin{subfigure}{.24\textwidth}
  \centering
  \includegraphics[width=.9\linewidth]{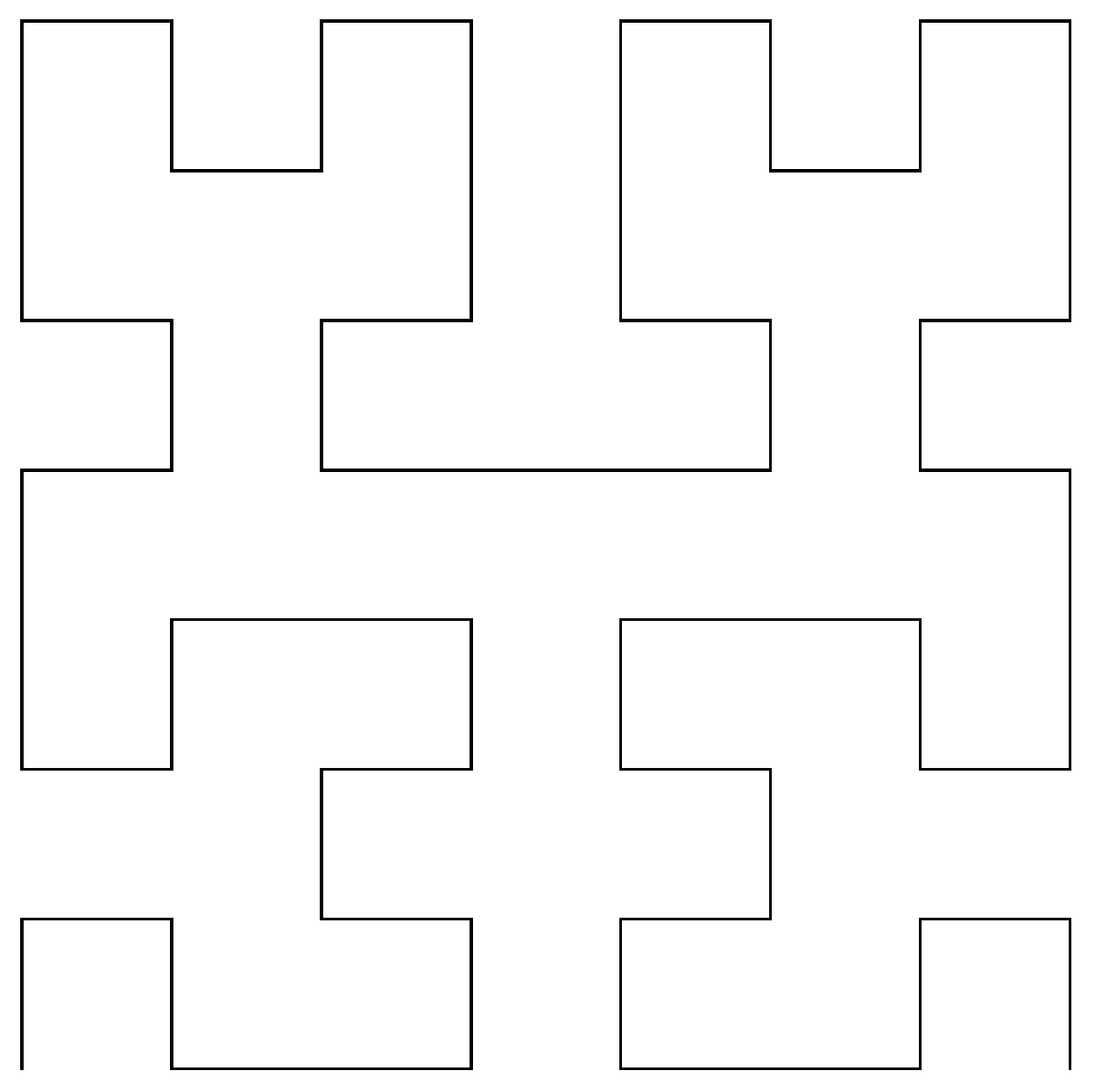}  
  \caption{$H_{2,3}$}
\end{subfigure}
\begin{subfigure}{.24\textwidth}
  \centering
  \includegraphics[width=.9\linewidth]{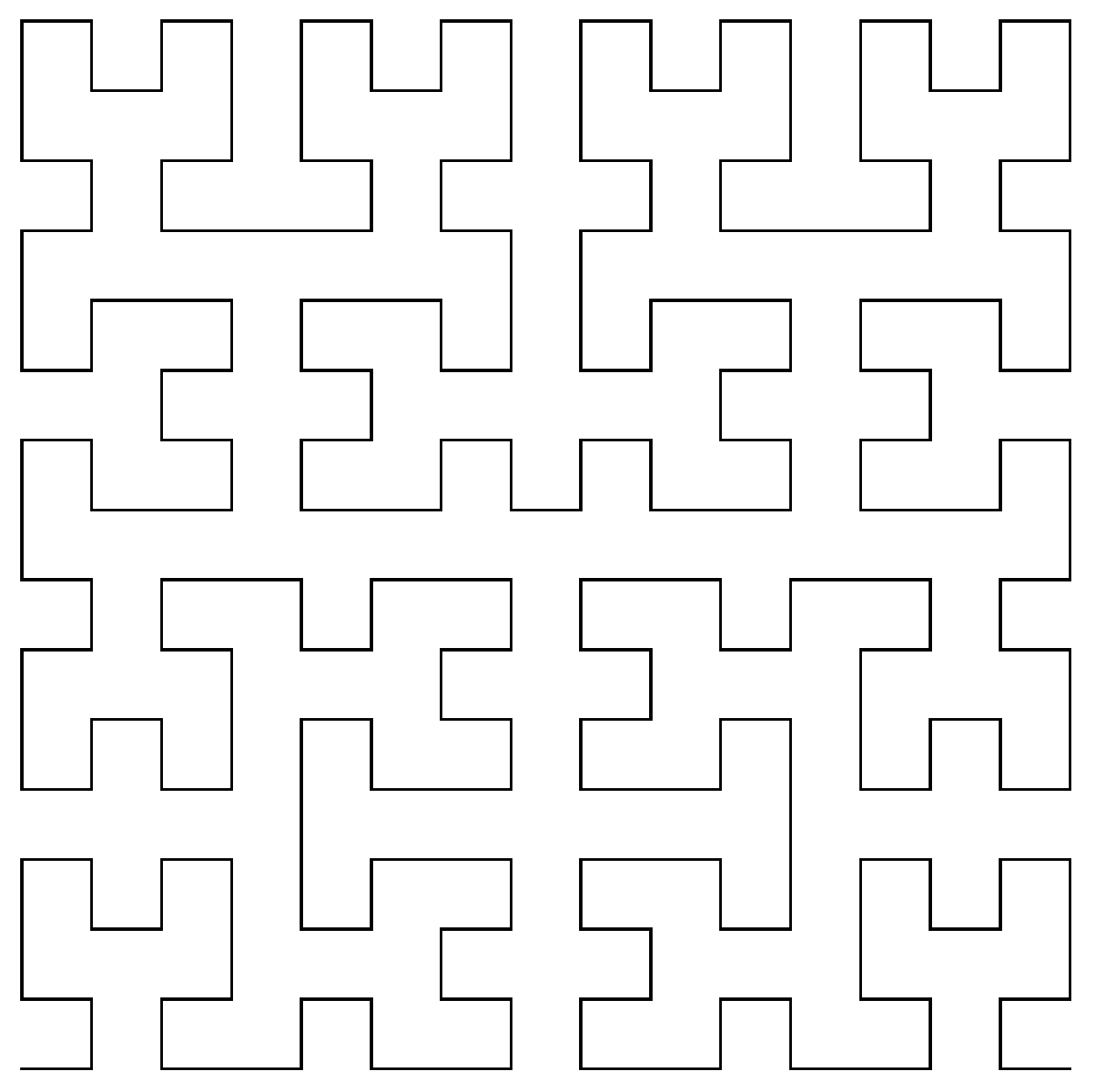}  
  \caption{$H_{2,4}$}
\end{subfigure}
\begin{subfigure}{.24\textwidth}
  \centering
  \includegraphics[width=.9\linewidth]{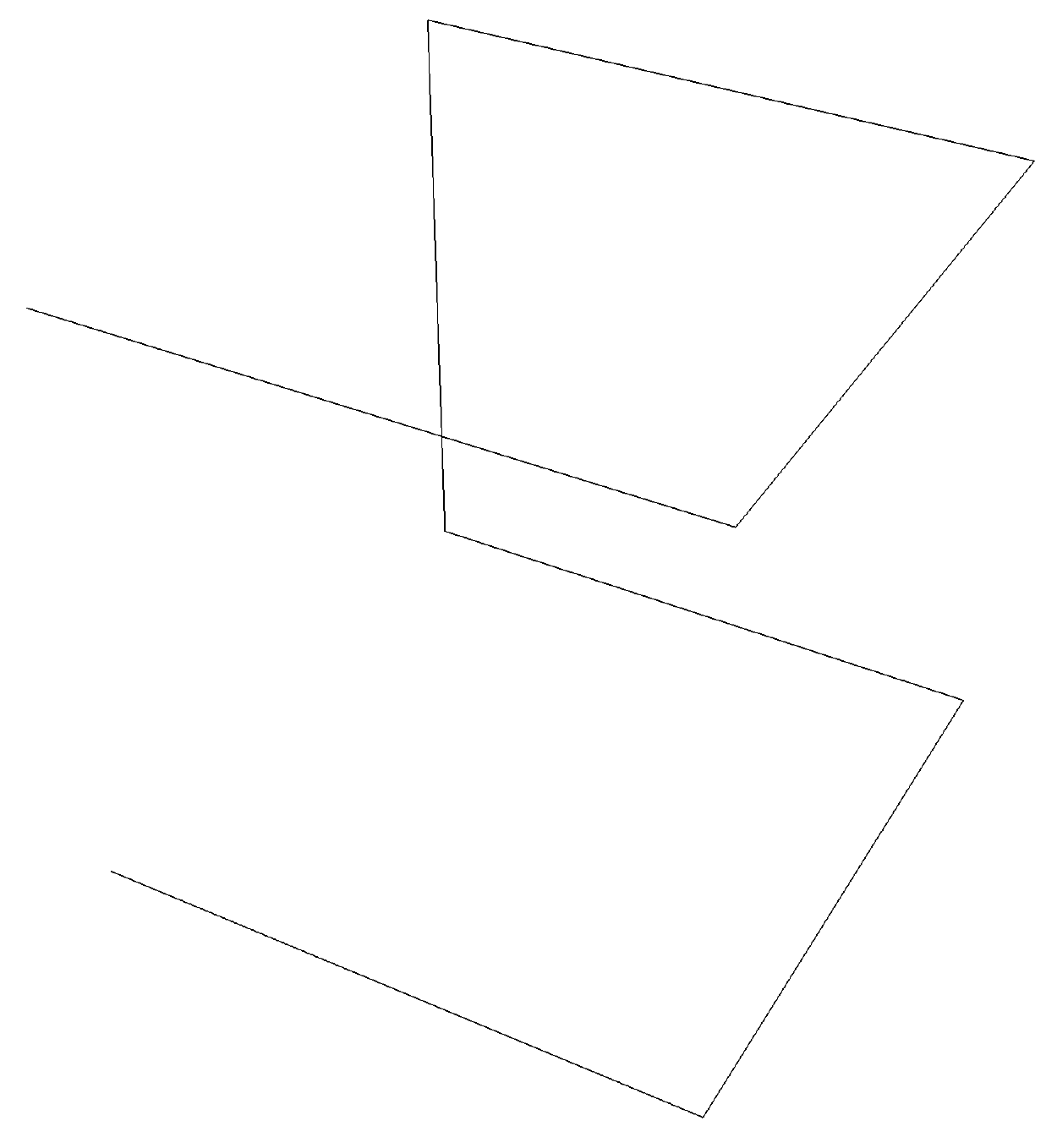}  
  \caption{$H_{3,1}$}
\end{subfigure}
\begin{subfigure}{.24\textwidth}
  \centering
  \includegraphics[width=.9\linewidth]{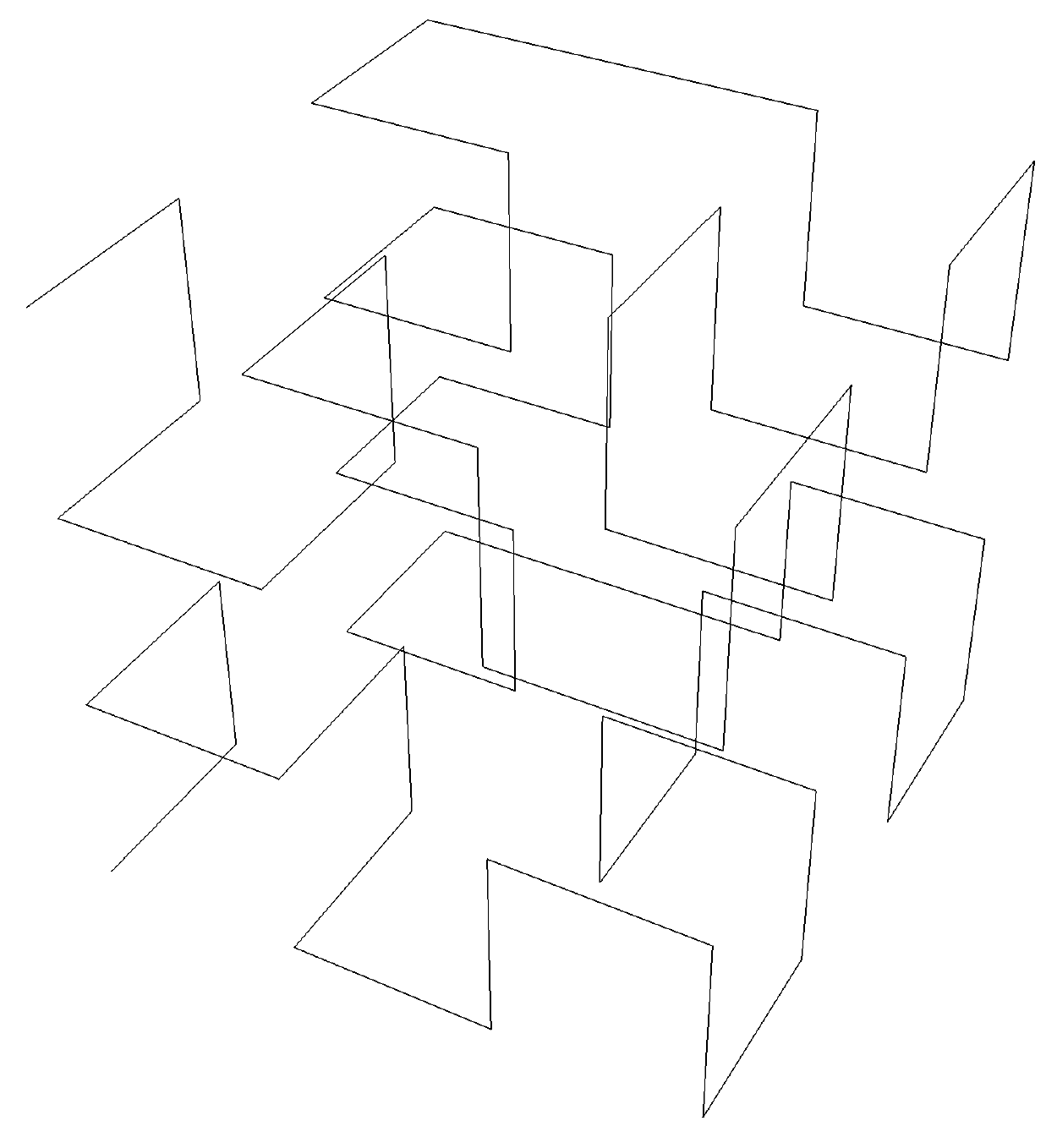}  
  \caption{$H_{3,2}$}
\end{subfigure}
\begin{subfigure}{.24\textwidth}
  \centering
  \includegraphics[width=.9\linewidth]{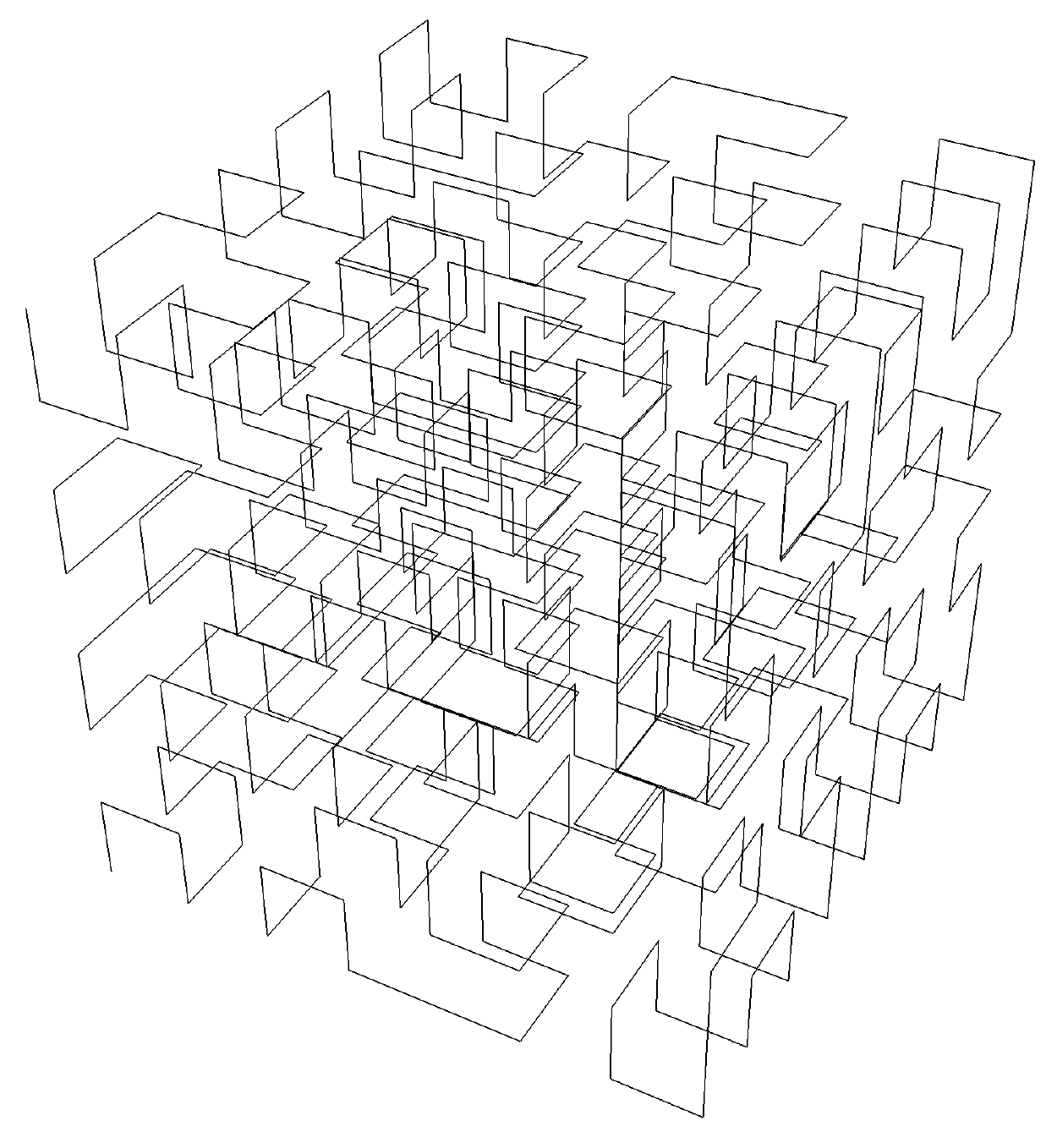}  
  \caption{$H_{3,3}$}
\end{subfigure}
\begin{subfigure}{.24\textwidth}
  \centering
  \includegraphics[width=.9\linewidth]{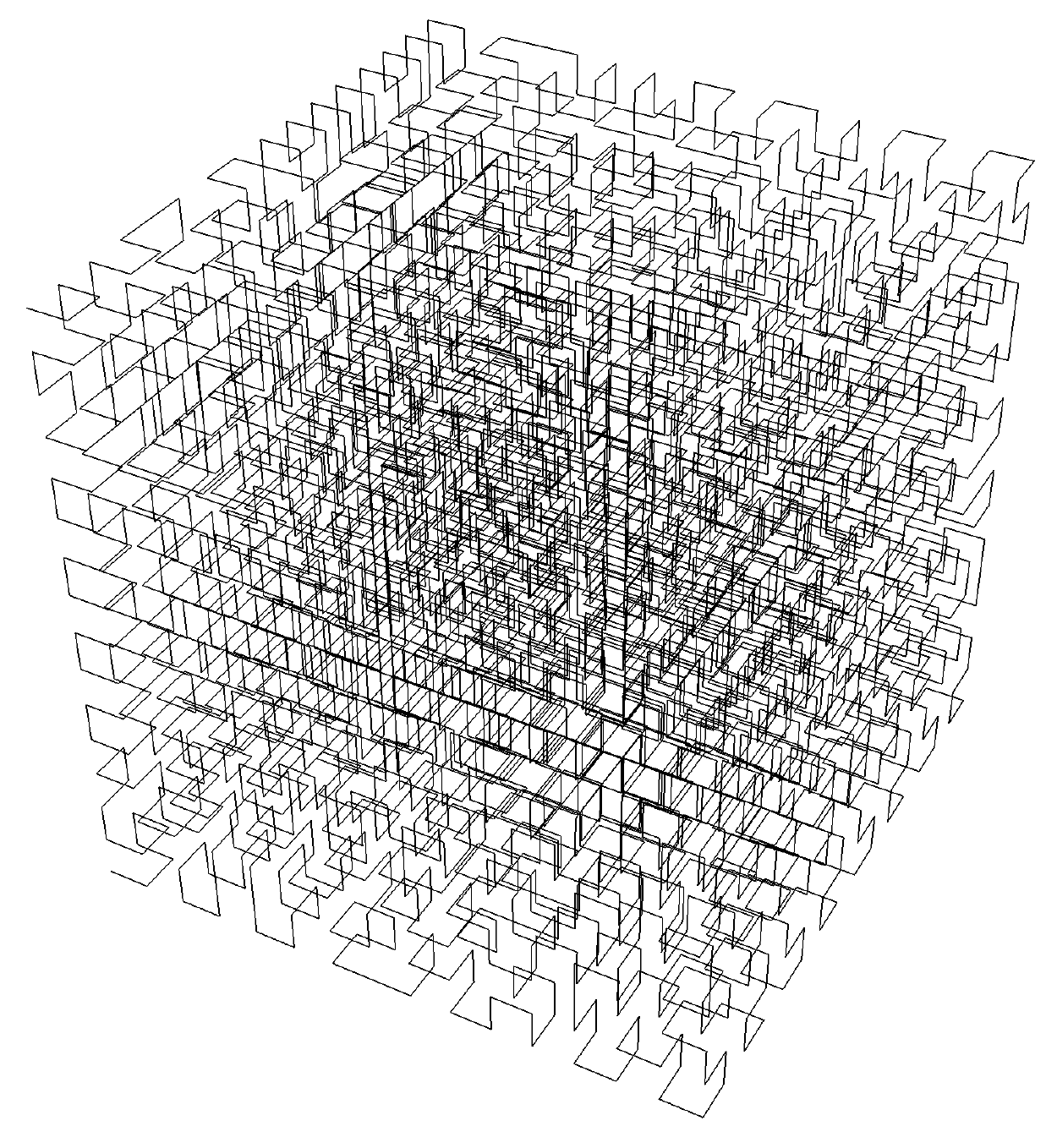}  
  \caption{$H_{3,4}$}
\end{subfigure}
\caption{Hilbert curves of the first four orders in two and three dimensions.}
\label{figure:Hilbert-curve}
\end{figure}

\begin{figure}[h]
\begin{subfigure}{.24\textwidth}
  \centering
  \includegraphics[width=.9\linewidth]{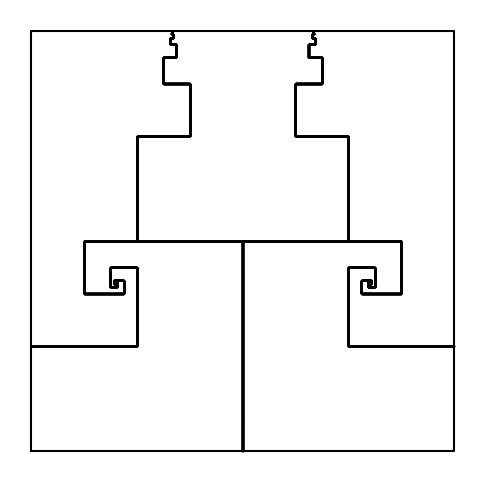}  
  \caption{Five parts.}
\end{subfigure}
\begin{subfigure}{.24\textwidth}
  \centering
  \includegraphics[width=.9\linewidth]{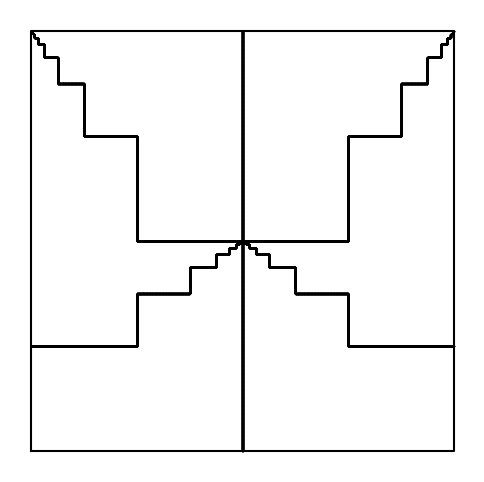}
  \caption{Six parts.}
\end{subfigure}
\begin{subfigure}{.24\textwidth}
  \centering
  \includegraphics[width=.9\linewidth]{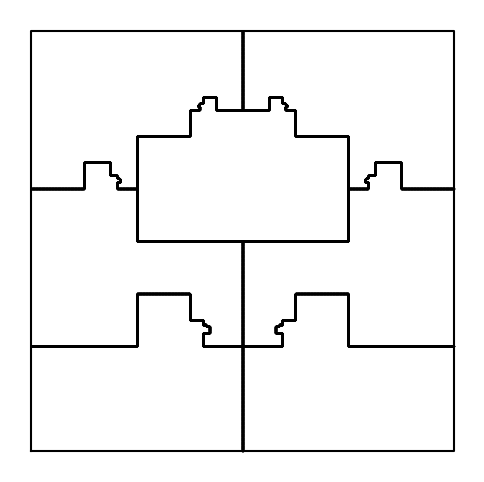}
  \caption{Seven parts.}
\end{subfigure}
\begin{subfigure}{.24\textwidth}
  \centering
  \includegraphics[width=.9\linewidth]{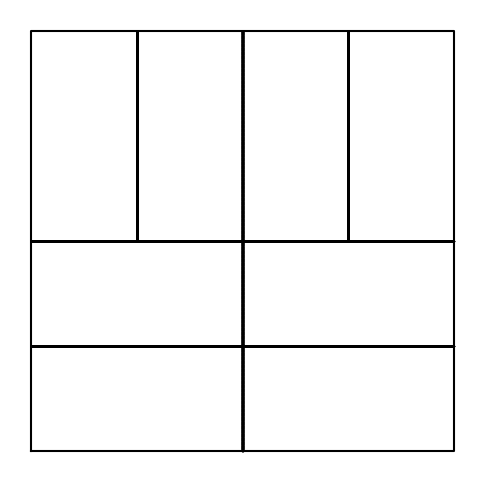}
  \caption{Eight parts.}
\end{subfigure}
\caption{The unit square divided into several parts with equal areas based on the Hilbert curve.}
\label{figure:Hilbert-curve-multiple-parts}
\end{figure}

Now we formally introduce the Hilbert curve resampling first proposed in \citet{gerber2019negative}. Proposition~2 in \citet{gerber2019negative} says that there exists a one-to-one Borel measurable function $h:[0,1]^d\to[0,1]$ such that $H(h(x))=x$ for all $x\in[0,1]^d$. The resampling procedure is to first sort the particles so that $(h(X_j))_{j=1}^n$ is in ascending order, and then apply stratified resampling. Note that in one dimension this reduces to ordered stratified sampling. Following the intuition in the one-dimensional case, each new particle is bounded in a small region in $[0,1]^d$ due to the H\"older continuity of $H$, which limits the variability of $\tilde X_i$. See Figure~\ref{figure:Hilbert-curve-multiple-parts-weights} for an illustration.
Theorem~\ref{theorem:variance_multi_dim} gives an upper bound on the resampling variance, which is an improved bound compared to the one reported in Theorem~5 in \citet{gerber2019negative}.

\begin{figure}[h]
\begin{subfigure}{.45\textwidth}
  \centering
  \includegraphics[width=.9\linewidth]{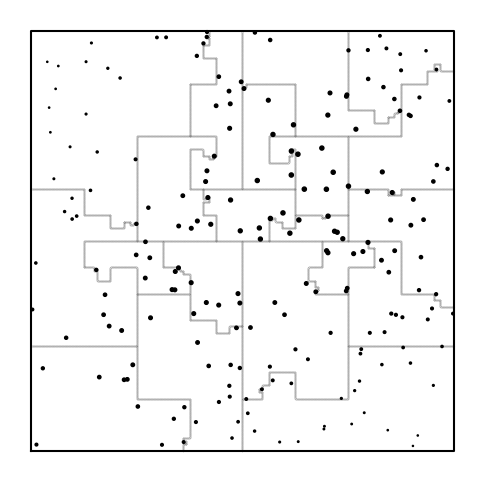}  
  \caption{$n=200$ particles resampled into $m=20$.}
\end{subfigure}
\begin{subfigure}{.45\textwidth}
  \centering
  \includegraphics[width=.9\linewidth]{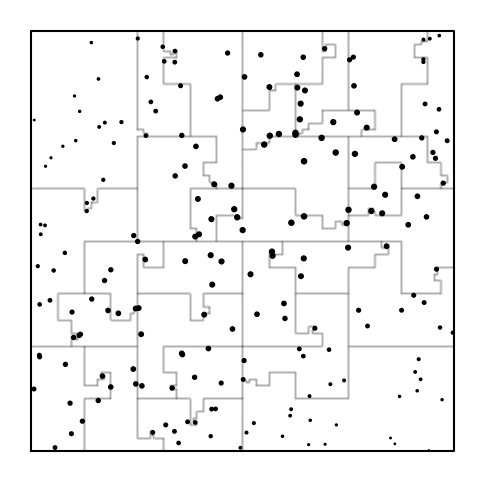}
  \caption{$n=200$ particles resampled into $m=30$.}
\end{subfigure}
\caption{The unit square divided into $m$ parts based on the Hilbert curve and the particle weights. Size of the point represents their particle weight. Each region contains particles with weights summing to one (neighbouring regions divide weights of the particles on the boundary).}
\label{figure:Hilbert-curve-multiple-parts-weights}
\end{figure}

\begin{theorem}\label{theorem:variance_multi_dim}
Let $\phi: [0,1]^d \rightarrow [0,1]$, $d>1$, be a Lipschitz function with Lipschitz coefficient $L_\phi$. If $(X_j)_{j=1}^n$ is sorted in an ascending order by the value of $h(X_j)$, then stratified sampling satisfies
$$\Var_{\textnormal{HC-strat}} \left[\dfrac{1}{m}\sum_{i=1}^m \phi(\tilde X_i)\mid{X},W\right] \le\dfrac{(d+3)L_\phi^2}{m^{1+2/d}}.$$
\end{theorem}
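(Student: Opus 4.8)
The plan is to exploit that in stratified resampling the new particles $\tilde X_1,\dots,\tilde X_m$ are conditionally independent given $(X,W)$ (each $\tilde X_i$ is a deterministic function of an independent uniform $U_i\sim\Unif((i-1)/m,i/m]$ together with $(X,W)$), so that
\begin{equation*}
\Var_{\textnormal{HC-strat}}\left[\frac1m\sum_{i=1}^m\phi(\tilde X_i)\mid X,W\right]=\frac1{m^2}\sum_{i=1}^m\Var\left[\phi(\tilde X_i)\mid X,W\right].
\end{equation*}
Writing $c_0=0$, $c_j=\sum_{k\le j}W_k$, and discarding any zero-weight particles at the outset, the stratum $S_i=((i-1)/m,i/m]$ intersects exactly a run of consecutive cells $(c_{j-1},c_j]$, say for $j\in\{a_i,\dots,b_i\}$, so $\tilde X_i$ takes values only in $\{X_{a_i},\dots,X_{b_i}\}$; Popoviciu's inequality then gives $\Var[\phi(\tilde X_i)\mid X,W]\le\frac14(\max_{a_i\le j\le b_i}\phi(X_j)-\min_{a_i\le j\le b_i}\phi(X_j))^2$.

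Next I would pass to a geometric bound. Because the particles are sorted so that $h(X_1)\le\cdots\le h(X_n)$, on the block $\{a_i,\dots,b_i\}$ all the values $h(X_j)$ lie in $[h(X_{a_i}),h(X_{b_i})]$; using $X_j=H(h(X_j))$, the H\"older continuity of the Hilbert curve (property (ii) above), and the $L_\phi$-Lipschitz property of $\phi$, for any $j,j'\in\{a_i,\dots,b_i\}$
\begin{equation*}
|\phi(X_j)-\phi(X_{j'})|\le L_\phi\|X_j-X_{j'}\|\le 2L_\phi\sqrt{d+3}\,\delta_i^{1/d},\qquad\text{where }\delta_i:=h(X_{b_i})-h(X_{a_i}),
\end{equation*}
and hence $\Var[\phi(\tilde X_i)\mid X,W]\le (d+3)L_\phi^2\,\delta_i^{2/d}$.

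The crux, and the step I expect to require the most care, is to show $\sum_{i=1}^m\delta_i\le1$. For this I would verify from the construction that the index blocks are ordered and overlap only at endpoints: one always has $a_i\le b_i$, while the cell that contains, or ends exactly at, the breakpoint $i/m$ forces $a_{i+1}\in\{b_i,b_i+1\}$; consequently $a_1\le b_1\le a_2\le b_2\le\cdots\le a_m\le b_m$, and monotonicity of $j\mapsto h(X_j)$ gives $h(X_{a_1})\le h(X_{b_1})\le h(X_{a_2})\le\cdots\le h(X_{b_m})$. Since $h$ takes values in $[0,1]$, telescoping yields $\sum_{i=1}^m\delta_i\le h(X_{b_m})-h(X_{a_1})\le1$.

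Finally I would assemble the pieces. Because $d\ge2$ (an integer), the map $t\mapsto t^{2/d}$ is concave, so by the power-mean (Jensen) inequality $\sum_{i=1}^m\delta_i^{2/d}\le m^{1-2/d}(\sum_{i=1}^m\delta_i)^{2/d}\le m^{1-2/d}$, and therefore
\begin{equation*}
\Var_{\textnormal{HC-strat}}\left[\frac1m\sum_{i=1}^m\phi(\tilde X_i)\mid X,W\right]\le\frac1{m^2}\sum_{i=1}^m(d+3)L_\phi^2\,\delta_i^{2/d}\le\frac{(d+3)L_\phi^2}{m^{1+2/d}},
\end{equation*}
which is the desired bound. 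The boundedness $\phi\le1$ is not used beyond the Lipschitz property, and distinctness of the $X_j$ (assumed without loss of generality) together with injectivity of $h$ makes the sorting well defined.
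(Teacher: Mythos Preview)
Your proof is correct and follows essentially the same route as the paper: conditional independence plus Popoviciu, then the Lipschitz/H\"older chain to bound each $\Var[\phi(\tilde X_i)\mid X,W]$ by $(d+3)L_\phi^2\delta_i^{2/d}$, and finally a concavity/H\"older step to turn $\sum_i\delta_i^{2/d}$ into $m^{1-2/d}$ using $\sum_i\delta_i\le1$. The paper phrases the last step via H\"older's inequality rather than Jensen, and is terser about why the $\delta_i$'s telescope (your argument that $a_{i+1}\in\{b_i,b_i+1\}$ makes this explicit), but the substance is identical.
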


{ \begin{remark} \label{remark: common_structure}
The intuition behind Theorems~\ref{theorem:variance_1d} and \ref{theorem:variance_multi_dim} is the same: in stratified resampling, the variance of each individual resampled particle is controlled because it is sampled from a set of particles spatially close to each other. In fact, one can easily generalize Theorem~\ref{theorem:variance_multi_dim} to the H\"older function case: if $|\phi(x)-\phi(y)|\le L_\phi\|x-y\|^{\beta}$, $\beta\in(0,1]$, then $$\Var_{\textnormal{HC-strat}} \left[\dfrac{1}{m}\sum_{i=1}^m \phi(\tilde X_i)\mid{X},W\right] \le\dfrac{(d+3)L_\phi^2}{m^{1+2\beta/d}}.$$
\end{remark}}

\begin{remark}
The exponent $1+2/d$ in the theorem improves the original rate $1+1/d$ shown in \citet{gerber2019negative}.
It is conjectured in \citet{gerber2019negative} that the Hilbert curve is the best choice for ordering the particles. For clarity, we take the Lipschitz coefficient to be $1$ and $m=n$. Define the space of valid probability vector as 
\begin{equation*}
\Delta_n=\left\{(w_1,w_2,\dots,w_n)\in\rr^n:\sum_{j=1}^nw_j=1, w_i\ge0\text{ for all }1\le i\le n\right\}.
\end{equation*} Theorem~\ref{theorem:variance_multi_dim} implies that
\begin{equation*}
\limsup_{n\to\infty} n^{1+{\frac2d}}\sup_{X\in[0,1]^{d\times n}}\sup_{W\in\Delta_n}\sup_{\phi\in\Phi_d}\Var_{\textnormal{HC-strat}}\left[\frac1n\sum_{i=1}^n\phi(\tilde X_i)\mid {X}, {W}\right]\le d+3,
\end{equation*}
{where $\Phi_d$ denotes the set of $1$-Lipschitz functions from $[0,1]^d$ to $[0,1]$, $d>1$. For other space-filling curves (which may be cheaper to implement) with a different H\"older exponent, similar results hold with an exponent different from $1+2/d$. However, we show in Proposition~\ref{proposition:optimality} that no other ordering rule can improve the exponent $1+2/d$.}
\begin{proposition}
Let $\Phi_d$ be the set of $1$-Lipschitz functions from $[0,1]^d$ to $[0,1]$, $d>1$. Let $o(x):[0,1]^d\to[0,1]$ be a one-to-one function. The stratified sampling procedure after ordering particles by $o$ satisfies
\begin{equation*}
\limsup_{n\to\infty} n^{1+{\frac2d}}\sup_{X\in[0,1]^{d\times n}}\sup_{W\in\Delta_n}\sup_{\phi\in\Phi_d}\Var_{o\textnormal{-strat}}\left[\frac1n\sum_{i=1}^n\phi(\tilde X_i)\mid {X}, {W}\right]\ge\frac1{27d}.
\end{equation*}
\label{proposition:optimality}
\end{proposition}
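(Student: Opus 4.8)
The plan is to show that, for \emph{any} fixed one-to-one $o$ and every (large) $n$, there is a particular triple $(X,W,\phi)$ with $\phi\in\Phi_d$ for which the conditional variance is at least of order $n^{-1-2/d}$ with a constant exceeding $1/(27d)$. Fix $n$, set $L=\lceil n^{1/d}\rceil$ (large enough that $L\ge 2$), and take $X=(X_1,\dots,X_n)$ to be any $n$ distinct points of the regular grid $\{\,j/(L-1):0\le j\le L-1\,\}^d\subseteq[0,1]^d$, which has $L^d\ge n$ points. Any two distinct grid points differ in some coordinate by at least $1/(L-1)>n^{-1/d}$, so $\{X_1,\dots,X_n\}$ is $\delta$-separated with $\delta>n^{-1/d}$ — and this holds no matter how $o$ reorders the particles. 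Relabel the particles as $x_1,\dots,x_n$ in increasing order of $o(\cdot)$ (well defined since $o$ is injective).

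Next we choose weights so that \emph{every} stratum becomes a fair coin flip between two $o$-consecutive particles. Assign to $x_1,\dots,x_n$ the ``ladder'' weights $W_1=\tfrac1{2n}$, $W_j=\tfrac1n$ for $2\le j\le n-1$, and $W_n=\tfrac3{2n}$, so that the cumulative weights are $\sum_{k\le j}W_k=\tfrac{2j-1}{2n}$ for $1\le j\le n-1$. Running stratified resampling with $m=n$ strata, the stratum $\big(\tfrac{i-1}{n},\tfrac in\big]$ has its midpoint exactly at the cumulative value $\tfrac{2i-1}{2n}$, i.e.\ at the $x_i$/$x_{i+1}$ boundary, and a direct check of the definition gives $\tilde X_i=x_i$ w.p.\ $1/2$ and $\tilde X_i=x_{i+1}$ w.p.\ $1/2$ for every $i=1,\dots,n-1$ (with $\tilde X_n=x_n$ a.s., the leftover mass being parked in $W_1$ and $W_n$). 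Since the $\tilde X_i$ are conditionally independent,
\begin{equation*}
\Var_{o\text{-strat}}\Big[\tfrac1n\textstyle\sum_{i=1}^n\phi(\tilde X_i)\mid X,W\Big]
=\frac1{n^2}\sum_{i=1}^{n-1}\frac14\big(\phi(x_i)-\phi(x_{i+1})\big)^2
\end{equation*}
for every $\phi$.

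It remains to pick $\phi\in\Phi_d$ making $\sum_{i=1}^{n-1}(\phi(x_i)-\phi(x_{i+1}))^2$ large. Since we cannot use a stratum-dependent test function, we average over the $d$ coordinate projections: $\sum_{c=1}^d\sum_{i=1}^{n-1}\big((x_i)_c-(x_{i+1})_c\big)^2=\sum_{i=1}^{n-1}\|x_i-x_{i+1}\|^2\ge(n-1)\delta^2>(n-1)n^{-2/d}$, so some coordinate $c^\ast$ has $\sum_{i=1}^{n-1}\big((x_i)_{c^\ast}-(x_{i+1})_{c^\ast}\big)^2\ge\tfrac1d(n-1)n^{-2/d}$. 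Taking $\phi(x)=x_{c^\ast}$, which is $1$-Lipschitz from $[0,1]^d$ to $[0,1]$, we obtain
\begin{equation*}
n^{1+2/d}\,\Var_{o\text{-strat}}\Big[\tfrac1n\textstyle\sum_{i=1}^n\phi(\tilde X_i)\mid X,W\Big]
\ \ge\ n^{1+2/d}\cdot\frac1{4n^2}\cdot\frac1d(n-1)n^{-2/d}=\frac{n-1}{4dn}\xrightarrow[n\to\infty]{}\frac1{4d},
\end{equation*}
whence $\limsup_n$ of the supremum over $X,W,\phi$ is at least $\tfrac1{4d}\ge\tfrac1{27d}$.

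The only genuinely delicate ingredient is the weight construction: one needs a \emph{single} weight vector that simultaneously turns all $n-1$ active strata into fair two-point lotteries over the $o$-consecutive pairs $(x_i,x_{i+1})$, and the ladder weights do exactly that. The separation bound and the coordinate-averaging step are routine, and the grid may be replaced by any $cn^{-1/d}$-separated set (a cruder separation constant would still beat $1/(27d)$). What needs care in a full write-up is only the verification of the stratum/weight-region overlaps, including the edge cases $i=1$ and $i=n-1$, and the elementary packing estimate $1/(L-1)>n^{-1/d}$.
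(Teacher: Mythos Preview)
Your argument is correct and in fact yields a sharper constant than the paper's proof. Both proofs share the same skeleton: pick $n$ grid points in $[0,1]^d$ so that any two are at Euclidean distance $\gtrsim n^{-1/d}$, choose weights that force many strata to be nondegenerate two-point lotteries between $o$-consecutive particles, and then average the resulting variance over the $d$ coordinate projections $\phi_c(x)=x_c$ to extract one good $\phi\in\Phi_d$. The difference lies entirely in the weight construction. The paper takes $n=2^{kd}$, uses weights proportional to $(1,\dots,1,2,\dots,2)$, and ends up with strata that are $(1/3,2/3)$ lotteries; this yields per-stratum variance $\tfrac{2}{9}(a-b)^2$ and, after some bookkeeping with block matrices, the constant $1/(27d)$. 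Your ladder weights $W_1=\tfrac1{2n},\,W_j=\tfrac1n,\,W_n=\tfrac{3}{2n}$ make \emph{every} stratum $i\le n-1$ an exact fair coin between $x_i$ and $x_{i+1}$, giving per-stratum variance $\tfrac14(a-b)^2$; this is both cleaner (no block decomposition, works for all $n$) and quantitatively better, delivering $1/(4d)$ in the limit. The verification of the stratum/weight overlaps, the grid separation $1/(L-1)>n^{-1/d}$, and the pigeonhole over coordinates are all as you describe.
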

\end{remark}

 Hilbert resampling is also stable in terms of the Wasserstein distance, as stated in Theorem~\ref{theorem:hc-wasserstein}. The Wasserstein distance is arguably a more intuitive notion to measure the stability of a resampling algorithm than conditional variance. When $p\le d$, Theorem~\ref{theorem:hc-wasserstein} is intuitively optimal, since $m$ balls with radius of the order $1/m^{1/d}$ are needed to cover the space.
\begin{theorem}
Under $d$-dimensional Hilbert curve resampling, $d\ge1$, the Wasserstein distance $W_p$ between $\tilde\p=\sum_{i=1}^mm^{-1}\delta_{\tilde X_i}$ and $\p=\sum_{j=1}^nW_j\delta_{X_j}$ is almost surely upper bounded by $2 \sqrt{d+3} m^{-{ \frac1{\max(p,d)} }} $.
\label{theorem:hc-wasserstein}
\end{theorem}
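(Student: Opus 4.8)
The plan is to build an explicit coupling between $\p$ and $\tilde\p$ \emph{directly from the stratified resampling matrix}, and to bound its transport cost using the staircase structure of that matrix together with the H\"older continuity of the Hilbert curve. By Proposition~\ref{prop: staircase} we may assume, after a harmless row permutation (which changes neither the resampling scheme nor $\tilde\p$), that $P=P^{\textnormal{SR}}_{m,W}$ is a staircase matrix; we may also assume $W_j>0$ for all $j$ after dropping zero-weight particles. Write $\xi_j=h(X_j)$, so that sorting by ascending $h$-value means $\xi_1<\xi_2<\cdots<\xi_n$ and $X_j=H(\xi_j)$. By property~(1) of Definition~\ref{def:stair-mat}, the support of row $i$ of $P$ is a block of consecutive indices $\{a_i,a_i+1,\dots,b_i\}$, and consequently the realized particle satisfies $\tilde X_i\in\{X_{a_i},\dots,X_{b_i}\}$.

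First I would introduce the coupling $\gamma$ on $[0,1]^d\times[0,1]^d$ that places mass $m^{-1}p_{ij}$ on the pair $(X_j,\tilde X_i)$ for each $(i,j)$. Using the row sums $\sum_j p_{ij}=1$ and column sums $\sum_i p_{ij}=mW_j$, one checks at once that $\gamma$ is a probability measure with first marginal $\p$ and second marginal $\tilde\p$ (merging coincident $\tilde X_i$'s if necessary), so that
\begin{equation*}
W_p(\p,\tilde\p)^p\le\sum_{i,j} \tfrac1m\,p_{ij}\,\|X_j-\tilde X_i\|^p .
\end{equation*}
For fixed $i$, every $j$ with $p_{ij}>0$ lies in $\{a_i,\dots,b_i\}$, and so does the index of $\tilde X_i$; since $(\xi_j)$ is increasing, $|\xi_j-h(\tilde X_i)|\le \xi_{b_i}-\xi_{a_i}=:\delta_i$. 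The H\"older bound $\|H(x)-H(y)\|\le 2\sqrt{d+3}\,|x-y|^{1/d}$ then yields $\|X_j-\tilde X_i\|\le 2\sqrt{d+3}\,\delta_i^{1/d}$, and combining with $\sum_j m^{-1}p_{ij}=m^{-1}$ gives
\begin{equation*}
W_p(\p,\tilde\p)^p\le\frac{(2\sqrt{d+3})^p}{m}\sum_{i=1}^m\delta_i^{p/d}.
\end{equation*}

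Next I would bound $\sum_i\delta_i^{p/d}$ using two structural facts: each $\delta_i\le1$ (as $h$ is valued in $[0,1]$), and $\sum_{i=1}^m\delta_i\le1$. The latter follows from property~(2) of Definition~\ref{def:stair-mat}: applied to rows $i$ and $i+1$ it forces $a_{i+1}\ge b_i$, hence $\xi_{a_{i+1}}\ge\xi_{b_i}$; and applied to row $1$/column $1$ and row $m$/column $n$ it forces $a_1=1$ and $b_m=n$. Therefore $\sum_{i=1}^m(\xi_{b_i}-\xi_{a_i})$ telescopes to $\xi_n-\xi_1+\sum_{i=1}^{m-1}(\xi_{b_i}-\xi_{a_{i+1}})\le \xi_n-\xi_1\le1$. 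Now split into regimes. If $p\ge d$ then $p/d\ge1$, so $\delta_i^{p/d}\le\delta_i$, giving $\sum_i\delta_i^{p/d}\le1$ and $W_p(\p,\tilde\p)\le 2\sqrt{d+3}\,m^{-1/p}$. If $p<d$ then $x\mapsto x^{p/d}$ is concave, so by Jensen $\sum_i\delta_i^{p/d}\le m^{1-p/d}\bigl(\sum_i\delta_i\bigr)^{p/d}\le m^{1-p/d}$, giving $W_p(\p,\tilde\p)\le 2\sqrt{d+3}\,m^{-1/d}$. In both cases the bound is $2\sqrt{d+3}\,m^{-1/\max(p,d)}$, and since no estimate used randomness it holds for every realization of the resampling, hence almost surely.

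The main thing that needs care is extracting the two structural facts $a_{i+1}\ge b_i$ and $\sum_i\delta_i\le1$ from Definition~\ref{def:stair-mat} — equivalently, that the ``strips'' attached to distinct resampled particles overlap only at their endpoints — and making sure the telescoping is performed with the $\xi_j=h(X_j)$ (not the weights), so that H\"older continuity can be invoked. Everything else (the coupling and its marginals, the single application of the H\"older bound, and the two-case power-mean estimate) is mechanical. I would also note in passing that for $p\le d$ the rate $m^{-1/d}$ cannot be improved, since covering $[0,1]^d$ by $m$ balls forces radius of order $m^{-1/d}$; this is the reason the exponent is $\max(p,d)$ rather than $p$.
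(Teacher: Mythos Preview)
Your proof is correct and follows essentially the same route as the paper: the same coupling $\gamma$ with mass $m^{-1}p_{ij}$ on $(X_j,\tilde X_i)$, the same use of H\"older continuity to reduce to $\frac{(2\sqrt{d+3})^p}{m}\sum_i\delta_i^{p/d}$, and the same two-regime bound on $\sum_i\delta_i^{p/d}$ via $\sum_i\delta_i\le1$. If anything, you are more explicit than the paper in deriving the telescoping bound $\sum_i\delta_i\le1$ from the staircase structure, which the paper leaves implicit.
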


{
It is worthwhile to point out that in practice, depending on the targeted quantities of interest,  there might exist an effective dimension lower than $d$. For example, if we only care about functions of the first $\tilde d$ coordinates, we should sort the particles using the $\tilde d$-dimensional Hilbert curve and the first $\tilde d$ coordinates of the particles; if the particles concentrate on a  $\tilde d$-dimensional subspace, we should  project the particles to this subspace and sort the particles using the $\tilde d$-dimensional Hilbert curve.
}
%In Theorem~\ref{theorem:variance_multi_dim}, we assume $X$ are selected in a unit cube for convenience. Actually, the conclusion still holds as long as the domain of $X$ is bounded. 

%The H\"older continuity of $H$ plays a key role in the proof of Theorem~\ref{theorem:variance_multi_dim}.

%{\color{red} What does $\phi(x)=H(x)_1$ look like?}

%Let $\varphi:[0,1]^d\to[0,1]$ be an ordering function. We can define a stratified sampling procedure acting on particles ordered by $\varphi(X_i)$, $i=1,2,\dots,N$.

%\subsection{Multiple descendant}
\section{Mean square error of sequential quasi-Monte Carlo}
\label{sec:multiple-des}
\subsection{Sequential quasi-Monte Carlo}
\label{sec:sqmc}
%\subsubsection{Generalize OT resampling via stratification}
% While the optimal transport resampling delivers optimal variance in one dimension, a most severe limitation is that it requires $M=N$. This is partially due to the fact that the corresponding loss $\sum_{j=1}^N\ell(X_j,X_j^*)$ cannot be written as a function of $\p$ and $\p^*$. Fortunately, we can still apply ordered stratified resampling when $M\ne N$. This connection is helpful in cases where we wish to reduce the number of particles with resampling.
In this section, we discuss how to utilize the previous results to obtain a new convergence rate for the sequential quasi-Monte Carlo proposed in \citet{gerber2015sequential}, which can be structured identically as Algorithm~\ref{algorithm:SISR} with the same weight computation, but with different resampling and growth steps.

%We now describe sequential quasi-Monte Carlo. For simplicity, we use closed intervals instead of half-open ones in \citet{gerber2015sequential}, which makes no difference as long as all distributions are absolutely continuous with respect to the Lebesgue measure. 
Suppose there exists function $\Gamma_1(\cdot)$ and $\Gamma_t(\cdot,\cdot)$ for $2\le t\le T$ such that $\Gamma_1(V)\sim g_1(\cdot)$ and $\Gamma_t(X,V)\mid X\sim g_t(\cdot\mid X)$, where $V\sim\Unif([0,1]^d)$ is independent of $X$. Assume at the beginning of step $t$, we have weighted samples $(X_j^{(1:t-1)},W_j^{(t-1)})_{j=1}^n$, which have been ordered by the Hilbert mapping $h$ so that $h(X_1^{(t-1)})\le\cdots\le h(X_n^{(t-1)})$. Recall that Hilbert-curve stratified sampling can then be implemented by independently sampling $U_i\sim\Unif([(i-1)/n,i/n])$ for $1\le i\le n$ and let $\tilde X_i^{(t-1)}=X_{\sigma(U_i,W)}^{(t-1)}$, where $\sigma(U_i,W)=j$ if $\sum_{k=1}^{j-1}W_k<U_i\le\sum_{k=1}^jW_k$. Suppose we have a low-discrepancy set $U^{(t)}=\{(u_i,v_i):u_i\in[0,1],v_i\in[0,1]^d,1\le i\le n\}$, labeled in the way that $u_{1:n}$ are in ascending order. Intuitively speaking, a low-discrepancy set is a set that spreads evenly in $[0,1]^{1+d}$; see \citet{gerber2015sequential} for a more detailed discussion. Sequential quasi-Monte Carlo combines resampling and growth by defining
\begin{equation*}
X_j^{(t)}=\left\{
\begin{aligned}
&\Gamma_1(v_j), &t=1,\\
&\Gamma_t(X^{(t-1)}_{\sigma(u_j,W_{1:n}^{(t-1)})},v_j), &2\le t\le T,
\end{aligned}
\right.,\quad 1\le j\le n.
\end{equation*}
If the set $U^{(t)}$ contains $n$ independent samples from $\Unif([0,1]^{1+d})$, then we recover Algorithm~\ref{algorithm:SISR} with Hilbert resampling. It was shown in \citet{gerber2015sequential} that some choice of $U^{(t)}$ (e.g., the nested scrambled Sobol sequence) can achieve a mean square error of  order $o(n^{-1})$. Next, we will show that a specifically chosen set can achieve $O(n^{-1-4/[d(d+4)]})$.

\subsection{Stratified multiple-descendant growth}
\label{sec:SMG-theory}
The intuition behind Sequential quasi-Monte Carlo is that the consecutive resampled particles $(X^{(t)}_{\sigma(u_j,W_{1:n}^{(t)})})_{j=a}^b$ are close in space due to the H\"older continuity of the Hilbert curve, so if $v_{a:b}$ are more spread out, the space can be probed more consistently by stratified growth. The main difficulty of quantifying  the convergence rate of Sequential quasi-Monte Carlo lies at the deterministic or semi-deterministic nature of the set $U^{(t)}$. We exploit this intuition and construct a specific set that enables more careful convergence analysis.

Let $n=sr$, and let $U_k\sim\Unif[(k-1)/s,k/s]$ be independent for $1\le k\le s$. Let $V_{(k-1)s+\ell}=H(\tilde V_{k\ell})$, where $H$ is the $d$-dimensional Hilbert curve and $\tilde V_{k\ell}\sim\Unif[(\ell-1)/r,\ell/r]$, independently for $1\le k\le s$, $1\le\ell\le r$. We define $U_\textnormal{SMG}^{(t)}=\{(U_{\lfloor i/r\rfloor+1},V_i):1\le i\le n\}$. Here, SMG stands for stratified multiple-descendant growth, because we essentially resample $s$ particles, and let each particle have $r$ descendants in a stratified manner. This idea is also closely related to the optimal sampling in the discrete space \citep{fearnhead2003line}. Figure~\ref{figure:discrepancy} compares the discrepancy set generated by stratified multiple-descendant growth and two other approaches.

\begin{figure}
\centering
\begin{subfigure}{.28\textwidth}
\centering
    \includegraphics[width=1\textwidth]{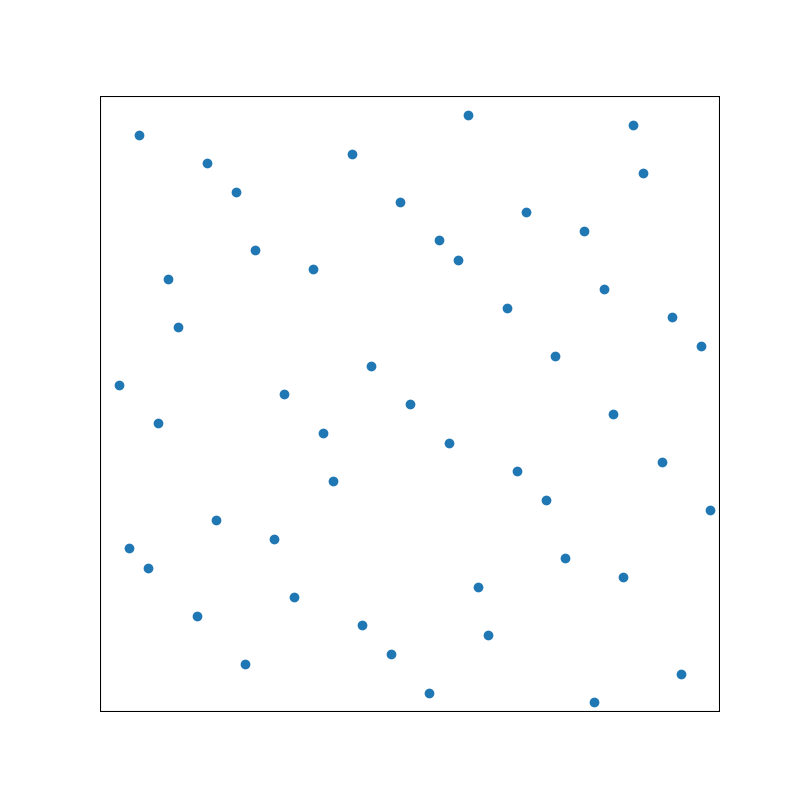}
    \caption{Sobol' sequence.}
    \centering
\end{subfigure}
\begin{subfigure}{.4\textwidth}
\centering
    \includegraphics[width=0.7\textwidth]{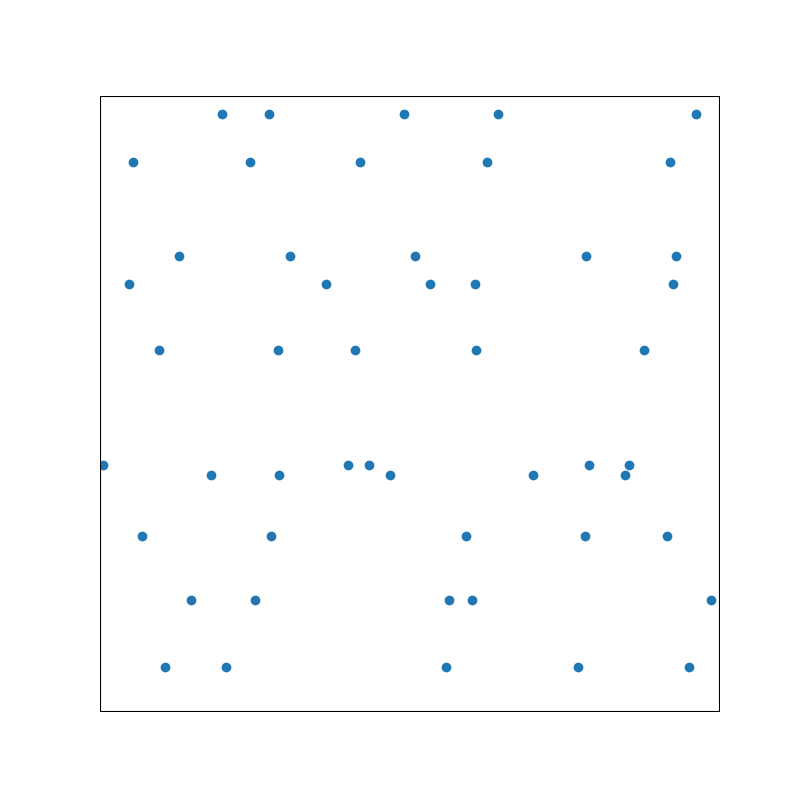}
    \caption{Stratified multiple-descendant growth.}
\end{subfigure}
\begin{subfigure}{.28\textwidth}
\centering
    \includegraphics[width=1\textwidth]{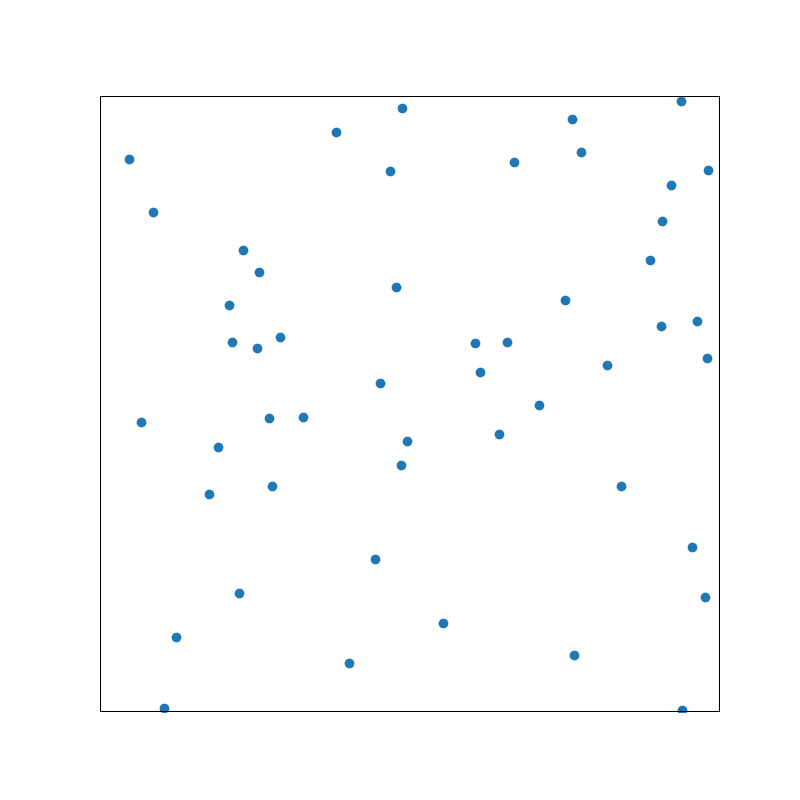}
    \caption{Independent sampling.}
\end{subfigure}
\caption{Comparison of low-discrepancy sets on $[0,1]^2$ ($n=50$, $k=10$, $r=5$).}
\label{figure:discrepancy}
\end{figure}

We focus on the mean square error of the estimation of any proper function $\phi$ in a state-space model, which is defined as
$$\text{MSE}_t(\phi) = \e \left[\frac{\sum_{j=1}^{n} W^{(t)}_{j} \phi(X^{(t)}_{j})}{\sum_{j=1}^{n} W^{(t)}_{j}}- \int \pi_{t}(x^{(1:t)})\phi(x^{(t)})dx^{(1:t)}\right]^2.$$
The mean square error can be decomposed into the squared bias and variance. The following theorem gives a bound for each one, respectively.
\begin{theorem}\label{thm: smg_convergence}
In a state-space model \eqref{equation:state-space}, we let $g_t(x^{(t)}\mid x^{(t-1)})=p_x((x^{(t)}\mid x^{(t-1)})$ and run sequential quasi-Monte Carlo with $U_\text{SMG}^{(t)}$. Assume that each $X^{(t)}$ falls in a compact set, assuming to be $\mathcal X=[0,1]^d$ without loss of generality.
%Let $X^{(t)}$ live in $\mathcal X=[0,1]^d$. 
Suppose $(X_{j}^{(t)}, W_{j}^{(t)})_{1\le j \le n}$ are the weighted samples at time $t$, where the number of multiple descendants $r=cn^{2/(d+4)}$ and particle dimension $d\ge 2$.
Assume that, for any t,
\begin{enumerate}
    \item[(i)] $a(v) = \pi_{t-1}\left(X\right)^{-1}g_t\left(v\mid X\right)^{-1} {\pi_{t}\left((X,v)\right) }$, $b(v) =  \pi_{t-1}\left((X, v)\right)^{-1} \pi_t\left(\left(X, v,u\right)\right)$, $c(v)=\Gamma_t(X,v)$, and $\Gamma_1(v)$ are bounded in $[-M,M]$ and $L$-Lipschitz.
    % \item $l(v) =  \dfrac{\pi_{t}\left((X,v)\right) \phi(v)}{\pi_{t-1}\left(X\right)g\left(v\mid X\right)}$ has Lipschitz constant $L_{\phi,A}$.
    % \item $f(v) =  \pi_{t-1}\left((X, v)\right)^{-1}\int_{\mathcal X} \pi_t\left(\left(X, v,u\right)\right)\phi(u) du$ has Lipschitz constant $L_{\phi,B}$.
    \item[(ii)] $\pi_{t-1}\left((X, v)\right)^{-1}\int_{\mathcal X} \pi_t\left(\left(X, v,u\right)\right)du$ is lower bounded by $\underline{e}>0$.
\end{enumerate}
Then, for any $L$-Lipschitz $\phi$ bounded in $[-M,M]$,
\begin{align*}
\left| \e\left[\frac{\sum_{j=1}^{n} W^{(t)}_{j} \phi(X^{(t)}_{j})}{\sum_{j=1}^{n} W^{(t)}_{j}}\right]- \int \pi_{t}(x^{(1:t)})\phi(x^{(t)})dx^{(1:t)}\right| &= O(n^{-\frac12-\frac{2}{d(d+4)}}), \\
\Var\left[ \frac{\sum_{j=1}^{n} W^{(t)}_{j} \phi(X^{(t)}_{j})}{\sum_{j=1}^{n} W^{(t)}_{j}}\right]&=O(n^{-1-\frac{4}{d(d+4)}})
\end{align*}
for all $t$, where the constants in $O$ depend only on $M$, $L$, $\underline{e}$ and $t$.
\end{theorem}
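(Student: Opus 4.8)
The plan is to induct on $t$. The inductive hypothesis is that for every bounded Lipschitz $\psi$ the self-normalized estimator $\hat\pi_t^n(\psi):=\frac{\sum_j W_j^{(t)}\psi(X_j^{(t)})}{\sum_j W_j^{(t)}}$ satisfies $|\e[\hat\pi_t^n(\psi)]-\pi_t(\psi)|=O(n^{-\frac12-\frac\kappa2})$ and $\e[(\hat\pi_t^n(\psi)-\pi_t(\psi))^2]=O(n^{-1-\kappa})$, where $\kappa=4/[d(d+4)]$, $\pi_t(\psi):=\int\pi_t(x^{(1:t)})\psi(x^{(t)})\,dx^{(1:t)}$, and the $O$-constants depend on $M,L,\underline e,t$ and the bound and Lipschitz constant of $\psi$; the two displayed bounds are then the case $\psi=\phi$. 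The base case $t=1$ is a pure stratified-sampling estimate: $X_j^{(1)}=\Gamma_1(v_j)$, where the $v_j$'s are $H$ applied to a stratification of $[0,1]$ into $r$ equal cells with $s$ i.i.d.\ replicates per cell, so $\gamma_1^n(\psi):=n^{-1}\sum_j w^{(1)}(v_j)\psi(\Gamma_1(v_j))$ is unbiased for $\pi_1(\psi)$ and has variance controlled by the squared H\"older-$1/d$ oscillation of $u\mapsto w^{(1)}(H(u))\psi(\Gamma_1(H(u)))$ over cells of width $1/r$, i.e.\ $O(n^{-1}r^{-2/d})=O(n^{-1-\kappa})$ once $r=cn^{2/(d+4)}$; the ratio argument below then delivers the stated bias and MSE for $\hat\pi_1^n$.

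For the inductive step, fix $t\ge2$ and a bounded Lipschitz $\phi$, write $\bar\phi=\phi-\pi_t(\phi)$ so that the target is $0$, and set $\tilde N^n=n^{-1}\sum_{k=1}^{s}\sum_{\ell=1}^{r}w^{(t)}(A_k,V_{k\ell})\,\bar\phi(\Gamma_t(A_k,V_{k\ell}))$, where $A_1,\dots,A_s$ are the Hilbert-curve stratified-resampled ancestors and $V_{k\ell}=H(\tilde V_{k\ell})$ the stratified growth variates. Let $\mathcal F_{t-1}$ be the $\sigma$-field of the step-$(t-1)$ particles, $\mathcal G=\sigma(\mathcal F_{t-1},A_1,\dots,A_s)\supseteq\mathcal F_{t-1}$, and $Q_t\psi(x)=\pi_{t-1}(x)^{-1}\int\pi_t((x,y))\psi(y)\,dy$ the Feynman--Kac update (which by the Markov structure depends on the last coordinate only). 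I decompose
$$\tilde N^n=E_0+E_1+E_2,\qquad E_0=\e[\tilde N^n\mid\mathcal F_{t-1}]=\hat\pi_{t-1}^n(Q_t\bar\phi),\quad E_1=\e[\tilde N^n\mid\mathcal G]-E_0,\quad E_2=\tilde N^n-\e[\tilde N^n\mid\mathcal G].$$
Since $\pi_t(\psi)=\pi_{t-1}(Q_t\psi)/\pi_{t-1}(Q_t1)$ gives $\pi_{t-1}(Q_t\bar\phi)=0$, $E_0$ is exactly the step-$(t-1)$ error of $\hat\pi_{t-1}^n$ on $Q_t\bar\phi$; by assumption~(i), $Q_t\bar\phi$ is bounded and Lipschitz (an integral of products of the bounded Lipschitz $w^{(t)}$ and $\bar\phi\circ\Gamma_t$), with constants growing only geometrically in $t$, so the inductive hypothesis yields $|\e[E_0]|=O(n^{-\frac12-\frac\kappa2})$ and $\e[E_0^2]=O(n^{-1-\kappa})$. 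The term $E_1$ is the centered error of Hilbert-curve stratified resampling of $s$ particles applied to $Q_t\bar\phi$, so $\e[E_1\mid\mathcal F_{t-1}]=0$ and, by Theorem~\ref{theorem:variance_multi_dim}, $\Var[E_1\mid\mathcal F_{t-1}]=O(s^{-1-2/d})$. Conditionally on $\mathcal G$, $E_2$ is a sum of $s$ independent mean-zero blocks, the $k$-th being the centered stratified-growth error of ancestor $A_k$; as $v\mapsto w^{(t)}(A_k,v)\bar\phi(\Gamma_t(A_k,v))$ is Lipschitz and $H$ is H\"older-$1/d$, the cell-oscillation estimate of Theorem~\ref{theorem:variance_multi_dim} gives each block conditional variance $O(r^{-1-2/d})$, so $\Var[E_2\mid\mathcal G]=O(s^{-1}r^{-1-2/d})$. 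The conditionings are nested and $\e[E_1\mid\mathcal F_{t-1}]=\e[E_2\mid\mathcal G]=0$, so the cross terms vanish: $\e[(\tilde N^n)^2]=\e[E_0^2]+\e[E_1^2]+\e[E_2^2]=O(n^{-1-\kappa}+s^{-1-2/d}+s^{-1}r^{-1-2/d})$ and $\e[\tilde N^n]=\e[E_0]$. A short computation shows that $r=cn^{2/(d+4)}$, $s=n/r$, makes the last two terms exactly $O(n^{-1-\kappa})$ and is the unique polynomial scaling equalizing them, whence $\e[(\tilde N^n)^2]=O(n^{-1-\kappa})$ and $|\e[\tilde N^n]|=O(n^{-\frac12-\frac\kappa2})$.

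It remains to divide by the denominator $D^n=n^{-1}\sum_{k,\ell}w^{(t)}(A_k,V_{k\ell})$, since $\hat\pi_t^n(\phi)-\pi_t(\phi)=\tilde N^n/D^n$. Assumption~(ii) enters here: $\e[D^n\mid\mathcal F_{t-1}]=\hat\pi_{t-1}^n(Q_t1)\ge\underline e$ deterministically (an average of $Q_t1\ge\underline e$), and applying the $E_1,E_2$ bounds with test function $1$ gives $\e[(D^n-\hat\pi_{t-1}^n(Q_t1))^2]=O(n^{-1-\kappa})$, so $\p(D^n<\underline e/2)=O(n^{-1-\kappa})$ by Chebyshev. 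On $\{D^n\ge\underline e/2\}$ bound $(\tilde N^n/D^n)^2\le(2/\underline e)^2(\tilde N^n)^2$; on the complement use $|\hat\pi_t^n(\phi)-\pi_t(\phi)|\le2M$; this gives $\mathrm{MSE}_t(\phi)=O(n^{-1-\kappa})$, which also bounds the variance. For the bias, write $1/D^n=1/d_\star-(D^n-d_\star)/(d_\star D^n)$ with $d_\star=\pi_{t-1}(Q_t1)\ge\underline e$; then $\e[\tilde N^n/D^n]=\e[\tilde N^n]/d_\star-d_\star^{-1}\e[\tilde N^n(D^n-d_\star)/D^n]$, the first term being $O(n^{-\frac12-\frac\kappa2})$ and the second $O(n^{-1-\kappa})$ by Cauchy--Schwarz on the good event (using $\e[(\tilde N^n)^2],\e[(D^n-d_\star)^2]=O(n^{-1-\kappa})$) and boundedness on the bad event. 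The induction closes, proving both bounds for the given $\phi$.

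\textbf{Main obstacle.} The difficulty lies less in any single estimate than in making the induction close with self-normalized estimators: one must (a) run everything through $\hat\pi_t^n$ so that the resampling recursion $\e[\,\cdot\mid\mathcal F_{t-1}]=\hat\pi_{t-1}^n(Q_t\,\cdot)$ holds exactly, which forces control of the random denominator $D^n$ via assumption~(ii) and a bad-event split, and (b) check that $Q_t$ maps bounded Lipschitz functions to bounded Lipschitz functions with only $t$-dependent constants so the hypothesis can be reapplied. The quantitative heart is the observation that both the resampling step ($s$ ancestors, via Theorem~\ref{theorem:variance_multi_dim}) and the growth step ($r$ descendants, via the H\"older-$1/d$ modulus of $H$) contribute stratification-type variances, and that $r=cn^{2/(d+4)}$ is precisely the scaling equating them; any improvement would require beating the $m^{-(1+2/d)}$ resampling rate, which Proposition~\ref{proposition:optimality} rules out.
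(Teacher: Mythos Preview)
Your proof is correct and follows essentially the same route as the paper's: both argue by induction on $t$, decompose the error via nested conditioning on the step-$(t-1)$ particles and then on the resampled ancestors (your $E_0,E_1,E_2$ correspond exactly to the paper's recursion on $\tilde\phi_t$ and the $A,B$ split in its Lemma~\ref{lemma:unnormalized-variance}), bound the resampling piece by Theorem~\ref{theorem:variance_multi_dim} and the growth piece by the same Popoviciu-plus-H\"older oscillation estimate, and close the induction by checking that $Q_t$ preserves bounded Lipschitz functions. The only cosmetic differences are that you package the variance decomposition as an explicit martingale increment and handle the random denominator via a Chebyshev bad-event split, whereas the paper isolates the conditional variance in two lemmas and controls the self-normalization by the algebraic identity $\hat\pi_t^n(\phi)-\gamma_t^n(\phi)/n=\hat\pi_t^n(\phi)(1-D^n)$; both are standard and yield the same rates.
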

{\begin{remark}
There are different ways to map generally supported random vectors into $[0,1]^d$. Here we recommend the inverse transform method proposed in \citet{gerber2015sequential}. One significant advantage is that the spatial structure of the particles would be preserved to a large extent by the  method, which is important in the following Hilbert mapping. If the random vector is normally distributed, whitening the data is equivalent to the inverse transform method.
\end{remark}
}
{In dimension $d=2$, our simulations in a stochastic volatility model seem to suggest that the rate is rather tight. The results are shown in Figure~\ref{figure:rate} and the model details are included in Appendix~\ref{sec:sim-details}. We can see that the empirical slope gets closer to the slope $-4/3$ given by Theorem~\ref{thm: smg_convergence} as $n$ gets larger. 
\begin{figure}[H]
    \centering
\includegraphics[width = 0.7\textwidth]{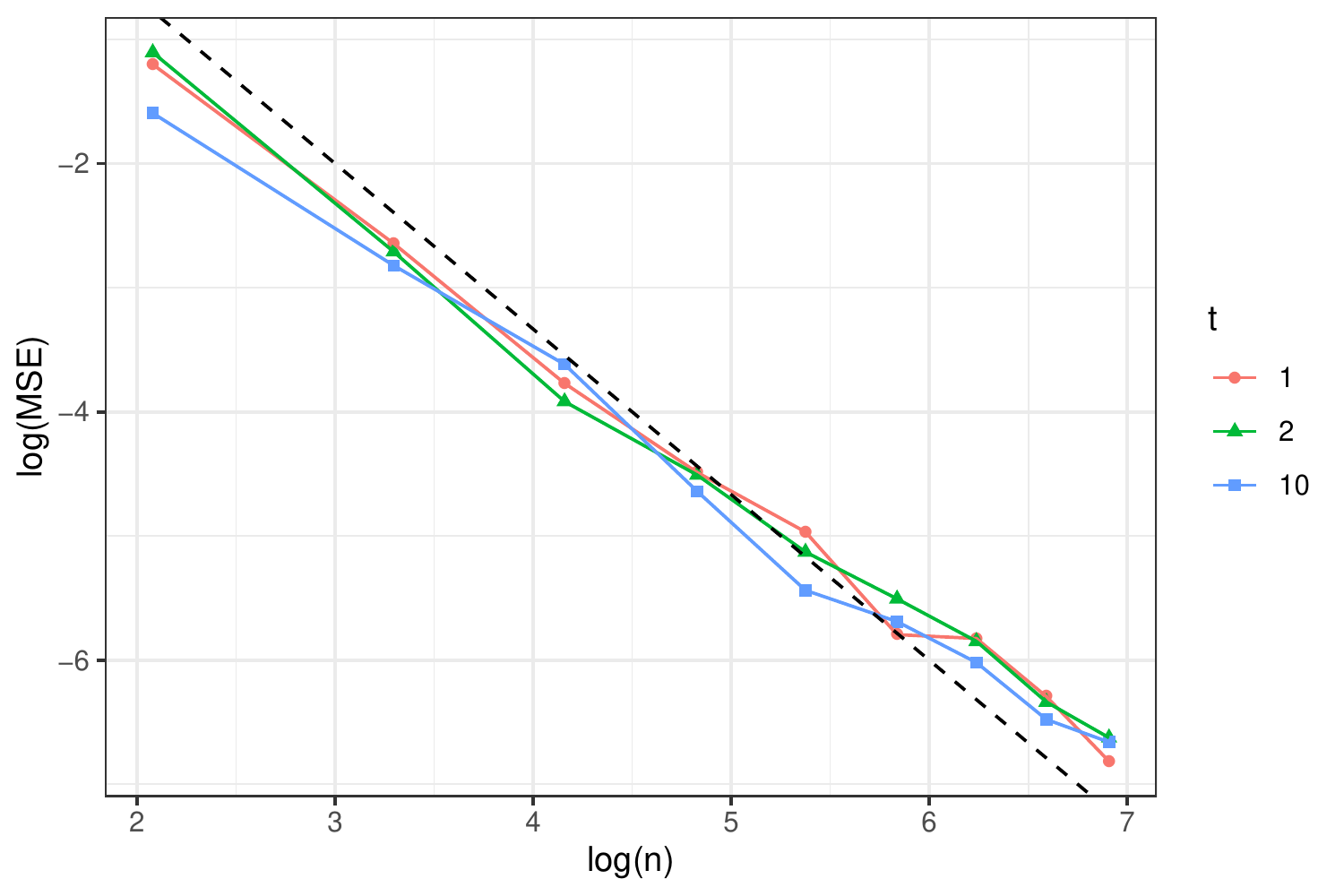}
    \caption{{The MSE versus the number of particles in logarithmic scales. The dashed line is a reference line with a slope of $-4/3$, the rate shown by our theory.}}
    \label{figure:rate}
\end{figure}}

% Empirically, stratified multiple-descendant growth outperforms . See appendix for more details about the simulations.

%\input{sections/numerical}
\section{Discussion}
\label{sec:discussion}
% This paper discussed how stratification can help improve the performance of SMC in both resampling and growth steps. For the resampling step, we proved some optimality results for ordered stratified sampling (in multiple dimensions, ordering is given by the Hilbert space-filling curve). For the growth step, we proposed a way to improve the multiple-descendant growth by stratifying the space with the Hilbert curve and discussed its theoretical properties. We provided numerical evidence to support our method. We wish to conclude by highlighting several important unsolved problems.
% This paper discussed how stratification can help improve the performance of SMC. We first analyzed the resampling step, where we proved some optimality results for ordered stratified sampling (in multiple dimensions, ordering is given by the Hilbert space-filling curve). Then we used these results to analyze SQMC, which integrates resampling and growth, and derive a better convergence rate.
%We wish to conclude by highlighting several interesting future directions.

We have discussed how one might improve the performance of SMC and SQMC via stratified sampling and multi-descendent growth.
The matrix resampling framework in Section~\ref{sec:resampling-matrix} can be generalized to allow  resampled particles to carry unequal weights (such as in the optimal resampling of \citet{fearnhead2003line}). Let $q_{1:m}$ satisfy $q_i\ge0$ and $\sum_{i=1}^mq_i=1$. We can resample according to a matrix $P=(p_{ij})_{m\times n}$ with non-negative entries where $\sum_{j=1}^np_{ij}=1$ and $\sum_{i=1}^mq_ip_{ij}=W_j$ by conditionally independent sampling:
    \begin{equation*}
    X_i^*\mid X,W\sim\text{Multinomial}(1,X,(p_{i1},p_{i2},\dots,p_{in})), i=1,2,\dots,m,
    \end{equation*}
    and then assigning $X_i^*$ the weight $q_i$. We focused on the case with $q_i=1/m$ in this article, but by choosing unequal $q_i$'s, one may further reduce the resampling variance at the cost of less balanced weights. It is unclear what an optimal trade-off might be.

When the resampled particles are not independent from each other conditional on the original particles, the resampling method cannot be represented as a resampling matrix. Systematic resampling \citep{carpenter1999improved} is such an example. All criteria mentioned in Section~\ref{sec:criteria} are also well-defined for non-matrix resampling. It would be interesting to study a broader class of resampling methods including some non-matrix resampling schemes.
    % \item \textbf{What are ``correct'' objectives of resampling?} This article studies how to minimize the ``additional"  randomness brought in by resampling to equal weights. As  discussed in Section~\ref{sec:weighted-resampling}, we may resample to just obtain a less variable weight distribution.  We have discussed several criteria as measurements of randomness, but %But, there is no essential reason why one should not let the resampled particles unequally weighted. For example, one may sample the particles $X_i$ with probability proportional to $\sqrt{W_i}$, so that the resampled particles still carry the weight $\sqrt{W_i}$. In this more general case, 
    % which measurement of randomness is more ``appropriate''? 
    % This is related to questions regarding why and how resampling helps and
    % whether there are other criteria that may help better steer the resampling. Answering these questions in a rigorous manner may help us design better objectives for resampling and find alternative ways to tackle the weight degeneracy problem.
    % \item \textbf{How to integrate growth and resampling?} Most of the theoretical results in this paper are within either the resampling step or the growth step. How to take into account the fact that the two steps are intertwined is an important open problem.

Finally, it will be interesting to see if the tools in this paper can guide the choice of low-discrepancy sets or be generalized to analyze other existing low-discrepancy sets more commonly used in practice and show they achieve the same or better convergence rates. {In fact, \citet{gerber2015sequential} has shown by simulations that SQMC can significantly outperform SMC. We believe that SQMC with the Sobol sequence may have a better performance than multiple-descendent growth in practice based on our preliminary simulations.} It was conjectured that the optimal convergence rate {of SQMC} can reach $O(n^{-1-2/d})$ \citep{gerber2015sequential}.
%, which would mean that the growth step brings in no more error than the resampling step. 

\section*{Acknowledgement}
Y.~L. and W.~W. contributed equally and are listed in alphabetical order. Y.~L. is supported by China Scholarship Council. The research is partly supported by the Natural Science Foundation of China (Grant 11401338, KD PI), Beijing Academy of Artificial Intelligence (Supporting Grant, KD PI), and the National Science Foundation of USA (DMS-1712714 and DMS-1903139, JSL PI). The authors would like to thank Pierre Jacob for useful discussions on particle filters and optimal transport.

\bibliography{refs}
\newpage
\appendix
\section{Proofs}
\label{sec:proofs}

\begin{proof}[Proof of Lemma~\ref{lemma:unique_of_sm}]
Suppose $P = (p_{ij})_{m\times n}$ and $Q = (q_{ij})_{m \times n}$ are both eligible staircase matrices. If $p_{11} \neq q_{11}$, without loss of generality, assume $p_{11} < q_{11}$, then $\sum_{j=2}^n p_{1j} = r_1 - p_{11} > r_1 - q_{11} \geq 0$. By condition (1) in the definition of staircase matrix, $p_{12}>0$. This actually implies that $p_{i1} = 0$ for all $i > 1$. However, $p_{11}  = \sum_{i=1}^m p_{i1} = \sum_{i=1}^m q_{i1} \geq q_{11} > p_{11}$, which is a contradiction.

Then consider $p_{12}$ and $q_{12}$, suppose $0 \leq p_{12}<q_{12}$, then $\sum_{j=3}^n = p_{1j} = r_1 - p_{11} - p_{12} >r_1 - q_{11} - q_{12} \geq 0$. By condition (1) in the definition of staircase matrix, $p_{13}>0$. This implies that $p_{i2} =0$ for all $i>1$. Similarly, $p_{12} = \sum_{i=1}^m p_{i2} = \sum_{i=1}^m q_{i1} \geq q_{12} > p_{12}$, which is a contradiction.
\vspace{1em}
Similarly, we can prove that $p_{1j}= q_{1j}$ for each $j = 1,2, \cdots, n$. By induction, $P = Q$.
\end{proof}

\begin{proof}[Proof of Theorem~\ref{theorem: 1d-resampling}]
% We have proved that minimizing $\e_P\left[\left(\dfrac{1}{n}\sum_{i=1}^n \tilde{X}_i\right)^2|\mathbf{X}\right]$ is equivalent to maximizing $x^\T P^\T Px$. 
    First, we prove the following lemma.
\begin{lemma}\label{lemma:quad}
Suppose $P$ is an $m$ by $n$ matrix with $m,n>2$, $\sum_{i=1}^n p_{ij} >0$ for all $j$, and $\sum_{j=1}^n p_{ij} >0$ for all $i$, then in Definition~\ref{def:stair-mat}, $(2)$ implies $(1)$.
\end{lemma}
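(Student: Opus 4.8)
The plan is to argue directly by contradiction, using the positivity of the marginals to populate any ``gap'' in a row or column and then tripping condition~(2) with a carefully chosen quadruplet. So assume condition~(2) of Definition~\ref{def:stair-mat} holds and that every row sum and every column sum of $P$ is strictly positive, yet condition~(1) fails. It suffices to treat the ``row'' part of~(1); the ``column'' part will follow by transposing, as explained below. Thus suppose that in some row $i$ there are indices $j_1<j<j_2$ (possible since $n>2$) with $p_{ij_1}\neq 0$ and $p_{ij_2}\neq 0$ but $p_{ij}=0$.

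The first step is to locate a nonzero entry in the offending column $j$. Since $\sum_{i'=1}^m p_{i'j}>0$ and $p_{ij}=0$, there exists $i'\neq i$ with $p_{i'j}\neq 0$. Now split on the position of $i'$. If $i'<i$, apply condition~(2) to the quadruplet $(i',j_1,i,j)$: here $i'<i$ and $j_1<j$, so~(2) forces $p_{i'j}=0$ or $p_{ij_1}=0$, contradicting $p_{i'j}\neq 0$ and $p_{ij_1}\neq 0$. If $i'>i$, apply condition~(2) to the quadruplet $(i,j,i',j_2)$: here $i<i'$ and $j<j_2$, so~(2) forces $p_{ij_2}=0$ or $p_{i'j}=0$, again a contradiction. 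Hence no such interior zero can occur, i.e., the nonzero entries of each row are consecutive.

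For the columns, observe that the configuration forbidden by condition~(2) --- a pair of nonzero entries, one lying strictly above and strictly to the right of the other --- is invariant under transposition of $P$, while transposition swaps the hypotheses ``all row sums positive'' and ``all column sums positive.'' Applying the previous paragraph to $P^{\top}$ then yields consecutiveness of the nonzero entries in each column, completing the verification of~(1). I do not expect any genuine obstacle here: the argument is elementary. The only care required is the bookkeeping of checking that each chosen quadruplet really satisfies the strict inequalities $i<k$, $j<l$ needed to invoke condition~(2) (which is exactly where $j_1<j<j_2$, and hence $n>2$, is used), together with the transpose-symmetry observation that spares us a separate argument for columns.
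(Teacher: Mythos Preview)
Your proof is correct and follows essentially the same line as the paper's: both locate a nonzero entry $p_{i'j}$ in the gap column via the positive column sum, then apply condition~(2) to the quadruplet $(i',j_1,i,j)$ or $(i,j,i',j_2)$ according as $i'<i$ or $i'>i$. Your transpose-symmetry remark for the column case is a mild elaboration of the paper's ``the same proof applies to the columns.''
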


\begin{proof}[Proof of Lemma~\ref{lemma:quad}]
We consider the rows, and the same proof applies to the columns. Suppose $p_{ij_1} \neq 0$ and $p_{ij_2}\neq 0$, $j_1 < j_2$, for $j$ such that $j_1 < j < j_2$, if $p_{ij} = 0$, because $\sum_{s=1} p_{sj}>0$, there is a $k$ such that $p_{kj} >0$. If $k < i$, then $(k, j_1, i, j)$ is an ineligible quadruplet that contradicts (2). If $k > i$, then $(i, j, k, j_2)$ is an ineligible quadruplet that contradicts (2). %In either case, there is contradiction.
\end{proof}
\begin{enumerate}
    \item[(i)] \text{Conditional variance.} 
    
    Suppose $P$ maximizes $t(P) = X^\T P^\T PX$ and $\sum_{j=1}^n p_{ij}X_j$ is ascending with respect to $i$ (note that permutation of rows in $P$ doesn't change the value of $X^\T P^\T PX$).
Consider a quadruplet $(i,j,k,l)$ such that $i<k$ and $j<l$. If $p_{il} >0$ and $p_{kj} > 0$, set $\alpha = \min\{p_{il}, p_{kj}\}>0$, then update the entries of $P$ as:
\begin{align*}
p_{ij} &\leftarrow p_{ij}+\alpha & p_{il} &\leftarrow p_{il}-\alpha\\
p_{kj} &\leftarrow p_{kj}-\alpha & p_{kl} &\leftarrow p_{kl} + \alpha
\end{align*}
We name the updated weight matrix as $P'$, then 
\begin{align*}
    t(P') - t(P) &= (\sum_{s=1}^n X_s p_{is} + \alpha(X_j-X_l))^2 + (\sum_{s = 1}^n X_s p_{ks} + \alpha(-X_j + X_l))^2 - \sum_{s=1}^n (X_s p_{is})^2 - \sum_{s=1}^n (X_s p_{ks})^2\\
    & = 2\alpha^2(X_j -X_l)^2 + 2\alpha(X_j -X_l)(\sum_{s=1}^n X_sp_{is} - \sum_{s=1}^n X_sp_{ks})>0,
\end{align*}
since $X_j < X_l$ and $\sum_{s=1}^n X_sp_{is} \leq \sum_{s=1}^n X_sp_{ks}$. This would contradict the fact that $P$ maximizes $t(P)$. Hence, by Lemma~\ref{lemma:quad}, $P$ is a staircase matrix.
    
    \item[(ii)] \text{Expected squared energy distance}.
    
    Note that the squared energy distance admits an explicit expression in one dimension. By some algebra, we find that Lemma~\ref{lemma:edist-equi} enables us to convert the problem of minimizing expected squared energy distance to a simpler problem.
\begin{lemma}
In the setting of Theorem~\ref{theorem: 1d-resampling}, the solution to the following optimization problems minimizes the expected squared energy distance:
\begin{equation*}
\argmax_{P\in\mathcal P_{m,W}}\sum_{k=1}^{n-1}\left[ (X_{k+1}-X_k)\sum_{i=1}^m\left(\sum_{j=1}^kp_{ij}\right)^2\right].
\end{equation*}
\label{lemma:edist-equi}
\end{lemma}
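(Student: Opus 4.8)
The plan is to show that $\e_P\!\left[D^2(\tilde\p,\p)\right]$ equals a quantity independent of $P$ minus a positive multiple of the objective $\sum_{k=1}^{n-1}(X_{k+1}-X_k)\sum_{i=1}^m\bigl(\sum_{j=1}^k p_{ij}\bigr)^2$, so that minimizing the expected squared energy distance coincides exactly with the stated $\argmax$. Starting from the one-dimensional form $D^2(\tilde\p,\p)=2\,\e[|Y_1-Y_2|]-\e[|Y_1-Y_1'|]-\e[|Y_2-Y_2'|]$ --- where, conditionally on $X,W$ and the resampled configuration, $Y_1,Y_1'$ are i.i.d.\ $\tilde\p$, $Y_2,Y_2'$ are i.i.d.\ $\p$, and all four are mutually independent --- I first observe that $\e[|Y_2-Y_2'|]=\sum_{j,k}W_jW_k|X_j-X_k|$ has nothing to do with $P$. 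For the cross term, $\e\bigl[|Y_1-Y_2|\mid X,W,\tilde X\bigr]=m^{-1}\sum_{i}\sum_j W_j|\tilde X_i-X_j|$, and averaging over the resampling randomness and using the column constraint $\sum_i p_{ij}=mW_j$ (the unbiasedness recorded in Section~\ref{sec:resampling-matrix}) collapses it to $\sum_{j,k}W_jW_k|X_j-X_k|$ as well. Hence $\e_P[D^2(\tilde\p,\p)]=\sum_{j,k}W_jW_k|X_j-X_k|-\e_P[|Y_1-Y_1'|]$, and it remains to \emph{maximize} $\e_P[|Y_1-Y_1'|]$ over $P\in\mathcal P_{m,W}$.

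For the second step I would write $\e_P[|Y_1-Y_1'|]=m^{-2}\sum_{i,i'=1}^m\e_P[|\tilde X_i-\tilde X_{i'}|]$; the diagonal terms vanish, and for $i\neq i'$ the resampled particles are conditionally independent given $X,W$, so $\e_P[|\tilde X_i-\tilde X_{i'}|]=\sum_{j,k}p_{ij}p_{i'k}|X_j-X_k|$. Since $X_1<\cdots<X_n$, the telescoping identity $|X_j-X_k|=\sum_{\ell=1}^{n-1}(X_{\ell+1}-X_\ell)\,\ind\{\min(j,k)\le\ell<\max(j,k)\}$ holds. Substituting it, writing $a_{i\ell}=\sum_{j\le\ell}p_{ij}$, and using the row constraint $\sum_j p_{ij}=1$ turns the inner double sum into $a_{i\ell}(1-a_{i'\ell})+(1-a_{i\ell})a_{i'\ell}=a_{i\ell}+a_{i'\ell}-2a_{i\ell}a_{i'\ell}$. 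Summing over $i\neq i'$ and using $\sum_i a_{i\ell}=m\sum_{j\le\ell}W_j$ (a $P$-free constant, again by the column constraint), every contribution is constant in $P$ except $2\sum_i a_{i\ell}^2$. Collecting, $\e_P[D^2(\tilde\p,\p)]=c_{X,W}-\tfrac{2}{m^2}\sum_{\ell=1}^{n-1}(X_{\ell+1}-X_\ell)\sum_{i=1}^m\bigl(\sum_{j=1}^\ell p_{ij}\bigr)^2$ for a constant $c_{X,W}$, which is precisely the assertion.

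This is almost all bookkeeping; the two points deserving care are that the $i=i'$ terms in $\e_P[|Y_1-Y_1'|]$ drop out (so that conditional independence of distinct resampled particles may legitimately be invoked, even though $\p$ and $\tilde\p$ are atomic), and the telescoping-and-partial-sums device, which is exactly what converts the pairwise distances into a sum that is separable across the gaps $X_{\ell+1}-X_\ell$ and surfaces the cumulative weights $\sum_{j\le k}p_{ij}$ that the next lemma will optimize gap by gap. I do not anticipate a genuine obstacle here: the substantive work is postponed to the subsequent step, which must solve $\argmax_{P\in\mathcal P_{m,W}}\sum_k(X_{k+1}-X_k)\sum_i\bigl(\sum_{j\le k}p_{ij}\bigr)^2$ and recognize the optimizer as the stratified resampling matrix.
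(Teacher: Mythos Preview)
Your argument is correct and arrives at exactly the same reduction as the paper, but by a slightly different route. The paper invokes the one-dimensional Cram\'er--Sz\'ekely identity $D^2(\p,\tilde\p)=2\int(F_\p-F_{\tilde\p})^2\di{x}$, then computes $\e[F_{\tilde\p}(x)^2\mid X,W]$ on each interval $[X_k,X_{k+1})$ to isolate the $P$-dependent part $\sum_i(\sum_{j\le k}p_{ij})^2$. You instead work directly from the three-term definition of $D^2$, use unbiasedness to kill the cross term $2\e|Y_1-Y_2|$ (the paper uses unbiasedness in the equivalent guise $\e F_{\tilde\p}=F_\p$), and recover the same partial sums via the telescoping $|X_j-X_k|=\sum_\ell(X_{\ell+1}-X_\ell)\ind\{\min(j,k)\le\ell<\max(j,k)\}$. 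The two computations are really the same mathematics---your telescoping is a discrete in-line derivation of the CDF-integral identity---so neither gains real leverage over the other; your version is self-contained, the paper's is shorter once the identity is cited.
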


Back to the proof the theorem, let $P^\text{SR}_{m,W}=(p^*_{ij})$ be the ordered stratified resampling matrix. We will prove that for any $k$ and any $P=(p_{ij})\in\mathcal P_{m,W}$,
\begin{equation*}
\sum_{i=1}^m\left(\sum_{j=1}^kp^*_{ij}\right)^2\ge\sum_{i=1}^m\left(\sum_{j=1}^kp_{ij}\right)^2.
\end{equation*}
The result then follows from Lemma~\ref{lemma:edist-equi}. Since $\sum_{i=1}^m\left(\sum_{j=1}^kp_{ij}\right)=m\sum_{j=1}^kW_j$ and $0\le\sum_{j=1}^kp_{ij}\le1$, the sum of squares attains its maximum when $[m\sum_{j=1}^kW_j]$ of them are $1$, one of them is $m\sum_{j=1}^kW_j-[m\sum_{j=1}^kW_j]$, and the rest are $0$. It can be easily checked that $(p^*_{ij})$ satisfies this condition and thus solves the optimization problem.

\begin{proof}[Proof of Lemma~\ref{lemma:edist-equi}]
Let $d(\p,\tilde\p)=\int_{-\infty}^\infty(F_\p(x)-F_{\tilde\p}(x))^2\di{x}$, which is equal to half the squared energy distance \citep{szekely2003statistics}
\begin{equation*}
\e|X-Y|-\frac{\e|X-X'|+\e|Y-Y'|}{2},
\end{equation*}
with $X,X',Y,Y'$ independent, $X,X'$ coming from $\p$ and $Y,Y'$ coming from $\tilde\p$. Since the $X_j$'s are ordered as $X_1<X_2<\cdots<X_n$, we have
\begin{equation}
\begin{aligned}
\e[d(\p,\tilde\p)\mid X,W]&=\int_{-\infty}^\infty\e[(F_\p(x)-F_{\tilde\p}(x))^2\mid X,W]\di{x}\\
&=\int_{-\infty}^\infty(\e[F_{\tilde\p}(x)^2\mid X,W]-F_\p(x)^2)\di{x}.
\end{aligned}
\label{equation:energy-dist}
\end{equation}
Note that
\begin{equation*}
\begin{aligned}
\e[F_{\tilde\p}(x)^2\mid X,W]&=\frac1{m^2}\e[(\#\{i:\tilde{X}_i\le x\})^2\mid X,W]\\
&=\frac1{m^2}\sum_{i=1}^m\sum_{j=1}^kp_{ij}+\frac1{m^2}\sum_{i\ne l}\left(\sum_{j=1}^kp_{ij}\right)\left(\sum_{j=1}^kp_{l j}\right)\\
&=\underbrace{\frac1{m}\sum_{j=1}^kW_j}_{\text{constant}}+\frac1{m^2}\sum_{i\ne l}\left(\sum_{j=1}^kp_{ij}\right)\left(\sum_{j=1}^kp_{l j}\right), X_k\le x<X_{k+1}.
\end{aligned}
\end{equation*}
Minimizing equation~\eqref{equation:energy-dist} now becomes minimizing
\begin{equation*}
\begin{aligned}
&\sum_{k=1}^{n-1}(X_{k+1}-X_k)\left[\sum_{i\ne l}\left(\sum_{j=1}^kp_{ij}\right)\left(\sum_{j=1}^kp_{l j}\right)\right]\\
&=\sum_{k=1}^{n-1}(X_{k+1}-X_k)\left\{\left[\sum_{i=1}^m\left(\sum_{j=1}^kp_{ij}\right)\right]^2-\left[\sum_{i=1}^m\left(\sum_{j=1}^kp_{ij}\right)^2\right]\right\}\\
&=\sum_{k=1}^{n-1}(X_{k+1}-X_k)\left\{\left[m\left(\sum_{j=1}^kW_j\right)\right]^2-\left[\sum_{i=1}^m\left(\sum_{j=1}^kp_{ij}\right)^2\right]\right\},
\end{aligned}
\end{equation*}
which, after discarding constants, simplifies to maximizing
\begin{equation*}
\sum_{k=1}^{n-1}(X_{k+1}-X_k)\left[\sum_{i=1}^m\left(\sum_{j=1}^kp_{ij}\right)^2\right].
\end{equation*}
\end{proof}

\item[(iii)] \text{Earth mover's distance.} 

Let $t(P)=\sum_{i=1}^m \sum_{j=1}^n p_{ij}\ell(Y_i - X_j)$. Let $P$ be the matrix that minimizes $t(P)$. Consider a quadruplet $(i,j,k,l)$ such that $i<k$ and $j<l$. If $p_{il} >0$ and $p_{kj} > 0$, set $\alpha = \min\{p_{il}, p_{kj}\}>0$, then update the entries of $P$ as:
\begin{align*}
p_{ij} &\leftarrow p_{ij}+\alpha & p_{il} &\leftarrow p_{il}-\alpha\\
p_{kj} &\leftarrow p_{kj}-\alpha & p_{kl} &\leftarrow p_{kl} + \alpha
\end{align*}
We name the updated weight matrix as $P'$, then 
\begin{align*}
    t(P') - t(P) &= \alpha(\ell(Y_i-X_j)+\ell(Y_k-X_l)-\ell(Y_i-X_l)-\ell(Y_k-X_j)).
\end{align*}
Since $\ell$ is convex and
\begin{equation*}
\begin{aligned}
(Y_i-X_j)+(Y_k-X_l)&=(Y_i-X_l)+(Y_k-X_j)\\
|(Y_i-X_j)-(Y_k-X_l)|&<|(Y_i-X_l)-(Y_k-X_j)|\\
\end{aligned}
\end{equation*}
we have
\begin{equation*}
\ell(Y_i-X_j)+\ell(Y_k-X_l)<\ell(Y_i-X_l)+\ell(Y_k-X_j)),
\end{equation*}
so $t(P')< t(P)$. This would contradict the fact that $P$ is the minimizer, so such a quadruplet does not exist. By Lemma~\ref{lemma:quad}, the solution $P$ is a staircase matrix.

\end{enumerate}
\end{proof}

\begin{proof}[Proof of Theorem~\ref{theorem:variance_1d}]
Without loss of generality, suppose $X_1 < X_2 < \cdots < X_n$ and
% note that
% \begin{equation}
% \begin{split}
% \Var_P \left[\dfrac{1}{N}\sum_{i=1}^N X_i^*|\mathbf{X}\right]
% &= \dfrac{1}{N^2}\sum_{i=1}^N \Var\left[X_i^*|\mathbf{X}\right]\\
% \end{split}
% \end{equation}
% For multinomial resampling, the conditional variance is 
% $$\dfrac{1}{N^2} \sum_{i=1}^N \Var\left[X_i^*|\mathbf{X}\right] =\dfrac{1}{N} \left(\sum_{i=1}^N w_iX_i^2 - (\sum_{i=1}^N w_iX_i)^2\right)$$
$P$ is a staircase weight matrix corresponding to stratified resampling. Each {$\tilde X_i$} can only take values in $X_{il}, X_{il+1}, \cdots, X_{ir}$, with
%\begin{equation*}
%X_1 = X_{1l} \leq \cdots \leq X_{i-1, r} \leq  X_{il} \leq X_{ir} \leq X_{i+1, l} \leq \cdots \leq X_{nr} = X_n.
%\end{equation*}
{
\[
    X_{il} \le X_{ir} \text{  and  } X_{ir} \le X_{i+1,l},  \ \ \mbox{for $1\le i \le m-1$.}
\]
}

Hence, 
\begin{equation*}
\begin{split}
\Var\left[\dfrac{1}{m}\sum_{i=1}^m \phi(\tilde X_i)\mid{X},W\right]
&= \dfrac{1}{m^2}\sum_{i=1}^m \Var\left[\phi(\tilde X_i)\mid{X},W\right]\\
&\leq 
\dfrac{1}{m^2} \sum_{i=1}^m \dfrac{1}{4} \max_{x,y\in[X_{ir},X_{il}]}(\phi(x)-\phi(y))^2\\
&\quad\quad\quad\quad\quad\quad\quad\quad\quad\text{ (Popoviciu's inequality on variances{, Lemma~\ref{lemma:popoviciu})}}\\
&\leq 
\dfrac{1}{m^2} \sum_{i=1}^m \dfrac{1}{4}\max_{x,y\in[X_{ir},X_{il}]} L_\phi^2(x-y)^2
=\dfrac{L_\phi^2}{4m^2} \sum_{i=1}^m (X_{ir}-X_{il})^2\\
&\leq \dfrac{L_\phi^2}{4m^2}(X_n-X_1) {\color{red} \sum_{i=1}^m(X_{ir}-X_{il})}
 = \dfrac{L_\phi^2}{4m^2}(X_n- X_1)^2.
\end{split}
\end{equation*}
\end{proof}

{
\begin{lemma}[Popoviciu's inequaltity on variances, \citet{popoviciu1935equations}]
Let $M$ and $m$ be the upper bound and lower bound of a random variable $X$, respectively. Then,
$\Var(X) \le (M-m)^2/4.$
\label{lemma:popoviciu}
\end{lemma}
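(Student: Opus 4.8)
The plan is to reduce the bound to the elementary fact that the mean minimises mean squared deviation, together with the observation that every point of $[m,M]$ lies within distance $(M-m)/2$ of the midpoint $c=(M+m)/2$. First I would write $\mu=\E[X]$ and use the decomposition $\E[(X-a)^2]=\Var(X)+(\mu-a)^2$, which shows $\Var(X)=\E[(X-\mu)^2]\le\E[(X-c)^2]$ for the particular choice $a=c$. Since $m\le X\le M$ almost surely, we have $|X-c|\le(M-m)/2$ surely, hence $\E[(X-c)^2]\le(M-m)^2/4$, and the claim follows.

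I would also record an alternative derivation that sidesteps the variational characterisation of the mean: on the support of $X$ one has $(M-X)(X-m)\ge 0$, so expanding and taking expectations gives $\E[X^2]\le(M+m)\mu-Mm$, whence $\Var(X)=\E[X^2]-\mu^2\le(M-\mu)(\mu-m)$. Applying the AM--GM inequality to the two nonnegative numbers $M-\mu$ and $\mu-m$, whose sum is $M-m$, yields $(M-\mu)(\mu-m)\le(M-m)^2/4$. Either route also pins down the equality case: the bound is saturated precisely when $X$ is supported on $\{m,M\}$ with equal mass $1/2$ on each.

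There is no genuinely hard step here --- this is the classical inequality of \citet{popoviciu1935equations}, stated only because it is invoked in the proof of Theorem~\ref{theorem:variance_1d}, where the relevant random variable is $\phi(\tilde X_i)$ conditional on $X,W$, supported on the finite set $\{\phi(X_{il}),\dots,\phi(X_{ir})\}$ and therefore bounded. The one point meriting attention is to use the boundedness hypothesis in the correct place: the estimate $|X-c|\le(M-m)/2$ must hold surely (equivalently, $(M-X)(X-m)\ge 0$ must hold pointwise), after which passing to expectations is immediate and the factor $1/4$ comes out sharp.
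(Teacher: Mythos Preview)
Your proof is correct; both arguments you give are standard and complete. The paper does not actually prove this lemma at all---it merely states it with a citation to \citet{popoviciu1935equations} and then invokes it in the proofs of Theorems~\ref{theorem:variance_1d} and~\ref{theorem:variance_multi_dim}---so there is nothing to compare against.
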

}

\begin{proof}[Proof of Theorem~\ref{theorem:variance_multi_dim}]
First note that $H(x)$ is H\"older continuous with exponent $1/d$, 
\begin{equation*}
\|H(x)- H(y)\| \leq 2\sqrt{d+3} |x- y|^{1/d}.
\end{equation*}
%and by letting $C_2 = max ||\nabla \phi||$ (because $\phi$ is continuously differentiable in $[0, 1]^d$, $C_2$ exists), we have
With Hilbert curve stratified sampling, $\tilde X_i$ can only take values in $X_{il}, X_{il+1}, \cdots, X_{ir}$, with
$$h(X_1) = h(X_{1l}) \leq \cdots \leq h(X_{i-1, r}) \leq  h(X_{il}) \leq h(X_{ir}) \leq h(X_{i+1, l}) \leq \cdots \leq h(X_{nr}) = h(X_n).$$
Note that
\begin{align*}
&\Var_P \left[\dfrac{1}{m}\sum_{i=1}^m \phi(\tilde X_i)\mid{X}\right]= \frac{1}{m^2} \sum_{i=1}^m \Var[\phi(\tilde X_i)\mid{X}]= \dfrac{1}{m^2} \sum_{i=1}^m \Var[\phi(H(h(\tilde{X}_i)))\mid{X}]\\
&\le \dfrac{1}{4m^2} \sum_{i=1}^m \left(\max_{x: h(x) \in [h(X_{il}), h(X_{ir})]} \phi(x) - \min_{x: h(x) \in [h(X_{il}), h(X_{ir})]} \phi(x)\right)^2\text{ (Lemma~\ref{lemma:popoviciu})}\\
&= \dfrac{1}{4m^2} \sum_{i=1}^m \left(\max_{y\in [h(X_{il}), h(X_{ir})]} \phi(H(y)) - \min_{y \in [h(X_{il}), h(X_{ir})]} \phi(H(y))\right)^2\\
%&\leq \dfrac{1}{4n^2} \sum_{i=1}^n \underbrace{\left(\max_{y\in [0,1]} \phi(H(y)) - \min_{y \in [0,1)]} \phi(H(y))\right)}_{\le L_\phi}\left(\max_{y\in [h(X_{il}), h(X_{ir})]} \phi(H(y)) - \min_{y \in [h(X_{il}), h(X_{ir})]} \phi(H(y))\right)\\
%&\le  \dfrac{L_\phi}{4n^2} \sum_{i=1}^n \left(\max_{y\in [h(X_{il}), h(X_{ir})]} \phi(H(y)) - \min_{y \in [h(X_{il}), h(X_{ir})]} \phi(H(y))\right)\\
&= \dfrac{1}{4m^2} \sum_{i=1}^m \max_{y_1,y_2\in[h(X_{il}), h(X_{ir})]}\|\phi(H(y_1))-\phi(H(y_2))\|^2\\
&\leq \dfrac{1}{4m^2} \sum_{i=1}^m \max_{y_1,y_2\in[h(X_{il}), h(X_{ir})]}L_\phi^2\|H(y_1)-H(y_2)\|^2\\
&\leq \dfrac{L_\phi^2}{4m^2} \sum_{i=1}^m\max_{y_1,y_2\in[h(X_{il}), h(X_{ir})]}4(d+3)|y_1-y_2|^{2/d}\\
&=\dfrac{(d+3)L_\phi^2}{m^2} \sum_{i=1}^m(h(X_{ir})-h(X_{il}))^{2/d}\\
&\leq \dfrac{(d+3)L_\phi^2}{m^2}\left[\sum_{i=1}^m ((h(X_{ir})-h(X_{il}))^{2/d})^{d/2}\right]^{2/d}m^{1-2/d}\text{ (H\"older inequality)}\\
&= \dfrac{(d+3)L_\phi^{2}m^{1-2/d}}{m^2}(h(X_m)-h(X_1))^{2/d}\le\dfrac{(d+3)L_\phi^2}{m^{1+2/d}}.
\end{align*}
%where $x_{i1} = \max_{x: h(x) \in [h(X_{il}), h(X_{ir})]} \phi(x)$, $x_{i2} = \min_{x: h(x) \in [h(X_{il}), h(X_{ir})]} \phi(x)$ and $\xi_i\in {R}^d$ is a point on the segment from $x_{i1}$ to $x_{i2}$.

% \begin{align*}
% &\dfrac{C_1}{4n^2} \sum_{i=1}^n \nabla \phi(\xi_i)^T\left(H(h(x_{i1})) - H(h(x_{i2}))\right)\\
% \leq &\dfrac{2\sqrt{d+3}C_1C_2}{4n^2}\sum_{i=1}^n ||h(x_{i1})-h(x_{i2})||^{1/d}\\
% \leq &\dfrac{\sqrt{d+3}C_1C_2}{2n^2}\sum_{i=1}^n (h(X_{ir})-h(X_{il}))^{1/d}\\
% \leq &\dfrac{\sqrt{d+3}C_1C_2}{2n^2}\left[\sum_{i=1}^n ((h(X_{ir})-h(X_{il}))^{1/d})^d\right]^{1/d}n^{1-1/d}\text{ (H\"older inequality)}\\
% = &\dfrac{\sqrt{d+3}C_1C_2}{2n^{1+1/d}}(h(X_n)-h(X_1))^{1/d}
% \end{align*}
% Hence, we concludes that 
% $$\Var_P \left[\dfrac{1}{n}\sum_{i=1}^n \phi(\tilde X_i)|\mathbf{X}\right] = \mathcal{O}(\dfrac{1}{n^{1+1/d}})$$
\end{proof}

\begin{proof}[Proof of Proposition~\ref{proposition:optimality}]
We will prove that for all $n=2^{kd}$, where $k$ is a positive integer and $kd>3$, there exists $\phi\in\Phi_d$, $W$ and $X$ such that $$\Var_P(\frac1n\sum_{i=1}^n\phi(\tilde X_i)\mid X,W)\ge\frac1{27d}\frac{1}{n^{1+2/d}}.$$
Let
$$\mathcal{L}_k = \left\{0, \dfrac{1}{2^k}, \cdots, \dfrac{2^k-1}{2^k}\right\}^d$$ 
be an equally spaced grid of $[0,1]^d$. Let ${X} = (X_1, X_2, \cdots, X_{2^{dk}})$ be the sequence of points in $\mathcal{L}_k$ ordered by $o$.
%According to the definition of Hilbert curve,  $|X_i - X_{i+1}| = \dfrac{1}{2^k}$, for $i = 0, 1, \cdots, 2^{dk} -1$.
Suppose
$${W}=(W_1,\cdots, W_{2^{dk}}) \propto (\underbrace{1, \cdots, 1}_{2^{kd - 1}}, \underbrace{2, \cdots, 2}_{2^{kd-1}}).$$
%where the weighted samples $\{(X_j, %W_j)\}_{j=1}^{2^{dk}}$ would approximate
%$$\pi(x) \propto \begin{cases}
%, & h(x) < 1/2, \\
%2, & h(x) \ge 1/2.
%\end{cases}
%$$
The stratified resampling matrix is
$${P} = \diag\{\underbrace{{P_1}, \cdots, {P_1}}_{(2^{dk-1}-2)/3}, {P_2}, \underbrace{{P_3}, \cdots, {P_3}}_{(2^{dk-1}-2)/3}\},$$
where
%\begin{align*}
\[
{P_1} = \begin{pmatrix}
2/3 & 1/3 & \\
&1/3 & 2/3 
\end{pmatrix},  \ \ 
{P_2} = \begin{pmatrix}
2/3 & 1/3 & &\\
& 1/3 & 2/3 &\\
& & 2/3 & 1/3 \\
& & &1
\end{pmatrix},  \  \
{P_3} = \begin{pmatrix}
1 & & \\
1/3 & 2/3 &\\
& 2/3& 1/3 \\
& & 1
\end{pmatrix}.
\]
%\end{align*}

Let $\phi_{k}(X = (x_1, \cdots, x_d)) = x_k$ be the function that returns the $k$th coordinate, $k = 1,2,\cdots, d$. It is easy to see that $\phi_k$ is $1$-Lipschitz. We prove a simple lemma below.
\begin{lemma}
If $Z$ is a random variable defined by
$$Z = \begin{cases}
x, & \text{ with probability } 1/3, \\
y & \text{ with probability } 2/3,
\end{cases}$$
where $x$ and $y$ are distinct points in $\mathcal L_k$, then $\Var(\phi_k(Z)) \geq \dfrac{2^{-2k+1}}{9}$ for at least one $k\in\{1,2,\dots,d\}$.
\label{lemma:variance}
\end{lemma}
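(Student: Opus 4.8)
The plan is to exploit the two-point structure of $Z$ together with the spacing of the lattice $\mathcal{L}_k$. First I would note that, since $x$ and $y$ are distinct points of $\mathcal{L}_k = \{0, 2^{-k}, \dots, (2^k-1)2^{-k}\}^d$, there is some coordinate index $j \in \{1,\dots,d\}$ at which they differ; and because every coordinate of a lattice point is an integer multiple of $2^{-k}$, distinctness in that coordinate forces $|x_j - y_j| \ge 2^{-k}$.

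Next I would compute $\Var(\phi_j(Z))$ directly. The random variable $\phi_j(Z)$ equals $x_j$ with probability $1/3$ and $y_j$ with probability $2/3$, so it is an affine image of a Bernoulli random variable with parameter $1/3$, and hence $\Var(\phi_j(Z)) = \frac{1}{3}\cdot\frac{2}{3}\,(x_j - y_j)^2 = \frac{2}{9}(x_j - y_j)^2$. Combining this with the lower bound $|x_j - y_j| \ge 2^{-k}$ gives $\Var(\phi_j(Z)) \ge \frac{2}{9}\cdot 2^{-2k} = \frac{2^{-2k+1}}{9}$, which is exactly the asserted inequality for the coordinate index $j$. Since such a $j$ exists, the lemma follows.

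There is essentially no analytic obstacle here; the argument is a one-line computation once the right coordinate is identified. The only point requiring care is the overloaded use of the symbol $k$: in the displayed bound it is the lattice resolution of $\mathcal{L}_k$ (held fixed throughout), while in the phrase ``for at least one $k$'' it plays the role of the distinguished coordinate that I have called $j$ above. I would also remark that the specific weights $1/3$ and $2/3$ enter only through the factor $p(1-p) = 2/9$, so any nondegenerate two-point law would yield the same conclusion up to the value of that constant; this is what ultimately feeds into the per-block variance lower bounds used in Proposition~\ref{proposition:optimality}.
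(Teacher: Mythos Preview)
Your proof is correct and follows essentially the same approach as the paper's own proof: compute $\Var(\phi_j(Z)) = \tfrac{2}{9}(x_j-y_j)^2$ directly for the two-point law, then use the lattice spacing to get $|x_j-y_j|\ge 2^{-k}$ for some coordinate. Your remark about the overloaded symbol $k$ is well taken; the paper uses $k$ simultaneously for the lattice resolution and for the coordinate index, and your introduction of a separate index $j$ is clearer.
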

\begin{proof}[Proof of Lemma~\ref{lemma:variance}]
By direct calculation, $\Var(\phi_k(Z)) = \dfrac{2}{9}(x_k-y_k)^2$. Since $x\ne y$, at least one $k$ satisfies $|x_k-y_k|\ge2^{-k}$.
\end{proof}

Now the resampling variance is
%\begin{align*}
\begin{eqnarray*}
\sum_{k=1}^d \dfrac{1}{m^2}\Var_P\left[\sum_{i=1}^{m} \phi_k(\tilde{X}_i)\mid {X}, {W}\right] 
&=& \dfrac{1}{m^2}\sum_{i=1}^{m}\sum_{k=1}^d \Var_P\left[\phi_k(\tilde{X}_i)\mid {X}, {W}\right] \\
&\ge& \dfrac{1}{m^2}\sum_{i=1}^{(2^{dk}-4)/3}\sum_{k=1}^d \Var_P\left[\phi_k(\tilde{X}_i)\mid {X}, {W}\right] \\
&\ge& \dfrac{1}{m^2}\sum_{i=1}^{(2^{dk}-4)/3} \dfrac{2^{-2k+1}}{9} 
= \dfrac{1}{2^{2dk}}\dfrac{2^{dk}-4}{3} \dfrac{2^{-2k+1}}{9} \\
&{\ge}& \dfrac{1}{2^{2dk}}\dfrac{2^{dk-1}}{3} \dfrac{2^{-2k+1}}{9}
=  \dfrac{1}{27}m^{-1 - 2/d} \text{ (when $dk\ge3$)} .
\end{eqnarray*}
%end{align*} 
Hence, there exists at least one $k \in \{1,2,\cdots, d\}$, such that 
$$\dfrac{1}{m^2}\Var_P\left[\sum_{i=1}^{m} \phi_k(\tilde{X}_i)\mid {X}, {W}\right] \geq \dfrac{1}{27d}\dfrac{1}{m^{1+2/d}}.$$
\end{proof}

\begin{proof}[Proof of Theorem~\ref{theorem:hc-wasserstein}]
We define a coupling between $Y\sim\p=\sum_{j=1}^nW_j\delta_{X_j}$ and $\tilde Y\sim\tilde\p=\sum_{i=1}^m\frac1m\delta_{\tilde{X}_i}$ by letting $(Y,\tilde Y)=(X_J,\tilde X_I)$, where $P(I=i,J=j)=p_{ij}/m$ and $p_{ij}$ is the $(i,j)$-entry of the Hilbert curve resampling matrix $P$. Recall that with Hilbert curve stratified sampling, $\tilde X_i$ can only take values in $X_{il}, X_{il+1}, \cdots, X_{ir}$, with
%$$h(X_1) = h(X_{1l}) \leq \cdots \leq h(X_{i-1, r}) \leq  h(X_{il}) \leq h(X_{ir}) \leq h(X_{i+1, l}) \leq \cdots \leq h(X_{nr}) = h(X_n).$$
{\color{red}
\begin{equation*}
    h(X_{il}) \le h(X_{ir}) \text{  and  } h(X_{ir}) \le h(X_{i+1,l}), 
\end{equation*}
for $1\le i \le n$.
}
\begin{equation*}
\begin{aligned}
\e[\|Y-\tilde Y\|^p]&=\sum_{i=1}^m\sum_{j=1}^n\frac1mp_{ij}\|\tilde{X}_i-X_j\|^p\\
&\le\frac1m\sum_{i=1}^m\max_{z,z'\in[h(X_{il}),h(X_{ir})]}\|H(z)-H(z')\|^p\\
&\le\frac1m\sum_{i=1}^m(2\sqrt{d+3}(h(X_{ir})-h(X_{il}))^{1/d})^p\\
&\le\left\{
\begin{aligned}
&2^p(d+3)^{p/2}m^{-p/d},&\text{ if }p\le d,\\
&\frac{2^p(d+3)^{p/2}}{m},&\text{ if }p>d.\\
\end{aligned}\right.
\end{aligned}
\end{equation*}
Thus,
\begin{equation*}
W_p(\p^*,\p)\le\frac{2\sqrt{d+3}}{m^{1/\max(p,d)}},\quad a.s.
\end{equation*}
\end{proof}

\begin{proof}[Proof of  Theorem~\ref{thm: smg_convergence}]
We introduce some notations for presentational convenience to reflect the multiple-descendant nature. Let $\tilde X_{k}^{(t-1)}=X^{(t-1)}_{\sigma(U_k,W_{1:n}^{(t-1)})}$ corresponds to the resampled particle (recall the definition of $U_k$ for $U_\text{SMG}^{(t)}$) for $t\ge2$, and let
\begin{equation*}
X_{k\ell}^{(t)}=X_{k(s-1)+\ell}^{(t)}=\left\{
\begin{aligned}
&\Gamma_1(v_{k(s-1)+\ell}), &t=1,\\
&\Gamma_t(\tilde X^{(t-1)}_{k},v_{k(s-1)+\ell}), &2\le t\le T,
\end{aligned}
\right.
\end{equation*}
be the $\ell$th descendant of $\tilde X^{(t-1)}_k $. Similarly, $W_{k\ell}^{(t)}=W_{k(s-1)+\ell}^{(t)}$. See Figure~\ref{figure:multiple-des} for an illustration.

\begin{figure}
    \centering
    \includegraphics[width=.9\textwidth]{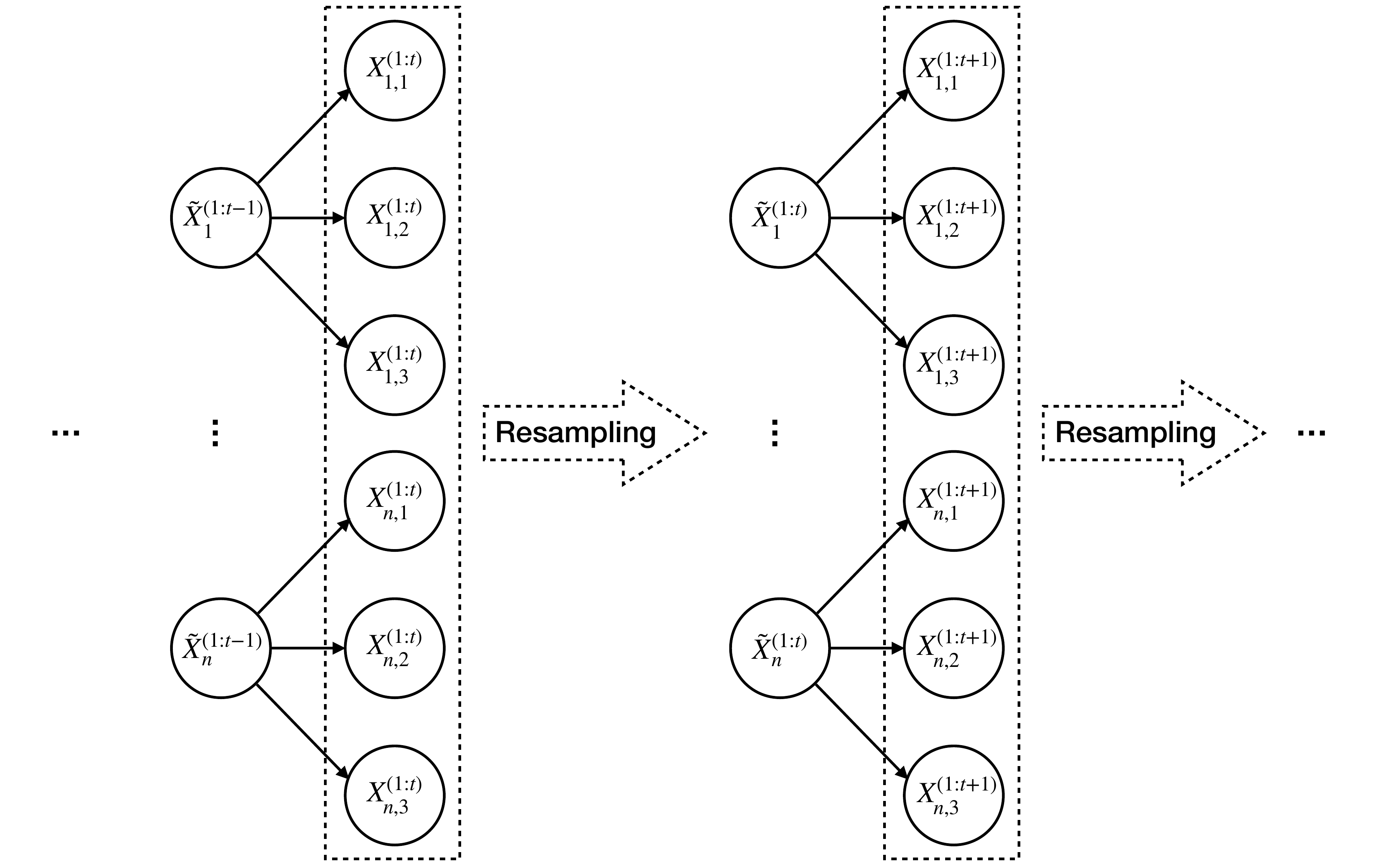}
    \caption{Illustration of multiple-descendant growth.}
    \label{figure:multiple-des}
\end{figure}
We then introduce two lemmas.

\begin{lemma}
Under the assumptions of Theorem~\ref{thm: smg_convergence},
\begin{equation*}
\Var\left[ \dfrac{1}{n}\sum_{k=1}^{s}\sum_{\ell=1}^{r} W^{(t)}_{k\ell} \phi(X^{(t)}_{k\ell})\mid X^{(1:t-1)}, W^{(t-1)} \right]=O(n^{-1-\frac4{d(d+4)}}).
\end{equation*}
\label{lemma:unnormalized-variance}
\end{lemma}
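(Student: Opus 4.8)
The plan is to apply the law of total variance, separating the randomness of the resampling uniforms $U_{1:s}$ (which select the parents $\tilde X^{(t-1)}_k=X^{(t-1)}_{\sigma(U_k,W^{(t-1)}_{1:n})}$) from that of the growth uniforms $\tilde V_{k\ell}$ (which spawn the $r$ descendants in block $k$ via $v_{k\ell}=H(\tilde V_{k\ell})$ and $X^{(t)}_{k\ell}=\Gamma_t(\tilde X^{(t-1)}_k,v_{k\ell})$). Write $\mathcal F=\sigma(X^{(1:t-1)},W^{(t-1)})$ and $\mathcal G=\mathcal F\vee\sigma(U_{1:s})$, and let $S=\frac1n\sum_{k=1}^s\sum_{\ell=1}^r W^{(t)}_{k\ell}\phi(X^{(t)}_{k\ell})$. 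Then $\Var(S\mid\mathcal F)=\e[\Var(S\mid\mathcal G)\mid\mathcal F]+\Var[\e(S\mid\mathcal G)\mid\mathcal F]$, and I would bound the two terms separately, showing each is $O(n^{-1-4/[d(d+4)]})$.

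\emph{Growth term.} Given $\mathcal G$ the $s$ blocks are independent, so $\Var(S\mid\mathcal G)=n^{-2}\sum_{k=1}^s\Var\big(\sum_\ell W^{(t)}_{k\ell}\phi(X^{(t)}_{k\ell})\,\big|\,\mathcal G\big)$. Inside block $k$ the summand equals $\psi_k(H(\tilde V_{k\ell}))$, where $\psi_k(v)=(a\phi)(\Gamma_t(\tilde X^{(t-1)}_k,v))$ is, by assumption~(i) (a product and composition of bounded Lipschitz functions), bounded by $M^2$ and Lipschitz with a constant depending only on $M,L$, uniformly in $k$; and $\tilde V_{k\ell}\sim\Unif[(\ell-1)/r,\ell/r]$ are independent. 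Applying Popoviciu's inequality on each stratum together with the H\"older continuity of $H$ — exactly the estimate used in the proof of Theorem~\ref{theorem:variance_multi_dim}, and even simpler here since all strata have width $1/r$ — gives $\Var(\sum_\ell\psi_k(H(\tilde V_{k\ell}))\mid\mathcal G)\le (d+3)L_{\psi_k}^2 r^{1-2/d}$. Summing over the $s=n/r$ blocks yields $\Var(S\mid\mathcal G)\le C\,r^{1-2/d}s/n^2=C\,r^{-2/d}/n$ with $C=C(M,L,d)$, which is $O(n^{-1-4/[d(d+4)]})$ once $r=cn^{2/(d+4)}$ is substituted.

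\emph{Resampling term.} First I identify $\e(S\mid\mathcal G)$. By the measure-preserving property of the Hilbert curve, $\e[\sum_\ell\psi_k(H(\tilde V_{k\ell}))\mid\mathcal G]=r\int_0^1\psi_k(H(u))\,du=r\int_{[0,1]^d}(a\phi)(\Gamma_t(\tilde X^{(t-1)}_k,v'))\,dv'=:r\,g(\tilde X^{(t-1)}_k)$, so $\e(S\mid\mathcal G)=\frac1s\sum_{k=1}^s g(\tilde X^{(t-1)}_k)$ because $n=sr$. In the state-space model with $g_t=p_x$, the incremental weight function collapses to $a(v)=p_y(y^{(t)}\mid v)$, independent of the path; hence $g$ depends on $\tilde X^{(1:t-1)}_k$ through $\tilde X^{(t-1)}_k$ only, and by assumption~(i) it is a bounded, Lipschitz function on $[0,1]^d$ with constants depending on $M,L$. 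Since $\tilde X^{(t-1)}_1,\dots,\tilde X^{(t-1)}_s$ is precisely $d$-dimensional Hilbert-curve stratified resampling of the Hilbert-ordered particles $(X^{(t-1)}_j,W^{(t-1)}_j)_{j=1}^n$ into $s$ particles, Theorem~\ref{theorem:variance_multi_dim} applied with $m=s$ (and $g$ rescaled into $[0,1]$) gives $\Var\big[\frac1s\sum_k g(\tilde X^{(t-1)}_k)\mid\mathcal F\big]=O(s^{-1-2/d})$. With $r=cn^{2/(d+4)}$ we have $s$ of order $n^{(d+2)/(d+4)}$, and the arithmetic identity $(d+2)^2/[d(d+4)]=1+4/[d(d+4)]$ turns $s^{-1-2/d}=s^{-(d+2)/d}$ into $O(n^{-1-4/[d(d+4)]})$. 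Adding the two terms finishes the proof.

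\emph{Main obstacle.} The step I expect to require the most care is the ``dimension reduction'' in the resampling term: verifying that $g$ — and, uniformly over $k$, each $\psi_k$ — is Lipschitz as a function on the $d$-dimensional space $[0,1]^d$, so that Theorem~\ref{theorem:variance_multi_dim} can be invoked with the $d$-dimensional Hilbert curve that actually orders the particles, rather than merely on the full path space $[0,1]^{d(t-1)}$. This is precisely where the state-space structure (which makes the incremental weight path-independent) and the regularity in assumption~(i) enter; once it is in hand, the two stratified-sampling variance estimates and the exponent bookkeeping induced by $r=cn^{2/(d+4)}$ are routine.
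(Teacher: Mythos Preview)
Your proposal is correct and follows essentially the same route as the paper's proof: the identical law-of-total-variance split into a growth term (bounded via Popoviciu plus the H\"older continuity of $H$ on strata of width $1/r$, giving $O(r^{-2/d}/n)$) and a resampling term (bounded by applying Theorem~\ref{theorem:variance_multi_dim} with $m=s$ to the Lipschitz function $g(x)=\int (a\phi)(\Gamma_t(x,v'))\,dv'$, giving $O(s^{-1-2/d})$), with the choice $r=cn^{2/(d+4)}$ equalizing the two rates. The only cosmetic difference is that you invoke the measure-preserving property of $H$ explicitly to identify $\e(S\mid\mathcal G)$, whereas the paper writes the same integral directly; your identification of the ``dimension reduction'' issue is exactly the point the paper handles by observing that in the state-space model the relevant functions depend on the path only through $X^{(t-1)}$.
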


\begin{lemma}
Under the assumptions of Theorem~\ref{thm: smg_convergence},
\label{lemma:normalized-variance}
\begin{equation*}
\Var\left[ \frac{\dfrac{1}{n}\sum_{k=1}^{s}\sum_{\ell=1}^{r} W^{(t)}_{k\ell} \phi(X^{(t)}_{k\ell})}{\dfrac{1}{n}\sum_{k=1}^{s}\sum_{\ell=1}^{r} W^{(t)}_{k\ell}}\mid X^{(1:t-1)}, W^{(t-1)} \right]=O(n^{-1-\frac4{d(d+4)}}).
\end{equation*}
\end{lemma}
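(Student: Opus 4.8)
The plan is to obtain Lemma~\ref{lemma:normalized-variance} from Lemma~\ref{lemma:unnormalized-variance} by analysing the self-normalized estimator as a ratio of two averages, each of which is bounded and has variance $O(n^{-1-4/[d(d+4)]})$. Throughout, write $N=\frac1n\sum_{k=1}^{s}\sum_{\ell=1}^{r}W^{(t)}_{k\ell}\phi(X^{(t)}_{k\ell})$ and $D=\frac1n\sum_{k=1}^{s}\sum_{\ell=1}^{r}W^{(t)}_{k\ell}$, so that the quantity of interest is $R=N/D$, and let all expectations, variances and probabilities be conditional on $(X^{(1:t-1)},W^{(t-1)})$, which I suppress. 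Applying Lemma~\ref{lemma:unnormalized-variance} to $\phi$ gives $\Var(N)=O(n^{-1-4/[d(d+4)]})$; applying it to the constant function $\phi\equiv1$ (which is $L$-Lipschitz and, since the assumptions force $M\ge1$, bounded in $[-M,M]$) gives $\Var(D)=O(n^{-1-4/[d(d+4)]})$.

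Next I record two elementary facts about $N$ and $D$. First, the weights $W^{(t)}_{k\ell}$ are nonnegative and bounded by $M$ (they are values of the incremental weight function $a(\cdot)$ of assumption (i) at the resampled particles and the coordinates $v_{k\ell}$), so $0\le D\le M$, $|N|\le M^2$, and $R$ is a convex combination of the numbers $\phi(X^{(t)}_{k\ell})\in[-M,M]$; in particular $|R|\le M$ \emph{almost surely}, with no restriction on $D$. Second, $\e[D]$ is bounded below by a constant $c_\star>0$ depending only on $\underline e$ and $M$: using the unbiasedness of stratified resampling (Section~\ref{sec:resampling-matrix}) one computes $\e[D]$ and identifies it with a stratified estimate of the incremental normalizing constant, which is $\ge\underline e$ by assumption (ii).

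The final step is a ratio linearization combined with a case split on whether the random denominator is small. From the identity
\begin{equation*}
R-\frac{\e N}{\e D}=\frac{(N-\e N)\,\e D-(D-\e D)\,\e N}{D\,\e D},
\end{equation*}
on the event $\{D\ge c_\star/2\}$ we bound $\big|R-\e N/\e D\big|\le \frac{2}{c_\star\,\e D}|N-\e N|+\frac{2|\e N|}{c_\star(\e D)^{2}}|D-\e D|$, square, take expectations, and invoke the bounds on $\Var(N)$ and $\Var(D)$ together with $\e D\ge c_\star$ and $|\e N|\le M^2$ to get a contribution of order $n^{-1-4/[d(d+4)]}$. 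On $\{D<c_\star/2\}$ we simply use $\big|R-\e N/\e D\big|\le 2M$, while Chebyshev's inequality gives $\p(D<c_\star/2)\le\p(|D-\e D|>c_\star/2)\le 4\Var(D)/c_\star^2=O(n^{-1-4/[d(d+4)]})$, so this contribution is also of the right order. Adding the two and using $\Var(R)\le\e[(R-\e N/\e D)^{2}]$ completes the proof.

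The main obstacle is the lower bound $\e[D]\ge c_\star>0$: it requires unwinding how the step-$t$ weights are generated from the resampled particles and the coordinates $v_{k\ell}$, and then using unbiasedness of stratified resampling to see that $\e[D]$ coincides with (a stratified estimate of) an incremental normalizing constant controlled from below by $\underline e$ via assumption (ii). Everything else is routine: the two invocations of Lemma~\ref{lemma:unnormalized-variance}, the ratio identity, and the Chebyshev case split. Note that the almost sure bound $|R|\le M$ is exactly what makes the ``bad event'' $\{D<c_\star/2\}$ contribute negligibly without any extra argument.
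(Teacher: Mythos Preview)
Your argument is correct and rests on the same three ingredients as the paper's proof: two applications of Lemma~\ref{lemma:unnormalized-variance} (to $\phi$ and to $\phi\equiv1$), the lower bound $\e[D\mid X,W]\ge\underline e$ coming from assumption~(ii), and the almost-sure bound $|R|\le M$. The packaging, however, is different. The paper avoids your Chebyshev case split by writing
\[
\frac{N}{D}=\frac{N}{e}+\frac{N}{D}\Bigl(1-\frac{D}{e}\Bigr),\qquad e:=\e[D\mid X,W],
\]
so that $\Var(R)\le 2e^{-2}\Var(N)+2\e\bigl[R^2(1-D/e)^2\bigr]\le 2e^{-2}\Var(N)+2M^2e^{-2}\Var(D)$ in one line, with no event $\{D<c_\star/2\}$ to handle separately. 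Your linearize-and-truncate route is a standard alternative and gives the same rate; the paper's identity is just shorter.

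Two small remarks on your write-up. First, the bound $|R-\e N/\e D|\le 2M$ on the bad event needs $|\e N/\e D|\le M$; this follows from $|N|\le M\,D$ pointwise (hence $|\e N|\le M\,\e D$), not merely from $|N|\le M^2$. Second, the claim that ``the assumptions force $M\ge1$'' is not obviously implied by (i)--(ii), but it is harmless: Lemma~\ref{lemma:unnormalized-variance} applies to $\phi\equiv1$ with $\max(M,1)$ in place of $M$, which only affects constants.
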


We prove by induction. For $t=1$,
\begin{equation}
    \begin{aligned}
    &\left( E\left[\frac{\sum_{k=1}^{s}\sum_{\ell=1}^{r} W^{(1)}_{ik} \phi(X^{(1)}_{ik})}{\sum_{k=1}^{s}\sum_{\ell=1}^{r} W^{(1)}_{ik}}- \int_{\mathcal X}\pi_1(x^{(1)})\phi(x^{(1)})d x^{(1)} \right]\right)^2 \\ 
    & = \left( E\left[\frac{\sum_{k=1}^{s}\sum_{\ell=1}^{r} W^{(1)}_{ik} \phi(X^{(1)}_{ik})}{\sum_{k=1}^{s}\sum_{\ell=1}^{r} W^{(1)}_{ik}}- \frac{\sum_{k=1}^{s}\sum_{\ell=1}^{r} W^{(1)}_{ik} \phi(X^{(1)}_{ik})}{sr} \right]\right)^2 \\ 
    & = \left( E\left[\frac{\sum_{k=1}^{s}\sum_{\ell=1}^{r} W^{(1)}_{ik} \phi(X^{(1)}_{ik})}{\sum_{k=1}^{s}\sum_{\ell=1}^{r} W^{(1)}_{ik}}\left(1 - \frac{\sum_{k=1}^{s}\sum_{\ell=1}^{r} W^{(1)}_{ik} }{sr}\right) \right]\right)^2 \\
    & \le E\left[\frac{\sum_{k=1}^{s}\sum_{\ell=1}^{r} W^{(1)}_{ik} \phi(X^{(1)}_{ik})}{\sum_{k=1}^{s}\sum_{\ell=1}^{r} W^{(1)}_{ik}}\left(1 - \frac{\sum_{k=1}^{s}\sum_{\ell=1}^{r} W^{(1)}_{ik} }{sr}\right) \right]^2\\
    & \le M^2\e\left[\left(1 - \frac{\sum_{k=1}^{s}\sum_{\ell=1}^{r} W^{(1)}_{ik} }{sr}\right)^2\right]
     = M^2\Var\left(\frac{\sum_{k=1}^{s}\sum_{\ell=1}^{r} W^{(1)}_{ik} }{sr}\right),
    \end{aligned}
    \label{equation:first-step-bias}
\end{equation}
since $n^{-1}\e\left(\sum_{k=1}^{s}\sum_{\ell=1}^{r} W^{(1)}_{k\ell}\right)=1$. The variance is $O(n^{-1-4/[d(d+4)]})$ by the same analysis as part A in the proof of Lemma~\ref{lemma:unnormalized-variance}.

Now suppose we have proved the cases from $1$ to $t-1$. Note that in state-space models, $h_t(X_{k\ell}^{(t-1)},x)= \pi_t\left(\left(\tilde X_{k\ell}^{(1:t-1)},x\right)\right)/\pi_{t-1}\left(X_{k\ell}^{(1:t-1)}\right)$ is only a function of $x$ and $X_{k\ell}^{(t-1)}$.
\begin{equation}
    \begin{aligned}
\label{equation:unnormalized-bias}
    & \left(\e \left[\frac{\sum_{k=1}^{s}\sum_{\ell=1}^{r} W^{(t)}_{k\ell} \phi(X^{(t)}_{k\ell})}{sr} - \int \pi_{t}(x^{(1:t)})\phi(x^{(t)})dx^{(1:t)}\right]\right)^2 \\\
    & = \Bigg( \dfrac{1}{s}\e \sum_{k=1}^s \e\left[\int_{\mathcal X} \frac{\pi_t\left(\left(\tilde X_k^{(1:t-1)},x\right)\right)}{\pi_{t-1}\left(\tilde{X}^{(1:t-1)}_{k}\right)}\phi(x)\di x\mid X_{k\ell}^{(1:t-1)}, W_{k\ell}^{(t-1)}\right]  \\
    & \quad - \int \frac{\pi_{t}(x^{(1:t)})}{\pi_{t-1}(x^{(1:t-1)})}\pi_{t-1}(x^{(1:t-1)})\phi(x^{(t)})dx^{(1:t)}\Bigg)^2 \\
    & = \left(\e \sum_{k=1}^s\sum_{\ell=1}^r \dfrac{W_{k\ell}^{(t-1)}}{\sum_{k=1}^s\sum_{\ell=1}^r W_{k\ell}^{(t-1)}}\int_{\mathcal X} h_t(X_{k\ell}^{(t-1)},x)\phi(x)\di x - \int \pi_{t-1}(x^{(1:t-1)})h_t(x^{(t-1)},x^{(t)})\phi(x^{(t)})dx^{(1:t)}\right)^2 \\
        & = \left(\e \left[\dfrac{ \sum_{k=1}^s\sum_{\ell=1}^rW_{k\ell}^{(t-1)}\tilde\phi_t(X_{k\ell}^{(t-1)})}{\sum_{k=1}^s\sum_{\ell=1}^r W_{k\ell}^{(t-1)}}\right]- \int \pi_{t-1}(x^{(1:t-1)})\tilde\phi_t(x^{(t-1)})dx^{(1:t-1)}\right)^2=O(n^{-1-\frac4{d(d+4)}}),
    \end{aligned}
\end{equation}
by induction hypothesis, where
{
\begin{equation*}
\tilde\phi_t(x)=\int_{\mathcal X}h_t(x,u)\phi(u)\di u
\end{equation*}
is bounded and Lipschitz since $\phi$ is bounded and $h_t(\cdot,u)$ is bounded and uniformly Lipschitz by assumption.
}

Now we analyze the difference bewteen normalized estimate and unnormalized estimate.
\begin{equation}
    \begin{aligned}
    &\left( E\left[\frac{\sum_{k=1}^{s}\sum_{\ell=1}^{r} W^{(t)}_{k\ell} \phi(X^{(t)}_{k\ell})}{\sum_{k=1}^{s}\sum_{\ell=1}^{r} W^{(t)}_{k\ell}}- \frac{\sum_{k=1}^{s}\sum_{\ell=1}^{r} W^{(t)}_{k\ell} \phi(X^{(t)}_{k\ell})}{sr} \right]\right)^2 \\ 
    & = \left( E\left[\frac{\sum_{k=1}^{s}\sum_{\ell=1}^{r} W^{(t)}_{k\ell} \phi(X^{(t)}_{k\ell})}{\sum_{k=1}^{s}\sum_{\ell=1}^{r} W^{(t)}_{k\ell}}\left(1 - \frac{\sum_{k=1}^{s}\sum_{\ell=1}^{r} W^{(t)}_{k\ell} }{sr}\right) \right]\right)^2 \\
    & \le E\left[\frac{\sum_{k=1}^{s}\sum_{\ell=1}^{r} W^{(t)}_{k\ell} \phi(X^{(t)}_{k\ell})}{\sum_{k=1}^{s}\sum_{\ell=1}^{r} W^{(t)}_{k\ell}}\left(1 - \frac{\sum_{k=1}^{s}\sum_{\ell=1}^{r} W^{(t)}_{k\ell} }{sr}\right) \right]^2\\
    & \le M^2\e\left[\left(1 - \frac{\sum_{k=1}^{s}\sum_{\ell=1}^{r} W^{(t)}_{k\ell} }{sr}\right)^2\right]\\
    & = M^2\left(\e\left[1 - \frac{\sum_{k=1}^{s}\sum_{\ell=1}^{r} W^{(t)}_{k\ell} }{sr}\right]\right)^2+M^2\Var\left(\frac{\sum_{k=1}^{s}\sum_{\ell=1}^{r} W^{(t)}_{k\ell} }{sr}\right).
    \end{aligned}
    \label{equation:normalized-bias}
\end{equation}
The first term is $O(n^{-1-4/[d(d+4)]})$ by the same deduction as \eqref{equation:unnormalized-bias}. For the second term,
\begin{equation*}
\begin{aligned}
\Var\left(\frac{\sum_{k=1}^{s}\sum_{\ell=1}^{r} W^{(t)}_{k\ell} }{sr}\right)&=\e\left[\Var\left(\frac{\sum_{k=1}^{s}\sum_{\ell=1}^{r} W^{(t)}_{k\ell} }{sr}\mid X^{(1:t-1)},W^{(1:t-1)}\right)\right]\\
&\quad+\Var\left(\e\left[\frac{\sum_{k=1}^{s}\sum_{\ell=1}^{r} W^{(t)}_{k\ell} }{sr}\mid X^{(1:t-1)},W^{(1:t-1)}\right]\right)\\
&=O(n^{-1-\frac4{d(d+4)}})\text{ (Lemma~\ref{lemma:unnormalized-variance})}\\
&\quad+\Var\left(\dfrac{ \sum_{i=1}^n\sum_{k=1}^rW_{ik}^{(t-1)}\tilde{1}_t(X_{ik}^{(t-1)})}{\sum_{i=1}^n\sum_{k=1}^r W_{ik}^{(t-1)}}\right),
\end{aligned}
\end{equation*}
where
\begin{equation*}
\tilde1_t(x)=\int_{\mathcal X}h_t(x,u)\di u.
\end{equation*}
Since $h_t(\cdot,u)$ is bounded and uniformly Lipschitz, $\tilde1_t$ is bounded and Lipschitz. By induction hypothesis, the variance term is $O(n^{-1-4/[d(d+4)]})$. Now we have
\begin{align*}
    &\left( E\left[\frac{\sum_{k=1}^{s}\sum_{\ell=1}^{r} W^{(t)}_{k\ell} \phi(X^{(t)}_{k\ell})}{\sum_{k=1}^{s}\sum_{\ell=1}^{r} W^{(t)}_{k\ell}}- \int \pi_{t}(x^{(1:t)})\phi(x^{(t)})dx^{(1:t)}\right]\right)^2 \\
    & = \Bigg( E\left[\frac{\sum_{k=1}^{s}\sum_{\ell=1}^{r} W^{(t)}_{k\ell} \phi(X^{(t)}_{k\ell})}{\sum_{k=1}^{s}\sum_{\ell=1}^{r} W^{(t)}_{k\ell}}- \frac{\sum_{k=1}^{s}\sum_{\ell=1}^{r} W^{(t)}_{k\ell} \phi(X^{(t)}_{k\ell})}{sr} \right] \\ 
    &  \quad + \left[\frac{\sum_{k=1}^{s}\sum_{\ell=1}^{r} W^{(t)}_{k\ell} \phi(X^{(t)}_{k\ell})}{sr} - \int \pi_{t}(x^{(1:t)})\phi(x^{(t)})dx^{(1:t)}\right]\Bigg)^2 \\
    & \leq 2\left( E\left[\frac{\sum_{k=1}^{s}\sum_{\ell=1}^{r} W^{(t)}_{k\ell} \phi(X^{(t)}_{k\ell})}{\sum_{k=1}^{s}\sum_{\ell=1}^{r} W^{(t)}_{k\ell}}- \frac{\sum_{k=1}^{s}\sum_{\ell=1}^{r} W^{(t)}_{k\ell} \phi(X^{(t)}_{k\ell})}{sr} \right]\right)^2 \\ 
    &  \quad + 2\left(\e \left[\frac{\sum_{k=1}^{s}\sum_{\ell=1}^{r} W^{(t)}_{k\ell} \phi(X^{(t)}_{k\ell})}{sr} - \int \pi_{t}(x^{(1:t)})\phi(x^{(t)})dx^{(1:t)}\right]\right)^2 =O(n^{-1-\frac4{d(d+4)}}),
\end{align*}
by \eqref{equation:unnormalized-bias} and \eqref{equation:normalized-bias}. This completes the induction hypothesis for the bias at step $t$,

For the variance at step $t$,
\begin{equation*}
\begin{aligned}
\Var\left( \frac{\sum_{k=1}^{s}\sum_{\ell=1}^{r} W^{(t)}_{k\ell} \phi(X^{(t)}_{k\ell})}{\sum_{k=1}^{s}\sum_{\ell=1}^{r} W^{(t)}_{k\ell}}\right)&=\e\left[\Var\left( \frac{\sum_{k=1}^{s}\sum_{\ell=1}^{r} W^{(t)}_{k\ell} \phi(X^{(t)}_{k\ell})}{\sum_{k=1}^{s}\sum_{\ell=1}^{r} W^{(t)}_{k\ell}}\mid X^{(1:t-1)},W^{(1:t-1)}\right)\right]\\
&\quad+\Var\left(\e\left( \frac{\sum_{k=1}^{s}\sum_{\ell=1}^{r} W^{(t)}_{k\ell} \phi(X^{(t)}_{k\ell})}{\sum_{k=1}^{s}\sum_{\ell=1}^{r} W^{(t)}_{k\ell}}\mid X^{(1:t-1)},W^{(1:t-1)}\right]\right).
\end{aligned}
\end{equation*}
The first term is $O(n^{-1-\frac4{d(d+4)}})$ by Lemma~\ref{lemma:normalized-variance}. For the second term,
\begin{equation*}
\begin{aligned}
&\Var\left(\e\left( \frac{\sum_{k=1}^{s}\sum_{\ell=1}^{r} W^{(t)}_{k\ell} \phi(X^{(t)}_{k\ell})}{\sum_{k=1}^{s}\sum_{\ell=1}^{r} W^{(t)}_{k\ell}}\mid X^{(1:t-1)},W^{(1:t-1)}\right]\right)\\
&\le2\Var\left(\e\left( \frac{\sum_{k=1}^{s}\sum_{\ell=1}^{r} W^{(t)}_{k\ell} \phi(X^{(t)}_{k\ell})}{\sum_{k=1}^{s}\sum_{\ell=1}^{r} W^{(t)}_{k\ell}}-\frac{\sum_{k=1}^{s}\sum_{\ell=1}^{r} W^{(t)}_{k\ell} \phi(X^{(t)}_{k\ell})}{sr}\mid X^{(1:t-1)},W^{(1:t-1)}\right]\right)\\
&\quad+2\Var\left(\e\left(\frac{\sum_{k=1}^{s}\sum_{\ell=1}^{r} W^{(t)}_{k\ell} \phi(X^{(t)}_{k\ell})}{sr}\mid X^{(1:t-1)},W^{(1:t-1)}\right]\right)\\
&\le2\e\left[\left( \frac{\sum_{k=1}^{s}\sum_{\ell=1}^{r} W^{(t)}_{k\ell} \phi(X^{(t)}_{k\ell})}{\sum_{k=1}^{s}\sum_{\ell=1}^{r} W^{(t)}_{k\ell}}-\frac{\sum_{k=1}^{s}\sum_{\ell=1}^{r} W^{(t)}_{k\ell} \phi(X^{(t)}_{k\ell})}{sr}\right)^2\right]\\
&\quad+2\Var\left(\dfrac{ \sum_{i=1}^n\sum_{k=1}^rW_{ik}^{(t-1)}\tilde\phi_t(X_{ik}^{(t-1)})}{\sum_{i=1}^n\sum_{k=1}^r W_{ik}^{(t-1)}}\right)\\
&=O(n^{-1-\frac4{d(d+4)}})\text{ (derivation in \eqref{equation:normalized-bias})}\\
&\quad+ O(n^{-1-\frac4{d(d+4)}})\text{ (induction hypothesis)}.
\end{aligned}
\end{equation*}
This proves the induction hypothesis for the variance at step $t$. 
\end{proof}

\begin{proof}[Proof of Lemma~\ref{lemma:unnormalized-variance}]
We omit the superscript $(t-1)$ and $(1:t-1)$. We can decompose the conditional variance into two parts:
\begin{multline*}
    \Var\left[ \dfrac{1}{n}\sum_{k=1}^{s}\sum_{\ell=1}^{r} W^{(t)}_{k\ell} \phi(X^{(t)}_{k\ell})\mid X, W \right]= \underbrace{ E \left[\Var\left( \dfrac{1}{n}\sum_{k=1}^{s}\sum_{\ell=1}^{r}  W^{(t)}_{k\ell} \phi( X^{(t)}_{k\ell})\mid \tilde{X}_i, \tilde{W}_i \right)\mid   X, W\right]}_{A}\\
    +  \underbrace{\Var \left[ E\left( \dfrac{1}{n}\sum_{k=1}^{s}\sum_{\ell=1}^{r} W^{(t)}_{k\ell} \phi( X^{(t)}_{k\ell})\mid \tilde{X}_i, \tilde{W}_i \right)\mid   X, W\right]}_B.
\end{multline*}
To make the computation easy to read, we first analyze $A$ and $B$ separately.

Let $l_k(x) =  \dfrac{\pi_{t}\left((\tilde X_k,x)\right) \phi(x)}{\pi_{t-1}\left(\tilde X_k\right)g\left(x\mid \tilde{X}_k\right)}$,  which is Lipschitz by assumption. Suppose the Lipschitz constant is $L_k$, which, for example, can be $2ML$.

\begin{align*}
    &A =  E \left[\Var\left( \dfrac{1}{n}\sum_{k=1}^{s}\sum_{\ell=1}^{r}  W^{(t)}_{k\ell} \phi( X^{(t)}_{k\ell})\mid \tilde{X}_i, \tilde{W}_i \right)\mid   X, W\right] \\
    &= \dfrac{1}{n^2} \sum_{k=1}^{s}\sum_{\ell=1}^{r}  E \left[\Var\left( W^{(t)}_{k\ell} \phi( X^{(t)}_{k\ell})\mid \tilde{X}_i, \tilde{W}_i \right)\mid   X, W\right] \\
    & \leq \dfrac{1}{4n^2} \sum_{k=1}^{s}\sum_{\ell=1}^{r}  E \left[ \max_{x, y \in \Gamma_{t}(\tilde X_k,H([(\ell-1)/r,\ell/r]))} \left(l_k(x) - l_k(y)\right)^2 \mid   X, W\right]\text{ (Popoviciu's inequality on variance)} \\
    & \leq \dfrac{1}{4n^2} \sum_{k=1}^{s}\sum_{\ell=1}^{r} E \left[ \max_{x, y \in \Gamma_{t}(\tilde X_k,H([(\ell-1)/r,\ell/r]))} L_k^2||x-y||^2 \mid   X, W\right] \\
    & \leq \dfrac{1}{4n^2} \sum_{k=1}^{s}\sum_{\ell=1}^{r} E \left[ \max_{x, y \in H([(\ell-1)/r,\ell/r])} L^2L_k^2||x-y||^2 \mid   X, W\right] \\
    & \leq \dfrac{1}{4n^2} \sum_{k=1}^{s}\sum_{\ell=1}^{r}  E \left[ 4(d+3)L^2L_k^2(1/r)^{2/d} \mid   X, W\right] \text{ (H\"older continuity)}\\
    %& \leq \dfrac{1}{n^2} \sum_{k=1}^{s} (d+3)  E \left[ L_A^2\left[\sum_{k=1}^r((s_{i,k} - s_{i,k-1})^{2/d})^{d/2}\right]^{2/d} r^{1-2/d} \mid X, W\right] \text{ (Holder's inequality)}\\
    & = \dfrac{d+3}{n^2}  \sum_{k=1}^{s} L^2L_k^2 r^{1-2/d}= \dfrac{(d+3)L^2L_k^2 r^{-2/d}}{n}=O(n^{-1-\frac4{d(d+4)}}).
     %&=  \dfrac{r^2}{n^2} E \left[  \sum_{k=1}^{s} \int \dfrac{\pi_{t}\left((\tilde X_i,x)\right)^2 \phi(x)^2}{\pi_{t-1}\left(\tilde X_i\right)^2g\left(x\mid \tilde{X}_i\right)}dx \mid   X, W\right] - \dfrac{r^2}{n^2}  E \left[  \sum_{k=1}^{s} \left(\int_{I_{ik}} \pi_{t}\left(\tilde X_i,x)\right) \phi(x)dx\right)^2 \mid   X, W\right] 
\end{align*}

For part $B$, we have
\begin{align*}
    &B = \Var \left[ E\left( \dfrac{1}{n}\sum_{k=1}^{s}\sum_{\ell=1}^{r} W^{(t)}_{k\ell} \phi( X^{(t)}_{k\ell})\mid \tilde{X}_k, \tilde{W}_k \right)\mid   X, W\right] \\
    & = \Var \left[\dfrac{r}{n}\sum_{k=1}^{s}  \int_{\mathcal X} \frac{\pi_t\left(\left(\tilde X_k^{(1:t-1)},x\right)\right)}{\pi_{t-1}\left(\tilde{X}^{(1:t-1)}_{k}\right)}\phi(x) dx\mid   X, W\right]
\end{align*}
Let $$f_k(x) =  \int_{\mathcal X} \frac{\pi_t\left(\left(\tilde X_k^{(1:t-2)}, x,u\right)\right)}{\pi_{t-1}\left(\tilde{X}^{(1:t-2)}_{k}, x\right)}\phi(u) du,$$ which is Lipschitz by assumption. Suppose the Lipschitz constant is $L_B$ for all $k$, which, for example, can be $2ML$. Then by Theorem~\ref{theorem:variance_multi_dim} (Theorem~\ref{theorem:variance_multi_dim} requires the functions to be the same, but actually it can be seen that the proof still applies as long as all the functions have the same Lipschitz constant),
$$B \leq \dfrac{(d+3)L^2_B}{(n/r)^{1+2/d}} = \dfrac{(d+3)L^2_B r^{1+2/d}}{n^{1+2/d}}=O(n^{-1-\frac4{d(d+4)}}).$$
Combining $A$ and $B$ proves the claim.
\end{proof}

\begin{proof}[Proof of Lemma~\ref{lemma:normalized-variance}]
Let $e(X,W)=\e[ \dfrac{1}{n}\sum_{k=1}^{s}\sum_{\ell=1}^{r} W^{(t)}_{k\ell}\mid X,W]$.
\begin{multline*}
\Var\left[ \frac{\dfrac{1}{n}\sum_{k=1}^{s}\sum_{\ell=1}^{r} W^{(t)}_{k\ell} \phi(X^{(t)}_{k\ell})}{\dfrac{1}{n}\sum_{k=1}^{s}\sum_{\ell=1}^{r} W^{(t)}_{k\ell}}\mid X, W \right]\le2\Var\left[\frac{ \dfrac{1}{n}\sum_{k=1}^{s}\sum_{\ell=1}^{r} W^{(t)}_{k\ell} \phi(X^{(t)}_{k\ell})}{e(X,W)}\mid X, W \right]\\
+2\Var\left[ \frac{\dfrac{1}{n}\sum_{k=1}^{s}\sum_{\ell=1}^{r} W^{(t)}_{k\ell} \phi(X^{(t)}_{k\ell})}{\dfrac{1}{n}\sum_{k=1}^{s}\sum_{\ell=1}^{r} W^{(t)}_{k\ell}}\left(1-\frac{\dfrac{1}{n}\sum_{k=1}^{s}\sum_{\ell=1}^{r} W^{(t)}_{k\ell}}{e(X,W)}\right)\mid X, W \right].
\end{multline*}
On the right hand side, the first term is $O(n^{-1-4/[d(d+4)]})$ by Lemma~\ref{lemma:unnormalized-variance}; for the second term,
\begin{align*}
&\Var\left[ \frac{\dfrac{1}{n}\sum_{k=1}^{s}\sum_{\ell=1}^{r} W^{(t)}_{k\ell} \phi(X^{(t)}_{k\ell})}{\dfrac{1}{n}\sum_{k=1}^{s}\sum_{\ell=1}^{r} W^{(t)}_{k\ell}}\left(1-\frac{\dfrac{1}{n}\sum_{k=1}^{s}\sum_{\ell=1}^{r} W^{(t)}_{k\ell}}{e(X,W)}\right)\mid X, W \right]\\
&\le\e\left[\left(\frac{\dfrac{1}{n}\sum_{k=1}^{s}\sum_{\ell=1}^{r} W^{(t)}_{k\ell} \phi(X^{(t)}_{k\ell})}{\dfrac{1}{n}\sum_{k=1}^{s}\sum_{\ell=1}^{r} W^{(t)}_{k\ell}}\right)^2\left(1-\frac{\dfrac{1}{n}\sum_{k=1}^{s}\sum_{\ell=1}^{r} W^{(t)}_{k\ell}}{e(X,W)}\right)^2\mid X, W \right]\\
&\le M^2\e\left[\left(1-\frac{\dfrac{1}{n}\sum_{k=1}^{s}\sum_{\ell=1}^{r} W^{(t)}_{k\ell}}{e(X,W)}\right)^2\mid X, W \right]\\
&= \frac{C_\phi^2}{e(X,W)^2}\Var\left[\dfrac{1}{n}\sum_{k=1}^{s}\sum_{\ell=1}^{r} W^{(t)}_{k\ell}\mid X, W \right]\\
&\le \frac{C_\phi^2}{\underline{e}^2}\Var\left[\dfrac{1}{n}\sum_{k=1}^{s}\sum_{\ell=1}^{r} W^{(t)}_{k\ell}\mid X, W \right]=O(n^{-1-\frac4{d(d+4)}})
\end{align*}
by taking $\phi$ to be the constant function $1$ in Lemma~\ref{lemma:unnormalized-variance}.
\end{proof}
\section{Details of the Stochastic Volatility Model}
\label{sec:sim-details}

The multidimensional stochastic volatility model is
\begin{align*}
    X^{(t)}\mid X^{(1:t-1)}&\sim\mathcal N\left(\alpha X^{(t-1)},\Sigma\right),\\
Y^{(t)}\mid X^{(1:t)},Y^{(1:t-1)}&\sim\mathcal N\left(0,\beta^2\diag(\exp(X^{(t)}))\right),
\end{align*}
for $t = 1,2,\cdots, T$.

Here we set the dimension of this model as 2, with $\alpha = 0.7$, $\beta = 1$, $T = 10$, and $X^{(0)}\sim \mathcal{N}\left(0, \Sigma\right)$, where $\Sigma = \begin{pmatrix}
1 & 0.8 \\
0.8 & 1
\end{pmatrix}$. 

We first generate $(X^{(t)}, Y^{(t)})_{t=1}^T$ from the above model, and then run the sequential quasi-Monte Carlo with $U^{(t)}_\text{SMG}$ at each time $t$ for 100 times.
Within each run, the number of multiple descendants ranges from 2 to 10, with the total number of the particles $n = r^3$.

\end{document}